\newcommand{\kov}{$k$-\textsc{OV}}
\newcommand{\scc}{\textsc{SCC}}
\newcommand{\seth}{\textsc{SETH}}
\newcommand{\ksum}{$k$-\textsc{SUM}}
\newcommand{\kxor}{$k$-\textsc{XOR}}
\newcommand{\ppseth}{$\pw$\textsc{-SETH}}
\newcommand{\twseth}{$\tw$\textsc{-SETH}}
\newcommand{\pw}{\textrm{pw}}
\newcommand{\tw}{\textrm{tw}} 
\newcommand{\cw}{\textrm{cw}}
\newcommand{\eps}{\varepsilon}
\title{k-SUM Hardness Implies Treewidth-SETH} 
\author{Michael Lampis}{Universit\'{e} Paris-Dauphine, PSL University, CNRS UMR7243, LAMSADE, Paris, France}{michail.lampis@dauphine.fr}{https://orcid.org/0000-0002-5791-0887}{}
\authorrunning{M. Lampis} 
\keywords{SETH, Treewidth, Parameterized Complexity, Fine-grained Complexity} 
\begin{document}

\maketitle

\begin{abstract}

We show that if \ksum\ is hard, in the sense that the standard algorithm is
essentially optimal, then a variant of the \seth\ called the Primal Treewidth
\seth\ is true.  Formally: if there is an $\eps>0$ and an algorithm which
solves \textsc{SAT} in time $(2-\eps)^{\tw}|\phi|^{O(1)}$, where $\tw$ is the
width of a given tree decomposition of the primal graph of the input, then
there exists a randomized algorithm which solves \ksum\ in time
$n^{(1-\delta)\frac{k}{2}}$ for some $\delta>0$ and all sufficiently large $k$.
We also establish an analogous result for the \kxor\ problem, where integer
addition is replaced by component-wise addition modulo $2$. 

An interesting aspect of our proof is that we rely on two key ideas from
different topics. First, inspired by the classical perfect hashing scheme of
Fredman, Koml{\'{o}}s, and Szemer{\'{e}}di, we show that \ksum\ admits an
interactive proof protocol using integers of absolute value only $O(n^{k/2})$.
Second, using the intuition that SAT formulas of treewidth $\tw$ can encode the
workings of alternating Turing machines using $\tw$ bits of space, we are able
to encode this protocol into a formula of treewidth roughly $\frac{k}{2}\log n$
and obtain the main result. 

As an application of our reduction we are able to revisit tight lower bounds on
the complexity of several fundamental problems parameterized by treewidth
(\textsc{Independent Set}, \textsc{Max Cut}, $k$-\textsc{Coloring}). Our
results imply that these bounds, which were initially shown under the \seth,
also hold if one assumes the \ksum\ or \kxor\ Hypotheses, arguably increasing
our confidence in their validity.

\end{abstract}

\newpage

\tableofcontents

\newpage

\section{Introduction}\label{sec:intro}

\subsection{Executive Summary}

The main result of this paper is to establish the following implication: if
there exists a faster-than-expected algorithm solving \textsc{SAT}
parameterized by treewidth, then there exists a faster-than-expected
(randomized) algorithm for \ksum\ and \kxor. Equivalently, the \ksum\
Hypothesis implies the \twseth.

The immediate application (and initial motivation) of this work is in
parameterized complexity, where our results establish that several DP
algorithms for standard problems (\textsc{Independent Set}, \textsc{Max Cut},
$k$-\textsc{Coloring}) parameterized by treewidth are optimal \emph{under the}
\ksum\ Hypothesis; this is new evidence for these important lower bounds, which
were previously known under the \seth\ and its variants. Beyond this immediate
application, our result is, to the best of our knowledge, one of the first non-trivial
implications reducing a \textsc{Sum}-based hypothesis to a \textsc{SAT}-based
hypothesis. This is valuable in the wider line of research attempting to draw
connections among the numerous hypotheses in fine-grained complexity and offers
some new information about which hypotheses should be considered more solid.

An interesting aspect of our result is that we obtain it by combining
intuitions from complexity theory (space-bounded non-deterministic or
alternating machines) and perfect hashing schemes. Our starting point is a
natural question about \ksum: what relation should one expect to have between
the number of given integers ($n$) and their values? It is known that one can
assume without loss of generality that the maximum absolute value is at most
(roughly) $O(n^k)$ and this can be achieved via hashing. Could one go further?

We expect the answer to this question to be negative, barring some very clever
idea. However, we work around this difficulty and hash an arbitrary \ksum\
instance into one where all integers have absolute values $O(n^{k/2})$ by
\emph{cheating} in the sense that we change the rules of the question. In
particular, we consider an interactive proof setting where a verifier
challenges a prover to produce a solution in a hashed instance with small
absolute values, and then the verifier is allowed to pose a follow-up challenge
to eliminate false positives. The key idea is then that, because \textsc{SAT}
parameterized by treewidth captures the workings of non-deterministic machines
with bounded space and a small number of alternations, this protocol can be
encoded in a \textsc{SAT} formula of treewidth (roughly) $\frac{k}{2}\log n$.
Hence, our proof is an example of how intuitions from complexity theory can
help us make non-trivial connections between problems which initially look
quite different.

\subsection{Background} 

The investigation of fine-grained complexity hypotheses and the algorithmic
lower bounds they imply has been a topic of intense study in the last two
decades.  The three most prominent such hypotheses are probably the Strong
Exponential Time Hypothesis (\seth), the $3$-\textsc{Sum} Hypothesis, and the
\textsc{All-Pairs Shortest Paths} (\textsc{APSP}) Hypothesis. Numerous tight
lower bounds are known based on these hypotheses (we refer the reader to the
survey of Vassilevska Williams \cite{williams2018some}) and this is an area of
very active research.

One weakness of this line of work is that at the moment these three main
hypotheses are believed to be orthogonal, in the sense that neither one is
known to imply any of the others. Indeed, determining the relations between
these hypotheses is sometimes stated as a major open problem in the field
\cite{Williams21}, even though there is some tentative evidence that some
potential reductions may be impossible \cite{CarmosinoGIMPS16}. Compounding
this problematic situation further, not only is it hard to classify the main
hypotheses in order of relative strength, but there has also been a
proliferation of other plausible hypotheses in the literature, which often have
no clear relation to the three mentioned ones (we review some examples below).

Our goal in this paper is to attempt to ameliorate this situation in a
direction that is of special interest for a specific well-studied sub-field of
fine-grained complexity, namely the fine-grained complexity of
\emph{structurally parameterized} problems. In this area, numerous results are
known proving that dynamic programming algorithms used in conjunction with
standard parameters, such as treewidth, are optimal under the \seth.  In
particular, the pioneering work of Lokshtanov, Marx, and Saurabh
\cite{LokshtanovMS18} started a line of work which succeeded in attacking
problems for which an algorithm running in time $c^{\tw}n^{O(1)}$ was known,
for $c$ a constant, and showing that an algorithm with running time
$(c-\eps)^{\tw}n^{O(1)}$ would falsify the \seth\footnote{Here $\tw$ denotes
the width of a given tree decomposition of the input graph. Throughout the
paper we assume the reader is familiar with the basics of parameterized
complexity, as given in standard textbooks \cite{CyganFKLMPPS15}}.  Results of
this type have been obtained for virtually all the major problems that admit
single-exponential algorithms parameterized by treewidth, as well as for other
parameters, such as pathwidth, clique-width, and cutwidth (we review some
results below).

Although this line of research has been quite successful, it was somewhat
undesirable that the only evidence given for these numerous lower bounds was
the \seth, a hypothesis whose veracity is not universally accepted. This
motivated a recent work of Lampis \cite{Lampis25} where it was shown that many
(perhaps most) of these lower bounds can be obtained from a weaker, more
plausible assumption, called the Primal Pathwidth \seth\ (\ppseth). One of the
main advantages of this approach is that the \ppseth\ is more credible, as it
was shown to be implied by several assumptions other than the \seth.
Nevertheless, this line of work still made no connection between
\textsc{SAT}-based hypotheses and hypotheses from the two other main families
(\textsc{Sum}-based and \textsc{APSP}-based) and hence all evidence for the
optimality of DP algorithms parameterized by treewidth still mostly rests on
variants of the \seth.

\subsection{Our results} 

In this paper we make what is, to the best of our knowledge, one of the first such
connections between a \textsc{Sum}-based and a \textsc{SAT}-based fine-grained
hypothesis and as our main application obtain new evidence for several (known)
fine-grained lower bounds for standard problems parameterized by treewidth. In
a nutshell, our main result is that a (slightly) stronger variant of the
$3$-\textsc{Sum} Hypothesis, implies a (weaker, but still quite useful) variant
of the \seth.   

In order to be more precise, let us define exactly the relevant hypotheses,
starting with two \textsc{Sum}-related hypotheses.  The \ksum\ problem is the
following: we are given $k$ arrays (for some fixed $k$), each containing $n$
integers and the goal is to decide if it is possible to select one integer from
each array so that the sum of the selected integers is $0$. It is a well-known
fact that a meet in the middle approach can solve this problem in time
$\widetilde{O}(n^{\lceil\frac{k}{2}\rceil})$ and the research for faster
algorithms has been extremely active
\cite{BaranDP08,Chan20,ChanL15,GoldS17,GronlundP18}. The \ksum\ Hypothesis
states that the meet in the middle algorithm is almost optimal, meaning that
for all fixed $k\ge 3$ and $\eps>0$ it is impossible to solve the problem in
time $O(n^{\lceil\frac{k}{2}\rceil-\eps})$ (even for randomized algorithms).
Observe that since the \ksum\ Hypothesis is stated for all $k\ge 3$ it implies
the $3$-\textsc{Sum} Hypothesis, which is the special case for $k=3$. It is
currently unknown whether the two hypotheses are in fact equivalent. A close
variant of these problems is the \kxor\ problem, where rather than summing
integers we are summing boolean vectors (component-wise modulo $2$). The meet
in the middle algorithm is similarly conjectured to be optimal and this is
referred to as the \kxor\ Hypothesis. It is currently unknown whether the
\kxor\ Hypothesis is stronger or weaker than the \ksum\ Hypothesis
\cite{DalirrooyfardLS25}. 

Let us also recall some relevant \textsc{SAT}-related hypotheses. The
aforementioned \ppseth\ states that the ``obvious'' dynamic programming
algorithm for solving \textsc{SAT} parameterized by pathwidth is optimal. More
precisely, it states that for all $\eps>0$, it is impossible to decide if a CNF
formula $\phi$ given together with a path decomposition of its primal graph of
width $\pw$ is satisfiable in time $(2-\eps)^\pw |\phi|^{O(1)}$. Observe that
the \seth\ trivially implies the \ppseth. Replacing path by tree decompositions
in the statement of the \ppseth\ gives us the \twseth, which was first
explicitly posited by Iwata and Yoshida \cite{IwataY15}.  Since every path
decomposition is a tree decomposition, the \ppseth\ trivially implies the
\twseth.

\smallskip

Having set the groundwork, we are now ready to informally state the main result
of this paper.

\begin{theorem}[Informal]

The following statements hold:

\begin{enumerate}

\item The \ksum\ hypothesis implies the \twseth.

\item The \kxor\ hypothesis implies the \twseth.

\end{enumerate}

\noindent More strongly, we have the following: if the \twseth\ is false, then
there exists an $\eps>0$ such that for sufficiently large $k$, both \ksum\ and
\kxor\ admit randomized algorithms running in time $n^{(1-\eps)\frac{k}{2}}$.

\end{theorem}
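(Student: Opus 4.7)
\emph{Proof plan.} The plan is to prove the contrapositive: assume the $\twseth$ fails, so there exist $\eps>0$ and an algorithm solving SAT in time $(2-\eps)^{\tw}|\phi|^{O(1)}$ given a width-$\tw$ tree decomposition of the primal graph, and construct a randomized $n^{(1-\delta)k/2}$ algorithm for $\ksum$ (and, analogously, $\kxor$) for some $\delta>0$ and all sufficiently large $k$. Following the two hints in the abstract, the reduction has two stages: (i)~a constant-round Arthur--Merlin style protocol for $\ksum$ whose internal state and messages only involve numbers of absolute value $O(n^{k/2})$, i.e.\ roughly $(k/2)\log n$ bits, and (ii)~an encoding of this protocol into a CNF whose primal treewidth is $(k/2)\log n + O(\log n)$.

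\emph{Step 1 (small-value protocol).} I would first pick a random prime $p$ of size $\Theta(n^{k/2})$ and hash every input integer $a$ to $a \bmod p$, in the spirit of the Fredman--Koml\'{o}s--Szemer\'{e}di scheme mentioned in the abstract. Any true $0$-sum remains a $0$-sum modulo $p$, but a random $k$-tuple sums to $0$ modulo $p$ with probability $\approx 1/p = n^{-k/2}$, so in expectation only $O(n^{k/2})$ spurious tuples survive, a number the verifier can afford to double-check. The protocol lets the prover claim a witness and the verifier challenges with an independent second hash; $O(1)$ repetitions drive the soundness error below $1/\mathrm{poly}(n)$. For $\kxor$ the prime-based hash is replaced by projecting each boolean vector onto a random set of $(k/2)\log n$ coordinates, which similarly shrinks the universe. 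The key invariant is that, at every point of the protocol, the verifier only needs to maintain a running partial sum modulo $p$ (respectively a partial XOR) together with a handful of pointers into the arrays, totalling $(k/2)\log n + O(\log n)$ bits of state.

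\emph{Steps 2 and 3 (encoding and running time).} Next I would invoke the standard correspondence between space-bounded alternating computation and SAT parameterized by pathwidth: build a CNF $\phi$ with one time-slice per protocol round and one group of variables per slice encoding the verifier's current state. Consecutive slices share a bag containing the $(k/2)\log n + O(\log n)$ state bits and the transition clauses; the resulting primal graph admits a path (hence tree) decomposition of width $\tw = (k/2)\log n + O(\log n)$. By construction, $\phi$ is satisfiable iff the prover has a winning strategy, i.e.\ iff the original instance is a YES-instance, up to inverse-polynomial soundness error that can be absorbed by repetition. Feeding $\phi$ to the hypothesised SAT algorithm runs in time $(2-\eps)^{(k/2)\log n}\cdot n^{O(1)} = n^{(k/2)\log(2-\eps)+O(1)} = n^{(1-\delta)k/2}$ for any fixed $\delta<1-\log(2-\eps)$ and all sufficiently large $k$, which is exactly the desired bound.

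\emph{Main obstacle.} The delicate point is the treewidth accounting in Step~2: to save a factor of two in the exponent over the trivial $n^k$ algorithm, the encoding must place only one $(k/2)\log n$-bit configuration, not two consecutive ones, into each separator bag. This forces the protocol to be structured so that a single partial sum suffices to separate ``past'' from ``future'' at every moment, mirroring the meet-in-the-middle argument behind $n^{k/2}$ itself. A secondary subtlety is that the $\twseth$ concerns deterministic SAT while the protocol is randomized; this would be handled either by quantifying over the random bits inside the formula (keeping their number within the $O(\log n)$ slack so that the treewidth is unaffected) or by derandomising through a small family of hash functions applied non-uniformly.
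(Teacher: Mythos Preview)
Your high-level plan matches the paper, but the claim that the encoding yields a \emph{path} decomposition is the central gap, and it is precisely the point the paper identifies as the reason the conclusion is the \twseth\ rather than the stronger \ppseth. The protocol has the shape $\exists y\ \forall h_\ell\ \exists(\text{witness for }h_\ell)$: the prover commits to a bucket value $y=h^*(\cdot)$ of $(k/2)\log n$ bits, then must answer each of many independent secondary-hash challenges consistently with that commitment. Hardwiring the (randomly chosen) secondary hashes into the formula gives one sub-formula $\phi_\ell$ per challenge, each already of pathwidth $\approx(k/2)\log n$, all sharing the $y$-variables that sit in the \emph{middle} of each $\phi_\ell$'s path decomposition. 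Linearising these into a single path forces the $y$-variables to be carried through the second half of one $\phi_\ell$ and the first half of the next, roughly doubling the width to $k\log n$; branching at each special bag instead gives a \emph{tree} decomposition of the right width. Relatedly, $O(1)$ rounds do not suffice: soundness is argued per committed $y$, a single secondary hash is collision-free on that bucket with probability only $\ge 1/2$, so $\Omega(k\log n)$ independent secondary hashes are needed to beat $n^{-k/2}$ before union-bounding over all $n^{k/2}$ possible $y$.

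Two further points would also cause trouble. Your handling of randomness is off: quantifying over the random bits inside a SAT formula is either existential (the prover chooses them, destroying soundness) or universal (no longer SAT); the paper instead has the \emph{reduction} sample all hash functions and build a deterministic instance, yielding a randomised algorithm with one-sided error, which is exactly what the hypotheses permit. The number of random bits is polynomial, far exceeding your $O(\log n)$ slack. Finally, you omit the main technical work: for the secondary hash to have range $n^{o(k)}$ (so as not to inflate the width), the main hash must place at most $n^{o(k)}$ elements of $S_1\cup S_2$ in any bucket. For \kxor\ this relies on a nontrivial max-load bound for random linear maps; for \ksum\ the standard almost-linear integer hashes only guarantee load $N^{1/3}$, and the paper constructs a concatenation of many small independent hashes and proves a sub-polynomial load bound from scratch.
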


A summary of the relations between several hypotheses, where our result is
placed into context, is given in Figure~\ref{fig:summary}.

On a high level, the heart of our main result is a randomized reduction which
transforms a given \ksum\ or \kxor\ instance into a CNF formula of treewidth
$\frac{k}{2}\log n$. The technical part of this paper is devoted to describing
this transformation and proving that it works correctly with high probability.

It is worth noting here that the \ksum\ Hypothesis is usually stated in a quite
careful way, where the exponent is $\lceil\frac{k}{2}\rceil-\eps$, that is, it
makes a difference whether $k$ is odd or even. In contrast to this, we
establish that if the \twseth\ were false, then we would strongly falsify the
\ksum\ Hypothesis, in the sense that we would be able to improve upon the
standard algorithm not only by an additive but by a multiplicative constant
(albeit, for sufficiently large $k$). As a result, in the remainder of the
paper it will not be crucial whether $k$ is odd or even (and for practical
purposes, we will usually assume it is even).

\begin{figure}[h]

\input{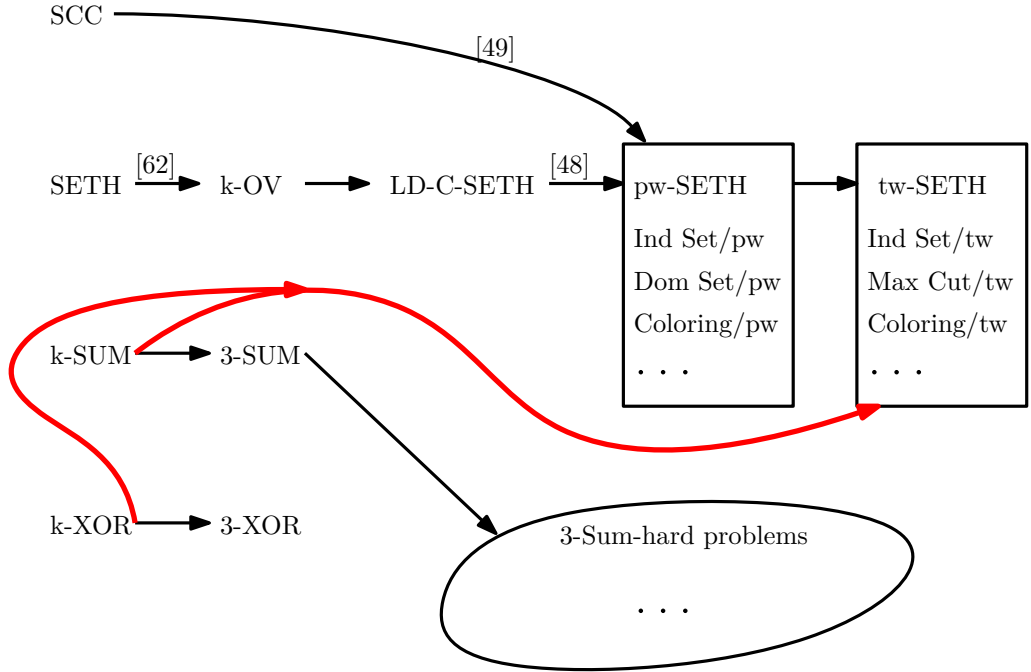} \tikzstyle{ipe stylesheet} = [
  ipe import,
  even odd rule,
  line join=round,
  line cap=butt,
  ipe pen normal/.style={line width=0.4},
  ipe pen heavier/.style={line width=0.8},
  ipe pen fat/.style={line width=1.2},
  ipe pen ultrafat/.style={line width=2},
  ipe pen normal,
  ipe mark normal/.style={ipe mark scale=3},
  ipe mark large/.style={ipe mark scale=5},
  ipe mark small/.style={ipe mark scale=2},
  ipe mark tiny/.style={ipe mark scale=1.1},
  ipe mark normal,
  /pgf/arrow keys/.cd,
  ipe arrow normal/.style={scale=7},
  ipe arrow large/.style={scale=10},
  ipe arrow small/.style={scale=5},
  ipe arrow tiny/.style={scale=3},
  ipe arrow normal,
  /tikz/.cd,
  ipe arrows, 
  <->/.tip = ipe normal,
  ipe dash normal/.style={dash pattern=},
  ipe dash dotted/.style={dash pattern=on 1bp off 3bp},
  ipe dash dashed/.style={dash pattern=on 4bp off 4bp},
  ipe dash dash dotted/.style={dash pattern=on 4bp off 2bp on 1bp off 2bp},
  ipe dash dash dot dotted/.style={dash pattern=on 4bp off 2bp on 1bp off 2bp on 1bp off 2bp},
  ipe dash normal,
  ipe node/.append style={font=\normalsize},
  ipe stretch normal/.style={ipe node stretch=1},
  ipe stretch normal,
  ipe opacity 10/.style={opacity=0.1},
  ipe opacity 30/.style={opacity=0.3},
  ipe opacity 50/.style={opacity=0.5},
  ipe opacity 75/.style={opacity=0.75},
  ipe opacity opaque/.style={opacity=1},
  ipe opacity opaque,
]
\definecolor{red}{rgb}{1,0,0}
\definecolor{blue}{rgb}{0,0,1}
\definecolor{green}{rgb}{0,1,0}
\definecolor{yellow}{rgb}{1,1,0}
\definecolor{orange}{rgb}{1,0.647,0}
\definecolor{gold}{rgb}{1,0.843,0}
\definecolor{purple}{rgb}{0.627,0.125,0.941}
\definecolor{gray}{rgb}{0.745,0.745,0.745}
\definecolor{brown}{rgb}{0.647,0.165,0.165}
\definecolor{navy}{rgb}{0,0,0.502}
\definecolor{pink}{rgb}{1,0.753,0.796}
\definecolor{seagreen}{rgb}{0.18,0.545,0.341}
\definecolor{turquoise}{rgb}{0.251,0.878,0.816}
\definecolor{violet}{rgb}{0.933,0.51,0.933}
\definecolor{darkblue}{rgb}{0,0,0.545}
\definecolor{darkcyan}{rgb}{0,0.545,0.545}
\definecolor{darkgray}{rgb}{0.663,0.663,0.663}
\definecolor{darkgreen}{rgb}{0,0.392,0}
\definecolor{darkmagenta}{rgb}{0.545,0,0.545}
\definecolor{darkorange}{rgb}{1,0.549,0}
\definecolor{darkred}{rgb}{0.545,0,0}
\definecolor{lightblue}{rgb}{0.678,0.847,0.902}
\definecolor{lightcyan}{rgb}{0.878,1,1}
\definecolor{lightgray}{rgb}{0.827,0.827,0.827}
\definecolor{lightgreen}{rgb}{0.565,0.933,0.565}
\definecolor{lightyellow}{rgb}{1,1,0.878}
\definecolor{black}{rgb}{0,0,0}
\definecolor{white}{rgb}{1,1,1}
\begin{tikzpicture}[ipe stylesheet]
  \node[ipe node]
     at (64, 704) {SETH};
  \node[ipe node]
     at (64, 640) {k-SUM};
  \node[ipe node]
     at (64, 576) {k-XOR};
  \node[ipe node]
     at (128, 704) {k-OV};
  \node[ipe node]
     at (64, 768) {SCC};
  \node[ipe node]
     at (192, 704) {LD-C-SETH};
  \node[ipe node]
     at (284, 704) {pw-SETH};
  \node[ipe node]
     at (376, 704) {tw-SETH};
  \draw[ipe pen fat, ->]
    (96, 708)
     -- (120, 708);
  \draw[ipe pen fat, ->]
    (160, 708)
     -- (184, 708);
  \draw[shift={(252, 708)}, xscale=1.75, ipe pen fat, ->]
    (0, 0)
     -- (16, 0);
  \node[ipe node]
     at (128, 640) {3-SUM};
  \node[ipe node]
     at (128, 576) {3-XOR};
  \draw[ipe pen fat, ->]
    (96, 644)
     -- (124, 644);
  \draw[ipe pen fat, ->]
    (96, 580)
     -- (124, 580);
  \draw[shift={(344, 708)}, xscale=2, ipe pen fat, ->]
    (0, 0)
     -- (12, 0);
  \draw[shift={(88, 772)}, yscale=0.8571, ipe pen fat, ->]
    (0, 0)
     .. controls (76, 0) and (176, -24) .. (200, -56);
  \draw[shift={(280, 722.901)}, xscale=1.3333, yscale=1.0302, ipe pen fat]
    (0, 0) rectangle (48, -96);
  \node[ipe node]
     at (284, 684) {Ind Set/pw};
  \node[ipe node]
     at (284, 668) {Dom Set/pw};
  \node[ipe node]
     at (284, 652) {Coloring/pw};
  \node[ipe node, font=\huge]
     at (284, 636) {\ldots};
  \node[ipe node, font=\huge]
     at (284, 546) {\ldots};
  \draw[shift={(367.999, 722.901)}, xscale=1.3333, yscale=1.0302, ipe pen fat]
    (0, 0) rectangle (48, -96);
  \node[ipe node]
     at (372, 684) {Ind Set/tw};
  \node[ipe node]
     at (372, 668) {Max Cut/tw};
  \node[ipe node]
     at (372, 652) {Coloring/tw};
  \node[ipe node, font=\huge]
     at (372, 636) {\ldots};
  \draw[ipe pen fat]
    (212, 550.6667)
     .. controls (209.3333, 538.6667) and (214.6667, 529.3333) .. (241.3333, 526)
     .. controls (268, 522.6667) and (316, 525.3333) .. (349.3333, 536.6667)
     .. controls (382.6667, 548) and (401.3333, 568) .. (380, 578.6667)
     .. controls (358.6667, 589.3333) and (297.3333, 590.6667) .. (261.3333, 584)
     .. controls (225.3333, 577.3333) and (214.6667, 562.6667) .. cycle;
  \draw[ipe pen fat, ->]
    (160, 644)
     -- (232, 576);
  \draw[red, ipe pen ultrafat, ->]
    (96, 644)
     .. controls (132, 672) and (172, 672) .. (197.3333, 660.6667)
     .. controls (222.6667, 649.3333) and (233.3333, 626.6667) .. (256.6667, 615.3333)
     .. controls (280, 604) and (316, 604) .. (376, 624);
  \node[ipe node]
     at (256, 572) {3-Sum-hard problems};
  \draw[red, ipe pen ultrafat, ->]
    (96, 580)
     .. controls (92, 604) and (74, 612) .. (62.3333, 620)
     .. controls (50.6667, 628) and (45.3333, 636) .. (52.6667, 646)
     .. controls (60, 656) and (80, 668) .. (160, 668);
  \node[ipe node]
     at (96, 712) {\cite{Williams05}};
  \node[ipe node]
     at (224, 756) {\cite{Lampis25}};
  \node[ipe node]
     at (252, 712) {\cite{abs-2407-09683}};
\end{tikzpicture}\caption{Summary of the
relations between some notable fine-grained hypotheses with the new
implications of this paper marked in red. In particular, the \ksum\ and \kxor\
Hypotheses now imply tight lower bounds for several standard problems
parameterized by treewidth. We give definitions of all hypotheses and
references to all other relations in Section~\ref{sec:prelim}, but mention
briefly that \scc\ refers to the \textsc{Set Cover Conjecture} and
\textsc{LD-C-SETH} refers to the \seth\ for $n$-input boolean circuits of size
$s$ and depth $\eps n$.}\label{fig:summary}

\end{figure}

\subparagraph*{Applications} 

In light of the known barriers to proving that the \seth\ implies the
$3$-\textsc{Sum} Hypothesis, the question of finding plausible variants of the
two conjectures such that one implies the other becomes inherently interesting.
Nevertheless, our motivation is a little more precise than just looking for a
connection. In a sense, we are searching for new evidence for a hypothesis
which is already known to imply lower bounds of major importance to
parameterized complexity.

To put it more clearly, the immediate application of our results is that we are
able to revisit several lower bounds on the complexity of standard problems
parameterized by treewidth. These lower bounds were initially known under the
\seth\ \cite{LokshtanovMS18} and were more recently shown under the \ppseth\
\cite{Lampis25}. Our main reduction allows us to also obtain them as
consequences of the \ksum\ and \kxor\ Hypotheses, hence arguably increasing our
confidence in their validity.

To obtain these results we rely in part on the work of Iwata and Yoshida
\cite{IwataY15} who showed that falsifying the \twseth\ is equivalent to
improving the state-of-the-art DP algorithm for several other problems
parameterized by treewidth. By using their reductions, and also adding some new
results building on the framework of \cite{Lampis25} we obtain the following:

\begin{theorem}\label{thm:applications} If the \ksum\ hypothesis is true, or
the \kxor\ hypothesis is true, then all the following statements are true, for
all $\eps>0$:

\begin{enumerate}

\item There is no algorithm solving \textsc{Independent Set} in time
$(2-\eps)^{\tw}n^{O(1)}$.

\item There is no algorithm solving \textsc{Independent Set} in time
$(2-\eps)^{\cw}n^{O(1)}$, where $\cw$ is the width of a given clique-width
expression.

\item There is no algorithm solving \textsc{Max Cut} in time
$(2-\eps)^{\tw}n^{O(1)}$.

\item For all $k\ge 3$, there is no algorithm solving $k$-\textsc{Coloring} in
time $(k-\eps)^{\tw}n^{O(1)}$.

\end{enumerate}

\end{theorem}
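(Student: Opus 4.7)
The plan is a two-stage argument. First, I would invoke the main technical theorem of this paper (the formal version of the informal statement above): assuming the \ksum\ or the \kxor\ Hypothesis, the \twseth\ holds, i.e., for every $\eps>0$ there is no algorithm solving CNF-SAT in time $(2-\eps)^{\tw}|\phi|^{O(1)}$ when $\tw$ is the width of a given tree decomposition of the primal graph. It then suffices, for each of the four target problems, to show that the stated lower bound would otherwise contradict the \twseth.

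For \textsc{Independent Set} parameterized by treewidth (part 1) and \textsc{Max Cut} parameterized by treewidth (part 3), I would appeal directly to the reductions of Iwata and Yoshida \cite{IwataY15}: they establish that an algorithm running in time $(2-\eps)^{\tw}n^{O(1)}$ for either problem would yield an algorithm of essentially the same form for primal-treewidth SAT, hence contradicting the \twseth. This is precisely the equivalence highlighted in the paragraph preceding the theorem statement, so these two parts should follow with essentially no new work beyond the main reduction of the paper.

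For \textsc{Independent Set} parameterized by clique-width (part 2), I would either invoke a direct clique-width analogue of the Iwata--Yoshida construction, or argue that the graphs produced by the reduction for part 1 admit a clique-width expression whose width exceeds the treewidth only by an additive constant; in either case a $(2-\eps)^{\cw}n^{O(1)}$ algorithm refutes the \twseth. For $k$-\textsc{Coloring} parameterized by treewidth (part 4), the plan is to build on the framework of Lampis \cite{Lampis25}, who revisited the Lokshtanov--Marx--Saurabh \cite{LokshtanovMS18} $(k-\eps)^{\tw}$ lower bound under the \ppseth; the task is to verify that the gadget construction there only needs a $\twseth$-refuting algorithm (rather than the stronger $\ppseth$-refuting one) to be driven, by checking that the constructed graph carries a tree (not merely path) decomposition of the required width. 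The main obstacle throughout is exactly this bookkeeping: one must ensure that each cited reduction genuinely preserves \emph{tree}-decompositions of the instances it outputs, so that the implication chains through the \twseth\ rather than only through the stronger \ppseth; this is why the Iwata--Yoshida reduction is the appropriate tool for parts 1--3 and why some careful re-examination of the framework of \cite{Lampis25} is warranted for part 4.
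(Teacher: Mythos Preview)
Your overall plan is right, and for parts 1, 2, and 4 it matches the paper's proof: parts 1 and 2 are read off directly from the Iwata--Yoshida equivalences (so for clique-width there is no need to argue about bounding $\cw$ in terms of $\tw$ for the reduction's output graphs --- \cite{IwataY15} already proves the $(2-\eps)^{\cw}$ statement is equivalent to the \twseth), and for part 4 the paper does exactly what you propose, namely revisits the \ppseth-to-$k$-\textsc{Coloring} chain from \cite{Lampis25} (via $2$-\textsc{CSP} over alphabet size $k$) and checks that both steps preserve tree decompositions rather than just path decompositions.

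There is, however, a genuine gap in your treatment of part 3. The Iwata--Yoshida theorem (as stated and used here) does \emph{not} include \textsc{Max Cut} among the problems equivalent to the \twseth; its list is \textsc{SAT}, $3$-\textsc{SAT}, \textsc{Max-2-SAT}, and \textsc{Independent Set} for treewidth and clique-width. So you cannot simply cite \cite{IwataY15} for \textsc{Max Cut}. The paper closes this gap by supplying an explicit treewidth-preserving reduction from \textsc{Max-2-SAT} to \textsc{Max Cut}: each clause $(\ell_1\lor\ell_2)$ becomes a small triangle gadget on two literal vertices and a global vertex $s_0$, with heavy edges forcing consistency between literal vertices and variable vertices; one then checks that the gadgets can be hung off the given tree decomposition of the primal graph with only an additive constant increase in width. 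This extra reduction is precisely the ``new work'' you claimed would not be needed, and without it (or something equivalent) your argument for part 3 does not go through.
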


The first two items of Theorem~\ref{thm:applications} follow immediately from
our main result and the results of \cite{IwataY15}, where it was shown that
achieving the stated running time for \textsc{Independent Set} for parameters
treewidth and clique-width is equivalent to falsifying the \twseth. The third
item also follows from our result and a reduction from the result of
\cite{IwataY15} for \textsc{Max-2-SAT}.  The last item is a reduction we give
here, which, however, relies on standard techniques and straightforwardly
generalizes the reduction from \ppseth\ to $k$-\textsc{Coloring} given in
\cite{Lampis25}.

These results should be interpreted as a \emph{small representative sample} of
what is possible. Indeed, one of the main thrusts of the work of Lampis
\cite{Lampis25} was to show how \seth-based lower bounds can, with a minimal
amount of effort, be transformed to pathwidth-preserving reductions from
\textsc{SAT}. This allows us to lift \seth-based lower bounds to \ppseth-based
ones and \cite{Lampis25} verified that this can be done for a wide variety of
problems (see \cite{HartmannM25,abs-2502-14161} for follow-up works). It is not
hard to see that in the vast majority of cases, pathwidth-preserving reductions
are also treewidth-preserving, so we expect that, as a rule, it should not be a
problem to lift lower bounds for problems parameterized by treewidth to
\twseth-based bounds, and hence obtain \ksum-hardness for many other problems.
However, to keep the presentation manageable, we only focus on the problems
mentioned above.

\subsection{Comparison with previous results} 

Let us give some context about \textsc{SAT}-based and \textsc{Sum}-based
hypotheses and their relations, explaining why we have selected to focus on
these particular hypotheses.

As mentioned, the question of whether the $3$-\textsc{Sum} Hypothesis implies
or is implied by the \seth\ is a major open problem in fine-grained complexity
theory. In one direction, it was established in \cite{CarmosinoGIMPS16} that if
one could show that the \seth\ implies the $3$-\textsc{Sum} hypothesis (or the
\textsc{APSP} hypothesis), then the \emph{non-deterministic} version of the
\seth\ would be false.  This poses a natural barrier to proving an implication
in one direction.  The possibility of a reduction in the converse direction is
less studied, probably because reducing $3$-\textsc{Sum} to \textsc{SAT} in a
way that would show that the $3$-\textsc{Sum} Hypothesis implies the \seth\
would likely require us to produce a CNF formula of bounded arity with $2\log
n$ variables (so that a purported algorithm would give a sub-quadratic
algorithm for $3$-\textsc{Sum}).  This seems, however, impossible, because then
the total size of the formula would be poly-logarithmic in $n$, meaning that we
would be efficiently compressing the given instance. It is therefore natural
that we attempt to reduce to a weaker variant of the \seth. That being said, to
the best of our knowledge, there is no known barrier against showing that the
$3$-\textsc{Sum} Hypothesis implies a hypothesis that is weaker than the \seth\
but stronger than the \twseth, such as for example, the \seth\ for circuits of
linear depth, mentioned in Figure~\ref{fig:summary}, or even the \seth\ for CNF
formulas of unbounded arity. 

The above discussion partly explains why we attempt to prove an implication
towards a weaker variant of the \seth\ and we have already explained why this
weaker variant is still useful for our application (lower bounds for
treewidth-based DP).  Let us also explain why we start from the hypothesis that
\ksum, rather than $3$-\textsc{Sum}, is hard.  For this, it is instructive to
revisit a recent result of \cite{Lampis25}, which established that the
$k$-\textsc{OV} Hypothesis implies the \ppseth.  Recall that this hypothesis
states that $k$-\textsc{Orthogonal Vectors} cannot be solved faster than
roughly $n^k$ time\footnote{See the next section for a more careful
definition.}.  The result of \cite{Lampis25} is then obtained by constructing a
formula with pathwidth $k\log n$. Then, a purported algorithm solving
\textsc{SAT} in $(2-\eps)^{\pw}$ would run in time $n^{(1-\delta)k}$ for the
instances constructed by the reduction. There is, however, one complication
with this strategy, which is that the purported \textsc{SAT} algorithm is
allowed to have an arbitrary polynomial dependence on the formula size (which
logically should be at least linear in the original input size $n$). In order
to render this extra dependence irrelevant, we then need to allow $k$ to become
sufficiently large, and this is the reason why the results of \cite{Lampis25}
only establish that the $k$-\textsc{OV} Hypothesis implies the \ppseth, rather
than that the $2$-\textsc{OV} Hypothesis does (which would have been stronger).
In a similar fashion, we are forced to rely on the \ksum\ Hypothesis. 

\subparagraph*{\ksum\ is not a brute-force problem} We have therefore explained
why it is natural to start from \ksum\ and \kxor, and why we need to reduce to
a weaker variant of the \seth. Let us now also explain how the problem we are
solving is different (and significantly harder) from previous work, and in
particular, from the proof that the $k$-\textsc{OV} Hypothesis implies the
\ppseth\ we mentioned above. As we explain in more detail in
Section~\ref{sec:kxor} the crucial difference is that whereas for the
$k$-\textsc{OV} problem the (conjectured) best algorithm is essentially a
brute-force algorithm that tries all solutions, this is \emph{not} the case for
\ksum\ or \kxor. Informally, it is easy to encode the workings of a brute-force
algorithm with a CNF formula and then show that if we could decide the
satisfiability of the formula faster than expected, we would be able to predict
the answer given by the algorithm. This becomes much more complicated when the
algorithm does something more clever than non-deterministically picking a
solution and then verifying it. 

The fact that the ``obvious'' brute force algorithm for $3$-\textsc{Sum} is not
optimal is a known obstacle to reductions from this problem. In fact, it was
already observed by P{\u{a}}tra{\c{s}}cu \cite{Patrascu10} who was attempting
to reduce $3$-\textsc{Sum} to various other problems.  P{\u{a}}tra{\c{s}}cu's
solution was to reduce $3$-\textsc{Sum} to a $3$-\textsc{Sum-Convolution}
problem, where the obvious algorithm is (hypothesized to be) optimal and indeed
this problem has proven extremely useful in reductions establishing
$3$-\textsc{Sum} hardness. Unfortunately, this solution is not suitable for us
because the key property of the \textsc{Convolution} problem is that the index
of the last integer to be selected is fully specified from the indices of the
others. For $k=3$ this makes brute-force optimal, but for larger $k$ this only
decreases the obvious search space from $n^k$ to $n^{k-1}$, whereas we need to
bring this down to $n^{k/2}$. For this reason we need to rely on completely
different ideas to formulate our reduction from \ksum\ and \kxor\ to
\textsc{SAT}.

To summarize, we have explained why we are reducing from \ksum\ (rather than
$3$-\textsc{Sum}) to \textsc{SAT}. We then explained that the main challenge in
making the reduction efficient enough is that the (conjectured) optimal
algorithm for \ksum\ is much more clever than a brute-force algorithm that
guesses a solution and then verifies it. We will therefore be forced to exploit
the full power afforded to us by an algorithm solving \textsc{SAT} given a
graph decomposition and in particular we will crucially rely on the extra power
of tree over path decompositions.

\subsection{Techniques and Proof Overview}

We will formulate a reduction from \ksum\ or \kxor\ to \textsc{SAT} instances
of treewidth essentially $\frac{k}{2}\log n$. As a result, if there is an
algorithm solving \textsc{SAT} in time $(2-\eps)^{\tw}|\phi|^{O(1)}$, we will
obtain an algorithm for \ksum\ and \kxor\ with complexity
$n^{(1-\delta)\frac{k}{2}}$. Let us describe step-by-step our approach, the
obstacles we face, and our solutions.  Along the way, we will clarify the
subtle but intriguing point of why the consequence we obtain is the \twseth,
rather than the \ppseth.

The first step is to recall that \textsc{SAT} formulas of pathwidth $\pw$ have
a tight correspondence with non-deterministic algorithms using space
$\pw+O(\log n)$, as shown by Iwata and Yoshida \cite{IwataY15}. This seems like
an encouraging start, as \ksum\ is clearly in NL. However, the obvious
non-deterministic logarithmic space algorithm for \ksum\ is to guess the $k$
indices of the solution and then verify it. This algorithm uses space $k\log
n$, therefore corresponds to a formula of pathwidth $k\log n$, rather than
$\frac{k}{2}\log n$. This is exactly the point where the obstacle we discussed
above (that the best algorithm for \ksum\ is not brute-force search) manifests
itself.

We are therefore motivated to come up with an alternative non-deterministic
algorithm for \ksum, using space only roughly $\frac{k}{2}\log n$. This space
is clearly not enough to store the indices of all selected elements. However,
an alternative approach to solve \ksum\ (and \textsc{Subset Sum})
non-deterministically is to consider elements one by one and maintain a counter
storing the sum (rather than the indices) of selected elements. If the maximum
absolute value in the input is $W$, this algorithm uses space $\log W+O(\log
n)$, so it would suffice if $W=O(n^{k/2})$.  Unfortunately, \ksum\ under this
restriction is not known to be equivalent to the general case. Instead, what is
known is that by applying appropriate hash functions we can edit an arbitrary
\ksum\ instance to ensure that $W=O(n^k)$ (see Lemma 2.2 or
\cite{DalirrooyfardLS25}).  This seems like a dead-end because it gives the
same space complexity as the trivial algorithm.  Nevertheless, we will push
this idea by attempting to further decrease $W$.

Of course, it seems (to us) hopeless  to attempt to show that \ksum\ remains
hard on instances with $W=O( n^{k/2})$.  To see why, recall that the way that
$W$ can be brought down to roughly $n^k$ is to apply to the input a hash
function that is (almost-)linear, meaning it (almost) satisfies
$h(x+y)=h(x)+h(y)$ for all inputs $x,y$. Applying this function means that all
$k$-tuples which sum to $0$ become hashed $k$-tuples that sum to $0$; while any
other $k$-tuple will (hopefully) map to $0$ with probability $O(\frac{1}{W})$.
Since there are $n^k$ possible $k$-tuples, if we set $W$ a bit larger that
$n^k$ we can claim by union bound that we have no false positives. However, if
$W$ is smaller than $n^k$ the probability of false positives rises quite
quickly.

In order to deal with this obstacle we return to the root of our problems,
which is the meet in the middle algorithm. Observe that \ksum\ and \kxor\ can
be seen as instances of \textsc{List Disjointness}: if $L$ is the set of the
(at most $n^{k/2}$) sums constructible from the first $k/2$ arrays and $R$ is
the set of sums constructible from the rest, the problem is essentially to
check if $L\cap R\neq\emptyset$. The meet in the middle algorithm relies on the
fact that \textsc{List Disjointness} for $|L|=|R|=N$ takes $\widetilde{O}(N)$
rather than $O(N^2)$ time to solve. This formulation nicely meshes with the
idea of using hash functions, because one way to test if $L\cap R\neq
\emptyset$ is to apply a hash function to the two sets and then check
disjointness in the hashed values. This seems to improve our situation because
we are mapping a collection of $|L\cup R|=O(n^{k/2})$ elements (compared to
$n^k$ $k$-tuples). However, in order to produce a collision-free hash function,
we need the range we are mapping to to be \emph{quadratic} in the number of
elements (due to the birthday paradox). So this still gives $W$ in the order of
$n^k$.

We now arrive at the key idea of our construction, which will make clear why we
need to use treewidth rather than pathwidth. We are in a situation where we
wish to map $N=n^{k/2}$ elements (the set $L\cup R$) into a range of size
roughly $N$, while avoiding collisions, so we draw some ideas from the
classical 2-level hashing scheme of Fredman, Koml{\'{o}}s, and Szemer{\'{e}}di
\cite{FredmanKS84}. The main idea of this scheme is to use an initial
\emph{main} hash function to map the $N$ elements into $N$ buckets; then, if
the hash function is good enough, even though there will be some collisions,
all buckets will have small load; we can therefore use a secondary hash
function inside each bucket to eliminate the remaining collisions. In
particular, the secondary hash function can afford to be quadratic in the size
of each bucket and we know that in this case the secondary hash function is
collision-free with probability at least $\frac{1}{2}$.

In order to illustrate how this idea will be used in our setting consider the
following \emph{Interactive Proof} protocol for \ksum\ (formulated as
\textsc{List Disjointness}). A verifier selects a random \emph{main} hash
function $h^*$ which maps the $N=O(n^{k/2})$ elements of $L\cup R$ to a range
of size $N$. The prover is then challenged to produce a value $v_1$ such that
there exist $x\in L$ and $y\in R$ which satisfy $h^*(x)=v_1=h^*(y)$. The prover
can clearly do this if a solution exists, but he will also be able to do it if
$h^*$ has some collisions and this will happen with high probability. Suppose,
however, that $h^*$ is \emph{balanced}, in the sense that with high
probability, the maximum load $M$ of any bucket is sub-polynomial in $N$
($M=N^{o(1)}$). The verifier then selects a random secondary hash function
$h_2$ with range $M^2$, so with probability at least $\frac{1}{2}$ this
function is collision-free for the bucket $v_1$. The prover is then challenged
to produce $x\in L, y\in R$ such that $(h^*(x),h_2(x)) = (v_1,v_2) =
(h^*(y),h_2(y))$, that is, the prover is challenged to produce a collision in
$h_2$ \emph{while respecting the commitment made} for the value of $h^*$. If no
true solution exists, this will be possible with probability at most
$\frac{1}{2}$, and repeating this second step many times can make this
probability very small.

Observe that the interactive protocol we sketched above has some positive and
some negative aspects. On the positive side, the \emph{memory} the verifier
needs to use is as small as desired: we need to keep track of the prover's
initial response ($\frac{k}{2}\log n$ bits), while the responses given in each
round consist of $o(k\log n)$ additional bits (because $M^2=N^{o(1)}$), which
can be discarded after the answer is verified. On the negative side, this
protocol seems to differ significantly from a non-deterministic algorithm,
because we force the prover to commit to a value $v_1=h^*(x)=h^*(y)$ before we
make additional queries. This seems to involve a quantifier alternation:
whereas one round of interaction can be encoded by a non-deterministic machine
that guesses the prover's response, for several rounds a machine would have to
guess the prover's initial response, so that \emph{for all} subsequent requests
with distinct secondary hash functions \emph{there exists} a prover response
that convinces the verifier.

We have finally arrived at the justification for the use of treewidth. As
mentioned, there is a tight connection between the \ppseth\ and
non-deterministic log-space machines, so barring an improvement to the above
protocol that would eliminate the quantifier alternation \emph{without
increasing} the verifier's memory, it seems very hard to obtain a formula of
the desired pathwidth. Nevertheless, recent works on the classes XALP and XNLP
give us the intuition that whereas pathwidth captures non-deterministic
log-space algorithms, treewidth captures \emph{alternating} log-space
algorithms\footnote{With a logarithmic number of alternations, but this is fine
for our purposes.}, informally because Introduce nodes are non-deterministic
(we have to guess the value of a new variable), but Join nodes are
co-non-deterministic (we have to ensure the current assignment can be extended
to both children).  The key intuition is then that we can encode the above
protocol in a formula of \emph{treewidth} $(1+o(1))\frac{k}{2}\log n$.  This is
achieved by constructing a separate formula of \emph{pathwidth}
$(1+o(1))\frac{k}{2}\log n$ for each round of the interaction and then adding
constraints that ensure the prover has to remain consistent with respect to the
main hash function $h^*$. These extra constraints transform the union of path
decompositions into a tree decomposition.

\subparagraph*{Putting it all together} Let us now sketch how the ideas
described above lead to the reduction. We use the same general strategy for
\kxor\ and \ksum, though for reasons we describe below the reduction is simpler
for \kxor. Initially, we pick a main hash function $h^*$ that is (almost)
linear and has range of the order $n^{k/2}$. Suppose that we know that $h^*$ is
balanced in the sense that, for all inputs, with high probability the maximum
load of any bucket is $M=n^{o(k)}$.  We pick many (say $10k\log n$) secondary
hash functions $h_1,\ldots,h_{10k\log n}$, with range of the order $M^2$. 

Now, for every pair $(h^*,h_\ell)$, for $\ell\in[10k\log n]$, we construct a
\textsc{SAT} instance $\phi_\ell$ of pathwidth $\frac{k}{2}\log n+O(\log M) =
(1+o(1))\frac{k}{2}\log n$. The formula $\phi_\ell$ encodes the execution of
the non-deterministic \textsc{Subset Sum} algorithm which uses a counter to
keep track of the current sum, applied to the input instance after we apply to
each element the pair of hash functions $(h^*,h_\ell)$. The bags of the
decomposition of $\phi_\ell$ are now supposed to contain the contents of the
memory of the algorithm after every step. We focus on the ``mid-point'' bag
that represents the sum calculated right after the first $k/2$ arrays have been
processed. For each $\ell\in[10k\log n-1]$ we add constraints ensuring the
formulas $\phi_\ell$ and $\phi_{\ell+1}$ agree on the value of $h^*$ at the
mid-point of the execution of the algorithm. Hence, we obtain a formula of
treewidth $(1+o(1))\frac{k}{2}\log n$ which simulates the interactive protocol
we discussed above. Namely, in order to satisfy the formula one would first
have to commit to a value of the main hash function at the mid-point (that is,
to a value $v_1=h^*(x)=h^*(y)$) and then produce a collision for all secondary
hash functions while respecting the commitment. This intuition drives the proof
of correctness.

Finally, let us explain how the reductions for \kxor\ and \ksum\ differ. When
dealing with \kxor\ we use linear transformations (we multiply each input
vector with a random boolean matrix). This makes it easy to concatenate two
hash functions $(h^*,h_{\ell})$. Crucially, linear hash functions in this
context have been shown to achieve maximum load comparable to a fully random
function, that is, \emph{logarithmic} in the number of elements
\cite{AlonDMPT99,JaberKZ25}. Since for our purposes any sub-polynomial bound on
the maximum load works, we can put these ingredients together and obtain a
reduction along the lines sketched above.

In contrast to this, for \ksum\ we have to deal with several additional
problems. First, concatenating hash functions means we are dealing with a
\textsc{Vector Subset Sum} problem, but this is easy to deal with. Second, the
hash functions which are available in the context of integers (we use a
function due to Dietzfelbinger \cite{Dietzfelbinger96}) are generally
almost-linear, in the sense that due to rounding errors, $h(x+y)$ is only
guaranteed to be close to $h(x)+h(y)$, rather than equal. This is a common
difficulty in this context and is also not too hard to deal with. The main
problem is that for this class of almost-linear hash functions over the
integers it is currently an open problem if we can guarantee a maximum load
better than $M=N^{1/3}$ when mapping $N$ elements into a range of size $N$
\cite{Knudsen19}.  This forces us to use a function that is made up of a
concatenation of many smaller independent hash functions. Then, using the fact
that the base function is pair-wise independent
\cite{Dietzfelbinger96,Dietzfelbinger18} we are able to show that the maximum
load can be upper-bounded sufficiently well for our purposes.

Because of these added difficulties for \ksum, we begin our presentation in
Section~\ref{sec:kxor} with the reduction for \kxor. Then, in
Section~\ref{sec:ksum} we mainly focus on overcoming the additional obstacles
for \ksum\ and assume the reader is already familiar with the general strategy
of the construction. We therefore suggest to the interested reader to begin
with Section~\ref{sec:kxor}. Applications of our main result are given in
Section~\ref{sec:applications}.

\subsection{Open problems}

In general, we believe it would be fruitful to continue investigating the
connections between different flavors of fine-grained complexity hypotheses, as
this can both increase our confidence in the validity of our lower bounds and
provide valuable insight. Beyond this general direction, let us mention some
concrete questions directly relevant to this work:

\subparagraph*{Pathwidth} The most natural problem we leave open is the
following: does the \ksum\ (or \kxor) Hypothesis imply the \ppseth? We have
already explained the reasons why our current proof needs to rely on treewidth,
rather than pathwidth. Furthermore, by the characterization of Iwata and
Yoshida \cite{IwataY15}, if it were possible to obtain a reduction from \ksum\
to \textsc{SAT} with pathwidth $\frac{k}{2}\log n$ (which seems necessary),
then we would have a non-deterministic algorithm solving \ksum\ in the same
amount of space. The ``easy'' algorithmic ideas one would try clearly fall
short of this, so it merits further investigation whether this can be done, or
whether some circumstantial evidence against such an algorithm can be found
(which would also be circumstantial evidence that the \ppseth\ and \twseth\ are
not equivalent hypotheses).

\subparagraph*{Derandomization} Another natural problem we have not addressed
is the question of whether our reduction can be made deterministic. Even though
reductions from $3$-\textsc{Sum} and \ksum\ are often randomized, it has
recently become a topic of interest to investigate whether such reductions can
be derandomized \cite{ChanH20,FischerK024}. Can these ideas be applied to our
reduction?

\subparagraph*{Clique} Moving away from \ksum, it would be very interesting to
make a connection between \textsc{SAT}-based hypothesis, where it is
conjectured that we have to try all possibilities in some sense, and other
problems where more clever algorithms exist. A very notable case is detecting
$k$-cliques in a graph, which can be done in time $n^{\frac{\omega k}{3}}$,
where $\omega$ is the matrix multiplication constant.  Our concrete question is
then the following: if the \twseth\ is false, does this imply an algorithm
which can detect if a graph has a $k$-clique in time $n^{\frac{\omega
k}{3}-\eps}$ or even $n^{\frac{2k}{3}-\eps}$? In the same way that to obtain
the results of this paper we had to encode the main idea of the meet in the
middle algorithm for \ksum, it appears likely that to obtain this implication
we would need a reduction that needs some elements of fast matrix
multiplication algorithms, or at least, efficient algorithms which can verify
the correctness of matrix multiplication.

We stress here that in the question formulated above we are discussing
$k$-clique detection, because this is a problem for which a non-trivial
algorithm is known and is conjectured to be optimal. For the (harder) problems
of detecting a $k$-clique in a hypergraph or a $k$-clique of minimum
(edge-)weight, it is known that if these problems are as hard as conjectured
($n^k$), then the $k$-\textsc{OV} Hypothesis is true \cite{AbboudBDN18},
therefore the \ppseth\ is true. Even without using the results of
\cite{AbboudBDN18} it is straightforward to show that these problems can be
encoded by a CNF formula of pathwidth $k \log n$, giving the implication to the
\ppseth. What makes the $k$-clique detection question interesting, then, is
that the conjectured best bound does not correspond to brute-force search.

\subsection{Other related work}

As mentioned, fine-grained complexity theory is a very vibrant area and
numerous complexity hypotheses can be found in the literature. In this paper we
focus on variants of the Strong Exponential Time Hypothesis (\seth),
formulated by Impagliazzo and Paturi \cite{ImpagliazzoP01}, which states that
for all $\eps$ there exists $k$ such that $k$-\textsc{SAT} on inputs with $n$
variables cannot be solved in time $(2-\eps)^n$. The $3$-\textsc{Sum}
Hypothesis goes back to \cite{GajentaanO95} and has been used as the basis for
numerous fine-grained lower bounds. It would be impractical to review here the
numerous other fine-grained hypotheses that can be found in the literature.  We
do mention, however, two other hypotheses which are close to our topic.  The
\textsc{Set Cover Conjecture} \cite{CyganDLMNOPSW16} states that \textsc{Set
Cover} on instances with $m$ sets and $n$ elements cannot be solved in time
$(2-\eps)^nm^{O(1)}$; and the \textsc{Max}-$3$-\textsc{SAT} Hypothesis, which
states that \textsc{Max}-$3$-\textsc{SAT} cannot be solved in time $(2-\eps)^n$
\cite{EsmerFMR24}. It is an open problem whether either one of these hypothesis
implies or is implied by the \seth.

Establishing connections between the different branches of hypotheses of
fine-grained complexity theory is a topic that has already attracted attention.
A previous work that is very close in spirit to what we do here is that of
Abboud, Bringmann, Dell, and Nederlof \cite{AbboudBDN18}, who showed that if it
is hard to find a minimum weight $k$-clique in time $n^{(1-\eps)k}$ or if it is
hard to detect a $k$-clique in a hypergraph in the same amount of time, then
the $k$-\textsc{OV} problem also requires time $n^{(1-\eps)k}$. Because the
min-weight $3$-clique problem is equivalent to the \textsc{APSP} Hypothesis,
the thrust of the result of \cite{AbboudBDN18} is similar to ours in the sense
that they start from a hypothesis that is stronger than \textsc{APSP}
(min-weight $k$-clique, rather than $3$-clique) and show that it implies a
hypothesis that is weaker than the \seth\ (the $k$-\textsc{OV} Hypothesis).
This can also be seen by comparing Figure~\ref{fig:summary} with Figure~1 of
\cite{AbboudBDN18}. In other words, whereas \cite{AbboudBDN18} draws a
connection between the \textsc{APSP} and \seth\ branches of fine-grained
complexity hypotheses, we draw a connection between the $3$-\textsc{Sum} and
\seth\ branches.

More broadly, motivated by the somewhat speculative nature of many assumptions,
much work has been devoted to reproving lower bounds known under some
hypothesis, while using a weaker or orthogonal hypothesis. For instance,
Abboud, Bringmann, Hermelin, and Shabtay \cite{AbboudBHS22} show that the
optimality of the textbook DP algorithm for \textsc{Subset Sum}, which was
previously known under the \scc, is also implied by the \seth. More broadly,
Abboud, Vassilevska Williams, and Yu \cite{AbboudWY18} started an effort to
define a problem whose hardness is implied by all three major hypotheses, with
the hope that this problem would serve as a more solid starting point for
reductions. In a similar vein, Abboud, Hansen, Vassilevska Williams, and
Williams \cite{AbboudHWW16} advocated for the use of the \textsc{NC-SETH} (the
version of the \seth\ for circuits of poly-logarithmic depth) as a more solid
starting point for reductions than the \seth.

We also note that the validity of hypotheses used in fine-grained complexity
should not in general be taken for granted. Notably, recent work by
Bj{\"{o}}rklund and Kaski \cite{BjorklundK24} followed-up by Pratt
\cite{Pratt24} established that the \scc\ and the Tensor Rank Conjecture (a
conjecture from the realm of matrix multiplication algorithms) cannot both be
true. Also, Williams has disproved a plausible-sounding strengthening of the
non-deterministic \seth\ \cite{Williams16}.

Several recent works have focused on proving that the \seth\ \emph{does not}
imply various lower bounds, often using the non-deterministic \seth\ as a
hypothesis, in a spirit similar to the work of Carmosino et al. we mentioned
above \cite{CarmosinoGIMPS16}. In particular, Aggarwal and Kumar
\cite{Aggarwal023} show that hardness for some cases of the \textsc{Closest
Vector Problem} cannot be based on the \seth, Trabelsi shows a similar result
for \textsc{All-Pairs Max Flow} \cite{Trabeisi25}, and Li shows a similar
result for some cases of approximating the diameter \cite{Li21a}. Furthermore,
Belova et al. \cite{BelovaKMRRS24}, building on previous work
\cite{BelovaGKMS23} show that proving \seth-based hardness for various problems
in P will be challenging. In particular, a result of \cite{BelovaKMRRS24} that
is very relevant to us is that proving a \seth-based lower bound of
$n^{1+\eps}$ for \ksum\ would require disproving a stronger version of the
\seth, which would in turn imply some circuit lower bounds. Note, however, that
these results discuss the difficulty of reducing \textsc{SAT} to \ksum, while
our main result is a reduction in the converse direction. A recent work that
does go in the same direction as we do in this paper is that of Belova,
Chukhin, Kulikov, and Mihajlin \cite{BelovaCKM24}, who reduce \textsc{Subset
Sum} to the \textsc{Orthogonal Vectors} problem. However, their motivation is
to obtain a \textsc{Subset Sum} algorithm using less space, rather than to show
hardness for \textsc{Orthogonal Vectors}.

The intuition that \textsc{SAT} parameterized by pathwidth encodes
non-deterministic logarithmic space algorithms comes from \cite{IwataY15},
while the intuition that the difference between treewidth and pathwidth in this
setting is that treewidth simulates alternations comes from works on the
classes XNLP and XALP \cite{BodlaenderGJJL22,BodlaenderGJPP22,BodlaenderGNS21}.
This intuition goes back to the work of Allender, Chen, Lou, Papakonstantinou,
and Tang \cite{AllenderCLPT14} who showed that whereas \textsc{SAT} on
instances of pathwidth $O(\log n)$ is complete for NL, \textsc{SAT} on
instances of treewidth $O(\log n)$ is complete for SAC$^1$, the class of
problems decidable by logarithmic depth boolean circuits of semi-unbounded
fan-in. This is a somewhat weaker class than AC$^1$ (circuits of unbounded
fan-in), which in turn corresponds exactly to logarithmic space alternating
non-deterministic machines which perform $O(\log n)$ alternations.

\section{Preliminaries}\label{sec:prelim}

As mentioned, we expect the reader to be familiar with the basics of
parameterized complexity, such as DP algorithms parameterized by treewidth, and
refer to standard textbooks for the relevant definitions \cite{CyganFKLMPPS15}.
For a CNF formula $\phi$ we define the primal graph of $\phi$ to be the graph
obtained if we construct a vertex for each variable of $\phi$ and make two
vertices adjacent if the corresponding variables appear in a clause together.
When we refer to the treewidth or pathwidth of a formula $\phi$ we will mean
the corresponding width of the primal graph.

Regarding the hypotheses of Figure~\ref{fig:summary} we have the following
definitions:

\begin{enumerate}

\item \seth: For all $\eps>0$ there exists $k$ such that $k$-\textsc{SAT}
cannot be solved in time $(2-\eps)^n$ on instances with $n$ variables.

\item \ppseth: For all $\eps>0$ there exists no algorithm solving \textsc{SAT}
in time $(2-\eps)^\pw|\phi|^{O(1)}$, where $\pw$ is the width of a given path
decomposition of the input formula $\phi$.

\item \twseth: The same as the previous statement, but given a tree
decomposition instead of a path decomposition.

\item \textsc{LD-C-SETH}: For all $\eps>0$, there exists no algorithm which
takes as input a bounded fan-in boolean circuit of size $s$ with $n$ inputs and
depth at most $\eps n$ and decides if the circuit is satisfiable in time
$(2-\eps)^n$.

\item $k$-\textsc{OV} Hypothesis: For all $k\ge 2, \eps>0$, there exists no
algorithm which takes as input $k$ arrays, each containing $n$ boolean vectors
of length $d$ and decides if there is a way to select one vector from each
array so that the selected vectors have product $0$ in time
$n^{k-\eps}d^{O(1)}$.

\item \scc: For all $\eps>0$ there exists $k$ such that no algorithm can solve
\textsc{Set Cover} on inputs with $n$ elements and $m$ sets of size at most $k$
in time $(2-\eps)^nm^{O(1)}$.

\end{enumerate}

We have already defined \ksum\ and \kxor\ in the introduction. Regarding the
relations between hypotheses, it was shown by Williams that the \seth\ implies
the $k$-\textsc{OV} Hypothesis \cite{Williams05}, while it is easy to reduce
the $k$-\textsc{OV} problem to the satisfiability of a boolean circuit with
$k\log n$ inputs and small depth. The \textsc{LD-C-SETH} is a hypothesis
posited in \cite{abs-2407-09683}, where it was shown that it implies (a
stronger form of) the \ppseth. Notice that the \textsc{NC-SETH}, mentioned in
\cite{AbboudHWW16}, trivially implies the \textsc{LD-C-SETH}. The fact that the
\scc\ implies the \ppseth\ was shown in \cite{Lampis25}.

In general we try to use standard graph-theoretic notation. When we write
$[i,j]$ for $i,j$ positive integers, we mean the set of all integers between
$i$ and $j$ inclusive, that is, $[i,j]=\{i,i+1,\ldots,j\}$. When we write $[i]$
we mean $[1,i]$. A function $f(x)$ is affine if $f(x)=ax+b$, for $a,b$
constants, where we call $b$ the offset. A function is linear if it is affine
and has offset $0$.

We do not pay much attention to the specific machine model, since we ignore
poly-logarithmic factors, but the reader may assume we are using the RAM model.
We also in general allow randomized algorithms and our main result will be a
reduction that succeeds with high probability and one-sided error. For \ksum\
and \kxor\ one can assume that the given integers/vectors have length bounded
by $O(\log n)$ \cite{DalirrooyfardLS25}, but one point of focus will be to
determine the constant hidden in the Oh-notation and how it depends on $k$.

\section{\kxor\ implies \twseth}\label{sec:kxor}

Our goal in this section is to show the following theorem:

\begin{theorem}\label{thm:kxor} (\kxor\ $\Rightarrow$ \twseth) Suppose that
there exists an $\eps>0$ and an algorithm which, given a CNF formula $\phi$ and
a tree decomposition of its primal graph of width $\tw$ decides if $\phi$ is
satisfiable in time $(2-\eps)^\tw|\phi|^{O(1)}$. Then, there exist $\delta>0,
k_0>0$ and a randomized algorithm which for all $k>k_0$ solves \kxor\ on
instances with $k$ arrays each containing $n$ vectors in time
$n^{(1-\delta)\frac{k}{2}}$ with one-sided error and success probability
$1-o(1)$.  In other words, if the \twseth\ is false, then the randomized \kxor\
hypothesis is false.\end{theorem}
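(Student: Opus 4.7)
\medskip

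The plan is to construct, from a \kxor\ instance $I$ with $k$ arrays of $n$ boolean vectors each, a CNF formula $\phi_I$ whose primal graph admits a tree decomposition of width $w=(1+o(1))\frac{k}{2}\log n$. Feeding $\phi_I$ to the hypothetical $(2-\eps)^{\tw}|\phi|^{O(1)}$ algorithm then yields a \kxor\ solver of time $(2-\eps)^w|\phi_I|^{O(1)}$. Since $|\phi_I|$ will be $\mathrm{poly}(n,k)$, provided we take $k\ge k_0(\eps)$ the polynomial factor is absorbed into a mild loss in the exponent and the total running time drops to $n^{(1-\delta)\frac{k}{2}}$ for some $\delta=\delta(\eps)>0$, as required.

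Following the outline of the introduction, I view $I$ via meet-in-the-middle as a \textsc{List Disjointness} instance on the implicit lists $L$ (XORs of one vector per array from the first $k/2$ arrays) and $R$ (analogous for the last $k/2$), each of size $N:=n^{k/2}$, and I implement the FKS-style two-level hashing protocol inside SAT. Sample a random linear map $h^*$ (a random boolean matrix) of output dimension $\lceil\frac{k}{2}\log n\rceil+O(1)$, and $t=\Theta(k\log n)$ independent linear secondary maps $h_1,\ldots,h_t$ of output dimension $2\log M+O(1)$, where $M=N^{o(1)}$ is the high-probability maximum-load bound for random linear hashing on the universe $L\cup R$ afforded by \cite{AlonDMPT99,JaberKZ25}. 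Linearity of each $h_j$ (including $h^*=h_0$) is crucial: since $h_j(x_1\oplus\cdots\oplus x_{k/2})=h_j(x_1)\oplus\cdots\oplus h_j(x_{k/2})$, the hashes can be accumulated by a streaming algorithm that touches one array at a time.

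For each $\ell\in[t]$ I construct a pathwidth-$w$ formula $\phi_\ell$ encoding the non-deterministic algorithm that reads the $k$ arrays in order, guesses a vector from each, and maintains as its entire memory the running XOR under the concatenated hash $(h^*,h_\ell)$. At the end the accumulated value must be $0$, certifying a collision between some $x\in L$ and $y\in R$ under $(h^*,h_\ell)$. The Iwata--Yoshida correspondence between bounded-pathwidth SAT and non-deterministic log-space machines produces a path decomposition of $\phi_\ell$ of width $\log|h^*|+\log|h_\ell|+O(\log n)=w$, with a distinguished \emph{midpoint bag} containing the value of $(h^*,h_\ell)$ reached after the first $k/2$ arrays. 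I then glue the $\phi_\ell$'s by identifying, across all $\ell$, the variables that encode the $h^*$-coordinate of the midpoint. This creates a single shared ``spine'' of $\frac{k}{2}\log n$ variables off of which the $t$ pathwidth-$w$ branches hang; the result is a \emph{tree} decomposition of width $w$ (the shift from pathwidth to treewidth is exactly what lets the prover commit once to $v_1=h^*(x)=h^*(y)$ and then, in parallel, answer all $t$ secondary challenges).

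The main obstacle will be the soundness analysis. Completeness is free: any true solution commits to its real midpoint $h^*$-value and satisfies every $\phi_\ell$ simultaneously. For soundness, I must show that on a no-instance, with probability $1-o(1)$ over $h^*,h_1,\ldots,h_t$, no choice of committed $v_1$ makes all $t$ rounds satisfiable. Conditioning on the load bound $M=N^{o(1)}$ for $h^*$ (a high-probability event), for each fixed $v_1$ the sets $L_{v_1}:=\{x\in L:h^*(x)=v_1\}$ and $R_{v_1}$ have size at most $M$; pairwise independence of each $h_\ell$ together with its range $M^2$ bounds the probability that some spurious $(x,y)\in L_{v_1}\times R_{v_1}$ collides under $h_\ell$ by a constant strictly less than $1$, and independence across $\ell$ drives this to $n^{-\Omega(k)}$, which union-bounds over the $N=n^{k/2}$ choices of $v_1$. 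The delicate calibration is ensuring that the ranges of $h^*$ and the $h_\ell$'s, the load bound $M$, and the number of rounds $t$ can be chosen simultaneously so that (i) the per-round soundness error times $N$ is $o(1)$, (ii) the pathwidth of each $\phi_\ell$ is still $(1+o(1))\frac{k}{2}\log n$, and (iii) the polynomial factor $|\phi_I|^{O(1)}$ can be absorbed into the final $(1-\delta)\frac{k}{2}$ exponent; all three are reconciled by letting $k\to\infty$, which is why the statement quantifies only over sufficiently large $k$.
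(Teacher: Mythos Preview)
Your proposal is correct and follows essentially the same approach as the paper: the same FKS-inspired two-level linear hashing (main hash of range $\sim n^{k/2}$ with the Alon et al.\ max-load bound, secondary hashes of range quadratic in the load), the same streaming Subset-XOR encoding to get pathwidth-$(1+o(1))\frac{k}{2}\log n$ formulas with a distinguished midpoint bag, the same gluing on the $h^*$-midpoint to force commitment (yielding a tree rather than path decomposition), and the same soundness calculation via per-bucket collision probability and a union bound over $n^{k/2}$ buckets. The only cosmetic differences are that the paper glues consecutive $\phi_\ell$'s by equality clauses into a caterpillar rather than your star-with-spine identification, and builds $\phi_\ell$ from two half-formulas joined at $B^*$ rather than one sweep checking the final XOR is $0$; neither affects width or correctness.
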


On a high level, we will establish Theorem~\ref{thm:kxor} by giving a (randomized)
reduction which transforms a \kxor\ instance into a CNF formula of treewidth
roughly $\frac{k}{2}\log n$. There are, however, several technical obstacles
that we need to overcome to achieve this.

To understand the main challenge, it is instructive to compare Theorem~\ref{thm:kxor}
with the results of \cite{Lampis25}, where it was shown that the hypothesis
that the $k$-\textsc{Orthogonal Vectors} (\kov) problem is hard implies the
\ppseth. In both \kov\ and \kxor\ the solution can be described as a string of
$k\log n$ bits, giving the indices of the vectors we are looking for. However,
a crucial difference is that, whereas for \kov\ the (conjectured) best
algorithm needs to try out all $n^k$ possibilities, for \kxor\ there is an
algorithm running in time (essentially) $n^{k/2}$. As a result, while the
reduction from \kov\ to \textsc{SAT} is straightforward (we construct a
variable for each bit of the certificate and just need to check that the
certificate is valid), the reduction from \kxor\ to \textsc{SAT} needs to be
much more efficient, because encoding the indices of the selected vectors is
likely to produce a formula with treewidth $k\log n$, rather than
$\frac{k}{2}\log n$.  Such a formula would not help us solve \kxor, even if the
\twseth\ were false, as this would only guarantee a running time of
$n^{(1-\eps)k}$, which is much slower than the best algorithm.

As discussed, the fact that brute force is not optimal for \ksum\ and this
hinders reductions from this problem was already observed by
P{\u{a}}tra{\c{s}}cu \cite{Patrascu10}, who worked around this obstacle by
reducing $3$-\textsc{Sum} to $3$-\textsc{Sum-Convolution}, a problem for which
brute force is optimal.  However, the natural generalization of this approach
to larger values of $k$ leads to a problem for which the brute force algorithm
has complexity $n^{k-1}$ and is sub-optimal (see e.g.  \cite{AbboudL13}). We
therefore need a different approach.

Our solution will rely on three main ingredients. The first is a version of a
reduction given in \cite{Lampis25} which established that if the \ppseth\ is
false, then there is an algorithm for \textsc{Subset Sum} on $n$ integers with
target value $t$ running in time $t^{1-\eps}n^{O(1)}$. This reduction can
easily be adapted (indeed simplified) to handle sums modulo $2$, so we can
apply it to \kxor. It is also known that, given an arbitrary instance of \kxor,
it is possible to ensure that all vectors have $k\log n+O(1)$ bits at most,
without loss of generality (see Lemma 2.2 of \cite{DalirrooyfardLS25}). We can
therefore view the input instance as a \textsc{Subset Sum} instance, with the
target $t$ being at most $O(n^k)$.  Unfortunately, this is, once again, not
efficient enough, even though the algorithm we are encoding is completely
different from the brute force algorithm. We would need to arrive at a
\textsc{Subset Sum} instance where $t=O(n^{k/2})$ to use the reduction of
\cite{Lampis25}.

The second ingredient we therefore need is some tool that will allow us to
preserve the anwer to the original \kxor\ instance, while compressing the
vectors from $k\log n$ to (roughly) $\frac{k}{2}\log n$ bits. A natural
approach to follow here is to use a linear hash function\footnote{Indeed, this
is how Lemma 2.2 of \cite{DalirrooyfardLS25} reduces \kxor\ to the case where
vectors have length $k\log n+O(1)$.}, that is, produce a random boolean array
$R$ of dimensions $\frac{k}{2}\log n \times k\log n$ and replace each vector
$v$ of the input with $Rv$.  Unfortunately, this simple approach will not work,
as it will likely produce a large number of false positives (and indeed, if it
did work, this would imply that we can simplify any \kxor\ instance to ensure
that vectors have $\frac{k}{2}\log n$ bits, rather than $k\log n$, which would
be surprising).

The third (and final) main ingredient we need is therefore some idea that will
help us root out false positive solutions. Here we use an idea inspired from
the classical work of Fredman, Koml{\'{o}}s, and Szemer{\'{e}}di
\cite{FredmanKS84} on 2-level hashing. Recall that in this scheme we have $N$
elements which we want to map (hash) into a range of size $M$. For our
application (\kxor) we can think of $N=n^{k/2}$, because an equivalent
formulation of the problem is the following: is there a vector which is
simultaneously among the (at most $n^{k/2}$) vectors constructible as sums
using the first $k/2$ arrays, and the (at most $n^{k/2}$) vectors constructible
as sums using the last $k/2$ arrays. If we had a collision-free hash function,
this \textsc{List Disjointness} question could be solved by examining hashed
values. Unfortunately, in order to hope that a random hash function will be
collision-free (with reasonable probability), we need $M$ to be at least $N^2$,
which in our case would again force us to use vectors of $k\log n$ bits. For
our general scheme to work, we need to avoid collisions while still keeping
$M\approx N$.

The key idea of \cite{FredmanKS84} is then to observe that if we have a
(sufficiently good) hash function mapping $N$ elements into a range of size not
much larger than $N$, even though we cannot fully avoid collisions, we can
perhaps expect to have few collisions in each target value (bucket). The idea
of \cite{FredmanKS84} is then to run a second hash function, whose range is
quadratic in the size of each bucket. Even though this second step is less
efficient, this inefficiency is acceptable because the buckets have size
significantly smaller than $N$, and the combination of the two hash functions
eliminates all collisions.

\subparagraph*{High-level technique summary} Putting it all together we perform
a reduction based on the following steps:

\begin{enumerate}

\item We select a random linear function the cuts down the length of all
vectors of the initial instance to $\frac{k}{2}\log n$. Call this the main hash
function $h^*$.

\item Using the aforementioned reduction from \textsc{Subset Sum} to
\textsc{SAT} from \cite{Lampis25} we can reduce the hashed instance to a
\textsc{SAT} instance with pathwidth $\frac{k}{2}\log n$. From the satisfying
assignments to this instance we can extract a value $y$ such that
$y=h^*(v_1)=h^*(v_2)$, where $v_1,v_2$ are vectors constructible from the first
and last $k/2$ arrays respectively. Unfortunately, $h^*$ could have many
collisions, so this step could produce many false positives (that is, pairs
$v_1,v_2$ with $v_1\neq v_2$ but $h^*(v_1)=h^*(v_2)$).

\item We observe that the main hash function $h^*$ we used has a good balancing
property, as shown by the work of Alon et al.~on random linear functions
\cite{AlonDMPT99}, namely, the maximum number of collisions in each target
value is not far from what we would expect from a completely random function.

\item As a result, we can apply a secondary hash function, whose range is
quadratic in the expected bucket size (which is poly-logarithmic, rather than
polynomial, in $n$) making the inefficiency of this second function
insignificant.  The secondary hash function is collision-free with constant
probability.

\end{enumerate}

Finally, there is one last step missing to complete this construction. Observe
that, for any \emph{fixed} value $y=h^*(v_1)=h^*(v_2)$, as described in step 2
above, extending $h^*$ with a secondary hash function (of much smaller range)
will eliminate collisions with constant probability and hence make the
reduction correct. The problem is that to get the reduction to work as
described above we would need to essentially exchange the order of quantifiers
in the above statement\footnote{To see what we mean, read the statement of the
previous paragraph as ``for all buckets $y$, most secondary hash functions make
$y$ collision-free'' and compare with the (stronger) statement ``most secondary
hash functions make all buckets $y$ collision-free''.}, which does not seem
possible. Indeed, this is the main obstacle to proving that the \kxor\
hypothesis implies the \ppseth\ (rather than the \twseth).

To deal with this problem we use the implicit quantifier alternation that
becomes available to us when we parameterize \textsc{SAT} by treewidth rather
than pathwidth. We produce many (say $\Omega(k\log n)$) copies of the
construction sketched above, all using the same main hash function, but each
using a different secondary hash function. We then connect these copies in a
way that forces a satisfying assignment to commit to a value
$y=h^*(v_1)=h^*(v_2)$ throughout the instance. For any fixed $y$, the instance
is then falsely satisfiable if bucket $y$ has a collision for all $\Omega(k\log
n)$ secondary hash functions, which happens with probability $n^{-\Omega(k)}$;
hence by union bound the probability that there exists a value $y$ which
produces a false positive is very low.  The key fact here is that connecting
the instances in this way may increase the pathwidth, but we can still bound
the treewidth of the new instance by the desired bound.

In the rest of this section we first review the basic tools we will need in
Section~\ref{sec:xortools}, namely a version of the aforementioned \textsc{Subset Sum}
reduction and known results on linear hash functions; we then describe our
reduction from \kxor\ to \textsc{SAT} in Section~\ref{sec:kxorreduction}; and finally
we prove the correctness of our reduction (and hence Theorem~\ref{thm:kxor}) in
Section~\ref{sec:kxorproof}.

\subsection{Tools}\label{sec:xortools}

As mentioned we will make use of and adapt two main ingredients from previous
work. The first is a reduction from \textsc{Subset Sum} to \textsc{SAT} which
produces an instance whose pathwidth is $\log t$, where $t$ is the target
value. We describe a version of such a reduction, adapted from \cite{Lampis25},
below.  Our second ingredient is the fact that random linear hash functions
behave ``well'', in the sense that the maximum load is comparable to that of a
truly random hash function, as proved in \cite{AlonDMPT99}. We review these
facts in Section~\ref{sec:hashalon}

\subsubsection{Subset (XOR)-Sum to Pathwidth SAT}\label{sec:xorsat}

We start with a lemma informally stating the following: suppose we are given a
\kxor\ instance, where each vector has $u$ bits. It is possible to encode this
instance using a CNF formula of pathwidth (almost) $u$ and produce a path
decomposition of the resulting formula that has the following special property:
the last bag of the decomposition encodes exactly the sums which are
constructible from the vectors of the instance. More precisely, the last bag of
the decomposition has $u$ (ordered) variables and an assignment to these
variables can be extended to a satisfying assignment of the whole formula if
and only if the vector corresponding to this assignment is constructible in the
\kxor\ instance by selecting one vector from each array and computing their
sum.

\begin{lemma}\label{lem:subsetsum} Suppose we are given $\ell$ arrays
$A_1,\ldots, A_\ell$, each containing a collection of boolean vectors of $u$
bits. There exists a polynomial-time algorithm which produces a formula $\psi$
of pathwidth $u+O(1)$ and a path decomposition of $\psi$ such that the last bag
of the decomposition $B_0$ contains $u$ variables $x_1,\ldots,x_u$ and such
that we have the following: for each truth assignment $\sigma$ to the variables
of $B_0$, $\sigma$ can be extended to a satisfying assignment of $\psi$ if and
only if there exist $i_1,i_2,\ldots,i_\ell$ such that $\sum_{j\in[\ell]}
A_j[i_j] = (\sigma(x_1),\sigma(x_2),\ldots,\sigma(x_u))$. \end{lemma}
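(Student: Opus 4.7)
The plan is to encode a non-deterministic dynamic program that scans the arrays in order, maintaining the running XOR-sum in $u$ boolean variables and, for the current array, a single monotone flag tracking whether an element has already been chosen. Concretely, let $N = \sum_{j\in[\ell]} |A_j|$ and linearly order all pairs $(j,i)$ with $j\in[\ell]$ and $i\in[|A_j|]$ to obtain times $t\in[N]$. I would introduce variables $x_b^{(t)}$ for $t\in\{0,\ldots,N\}$ and $b\in[u]$ (the running XOR-sum after $t$ elements); for each $j$ a flag $f^{(j,i)}$ for $i\in\{0,\ldots,|A_j|\}$; and a selector $s_t$ for each $t\in[N]$. The clauses are: (i) $\neg x_b^{(0)}$ for every $b$; (ii) $\neg f^{(j,0)}$ and $f^{(j,|A_j|)}$ for every $j$; (iii) the monotone-once update $f^{(j,i)}\leftrightarrow f^{(j,i-1)}\vee s_t$ together with $s_t \to \neg f^{(j,i-1)}$; and (iv) the XOR update $x_b^{(t)} \leftrightarrow x_b^{(t-1)} \oplus (s_t \wedge A_j[i][b])$ for every bit $b$, where $A_j[i][b]$ is a constant. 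Each clause group is an $O(1)$-size CNF gadget on $O(1)$ variables.

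For the path decomposition I would place a ``snapshot'' bag $B^{(t)} = \{x_1^{(t)},\ldots,x_u^{(t)},f^{(j_t,i_t)}\}$ after each time $t$. Between two consecutive snapshots I first introduce $s_t$; then for each bit $b$ in turn I introduce $x_b^{(t)}$, post the clauses of (iv) on $\{x_b^{(t-1)}, x_b^{(t)}, s_t\}$, and forget $x_b^{(t-1)}$; finally I introduce $f^{(j,i)}$, post the clauses of (iii), and forget $f^{(j,i-1)}$ and $s_t$. At each array boundary I forget $f^{(j,|A_j|)}$ and introduce $f^{(j+1,0)}$ in separate steps. The maximum bag size is attained when one new XOR-bit, one selector, and both old and new flags coexist with the remaining untouched bits, giving width $u + O(1)$. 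After the final time $N$ I would forget the last flag, leaving the last bag equal to $\{x_1^{(N)},\ldots,x_u^{(N)}\}$, which I rename to $\{x_1,\ldots,x_u\}$ as required by the statement.

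Correctness will follow from a direct induction on $t$: clauses (ii) and (iii) together force exactly one $s_t$ to be true within each array $A_j$ (the monotone chain ensures at least one selection occurs, and $s_t \to \neg f^{(j,i-1)}$ rules out two), while (i) and (iv) guarantee that $x^{(t)}$ always equals the XOR of the selected vectors among the first $t$ elements. Hence the assignments to $\{x_1,\ldots,x_u\}$ that extend to satisfying assignments of $\psi$ are precisely the XOR-sums $\sum_{j\in[\ell]} A_j[i_j]$ of valid selections. The main technical obstacle is the exactly-one-per-array constraint: written naively as a single clause on all selectors of $A_j$, it would force all $|A_j|$ selectors into the same bag and destroy the width bound. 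Introducing the monotone flag chain, constrained locally to switch from false to true at most once and forced to have done so by the end-of-array boundary condition, turns this global constraint into a sequence of $O(1)$-sized local clauses; this is the key ingredient that keeps the pathwidth at $u+O(1)$ rather than growing with $\max_j |A_j|$.
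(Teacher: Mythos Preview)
Your proposal is correct and takes essentially the same approach as the paper: both encode a non-deterministic scan maintaining a running XOR sum in $u$ variables, use a monotone ``already selected'' flag (the paper's $y_i$ is your $f^{(j,i)}$) together with the clause $s_t\to\neg f^{(j,i-1)}$ to enforce exactly one selection per array, and build the path decomposition by swapping the $u$ sum bits one at a time between consecutive snapshots. The only cosmetic difference is that the paper handles a single array first and then glues the $\ell$ constructions inductively, whereas you process all arrays in one linear pass with flag resets at the boundaries; this changes nothing of substance.
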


\begin{proof}

We first describe the construction for $\ell=1$ and then reuse this inductively
to arrive at the general case.

Suppose $\ell=1$, so we are given a single array $A_1$ containing $n$ vectors
of $u$ bits. We construct a CNF formula containing the following variables:

\begin{enumerate}

\item The variables $x_{i,j}$, for $i\in[0,n], j\in[u]$. The intuitive meaning
of these is that, for a fixed $i$, the $u$ variables $x_{i,j}$ encode the sum
we have constructed after having considered the first $i$ elements of $A_1$.

\item The variables $s_i$, for $i\in[n]$. The intuitive meaning is that $s_i$
is set to $1$ if the $i$-th element of $A_1$ is selected.

\item The variables $y_i$, for $i\in[n]$. The intuitive meaning is that $y_i$
is set to $1$ if at least one of $s_1,s_2,\ldots,s_i$ is set to $1$.

\end{enumerate}

The idea is that the formula we are constructing will encode the workings of a
non-deterministic algorithm, which considers each element of $A_1$ in turn,
decides whether to select it (this is encoded by the $s_i$ variables) and keeps
track of the sum of the currently selected elements. The $y_i$ variables will
allow us to check that exactly one element per array is selected.

We add the following clauses:

\begin{enumerate}

\item For $i=1$, add clauses ensuring that $y_1=s_1$. For $i\in[2,n]$, add
clauses ensuring that $y_i=(y_{i-1}\lor s_i)$.

\item For $i\in[2,n]$ add the clause $(\neg s_i\lor \neg y_{i-1})$, ensuring
that we cannot select the $i$-th vector if we have already selected one among
the first $i-1$ vectors. Also, add the unit clause $y_n$, ensuring that at
least one vector is selected in total.

\item For $i\in[n]$, for $j\in[u]$, add clauses ensuring that if $s_i$ is set
to $0$, then $x_{i-1,j}=x_{i,j}$. That is, if the $i$-th vector is not
selected, then the value encoded by $x_{i,j}$ is the same as the value encoded
by $x_{i-1,j}$.

\item For $i\in[n]$, for $j\in[u]$ such that the bit in position $j$ of
$A_1[i]$ is set to $1$, add clauses ensuring that if $s_i$ is set to $1$ then
$x_{i,j}\neq x_{i-1,j}$.

\item For $i\in[n]$, for $j\in[u]$ such that the bit in position $j$ of
$A_1[i]$ is set to $0$, add clauses ensuring that if $s_i$ is set to $1$ then
$x_{i,j} = x_{i-1,j}$. The clauses of this and the previous step ensure that if
the $i$-th vector is taken, $x_{i,j}$ is calculated by adding $A_1[i]$ to
$x_{i-1,j}$.

\end{enumerate}

\subparagraph*{Correctness:} The formula we have constructed is satisfiable.
Indeed, it is not hard to see that all satisfying assignments set exactly one
$s_i$ variable to $1$ and once we have fixed which $s_i$ is set to $1$ there is
a unique way to extend this to a satisfying assignment of the whole formula
(modulo the assignment to $x_{0,j}$ for $j\in[u]$): set all other $s_{i'}$, for
$i'\neq i$ to $0$; set $y_{i'}$ to $1$ if and only if $i'\ge i$; set
$x_{i',j}=x_{0,j}$ for $i'<i$; set $x_{i',j}=x_{0,j}+A_1[i]_j$ for $i'\ge i$,
where $A_1[i]_j$ is the $j$-th bit of $A[i]$. In other words, we can satisfy
the formula if and only if we select a single $i\in [n]$ and then ensure that
the vectors $x_{0,j}$ and $x_{n,j}$ differ exactly by $A_1[i]$. For the base
case it now suffices to add constraints to the formula so that $x_{0,j}=0$ for
all $j$.

\subparagraph*{Pathwidth:} We can construct a path decomposition by starting
with $n+1$ bags $B_i, i\in[0,n]$, with each bag $B_i$ containing the $u$
variables $x_{i,j}$. For each $i\in[n]$ we add to both $B_i$ and $B_{i-1}$ the
variables $s_i,y_i$. This has covered the clauses of the first two steps. To
cover the remaining clauses, for each $i\in[n]$ we insert a sequence of $2u$
new bags between $B_{i-1}$ and $B_i$. Start with $B_{i-1}$ and for each
$j\in[u]$ insert after the current bag a copy of the same bag with $x_{i,j}$
added, and then a copy of this second bag with $x_{i-1,j}$ removed, and make
this new bag the current bag. Repeat in this way, at each step exchanging
$x_{i-1,j}$ with $x_{i,j}$, until we arrive at $B_i$. This process covers all
remaining clauses. In the end, after the last bag we add a bag containing
$x_{n,j}$ for $j\in[u]$.

\subparagraph*{General construction:} Given the construction for $\ell=1$, we
can handle the general case of larger values of $\ell$ as follows: first,
construct a formula for the first $A_{\ell-1}$ arrays and a path decomposition
where the last bag contains $u$ variables which encode all constructible
vectors. Use the base case construction to encode the array $A_\ell$ with a CNF
formula, without adding the clauses that force $x_{0,j}$ to $0$; instead,
identify the $x_{0,j}$ variables with the variables of the last bag of the
decomposition of the formula representing $A_{\ell-1}$. This completes the
construction, preserves pathwidth, and correctness can easily be established by
induction.  \end{proof}

Because in the previous lemma the ``special'' bag whose variables encode the
constructible sums is the last bag of the decomposition, we can glue together
two copies of this construction to encode a \kxor\ instance.

\begin{corollary}\label{cor:subsetsum} Suppose we are given $k$ arrays (for $k$
even) $A_1,\ldots, A_k$, each containing a collection of boolean vectors of $u$
bits.  There exists a polynomial-time algorithm which produces a formula $\psi$
of pathwidth $u+O(1)$ and a path decomposition of $\psi$ such that some
``special'' bag of the decomposition $B^*$ contains $u$ variables
$x_1,\ldots,x_u$ and such that we have the following: for each truth assignment
$\sigma$ to the variables of $B^*$, $\sigma$ can be extended to a satisfying
assignment of $\psi$ if and only if there exist $i_1,i_2,\ldots,i_k$ such that
$\sum_{j\in[k/2]} A_j[i_j] = (\sigma(x_1),\sigma(x_2),\ldots,\sigma(x_u)) =
\sum_{j\in[k/2+1,k]} A_j[i_j]$.  Therefore, $\psi$ is satisfiable if and only
if the given \kxor\ instance has a solution.\end{corollary}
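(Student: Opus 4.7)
The plan is to apply Lemma~\ref{lem:subsetsum} twice, once to the left half of the arrays $A_1,\ldots,A_{k/2}$ and once to the right half $A_{k/2+1},\ldots,A_k$, and then glue the two resulting constructions along their terminal bags. This is a natural move because Lemma~\ref{lem:subsetsum} is stated in a way that makes the last bag of the decomposition behave like an ``interface'' encoding exactly the set of constructible sums, which is precisely what a meet-in-the-middle approach for \kxor\ asks for.

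More concretely, first I would invoke Lemma~\ref{lem:subsetsum} with $\ell=k/2$ on $A_1,\ldots,A_{k/2}$ to obtain a formula $\psi_L$ together with a path decomposition $\mathcal{P}_L$ of pathwidth $u+O(1)$ whose last bag $B_L$ contains $u$ designated variables $x^L_1,\ldots,x^L_u$ with the property that an assignment to these variables extends to a satisfying assignment of $\psi_L$ iff the corresponding vector is a sum $\sum_{j\in[k/2]} A_j[i_j]$. Symmetrically, apply the lemma to $A_{k/2+1},\ldots,A_k$, using a disjoint set of fresh variables, to obtain $\psi_R$ with decomposition $\mathcal{P}_R$ ending in a bag $B_R$ with variables $x^R_1,\ldots,x^R_u$.

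Next, I would identify $x^L_j$ with $x^R_j$ for each $j\in[u]$, and define $\psi := \psi_L \land \psi_R$ over the (now shared) variable set. To produce a path decomposition of $\psi$, I would take $\mathcal{P}_L$ and concatenate it with the \emph{reverse} of $\mathcal{P}_R$, so that $B_L$ and $B_R$ merge into a single bag $B^*$ sitting in the middle of the new decomposition and containing exactly the $u$ identified variables. Because all non-interface variables of $\psi_L$ are disjoint from those of $\psi_R$, the result is a valid path decomposition; its width is $u+O(1)$, inherited from the two constituent decompositions.

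Correctness is then immediate from Lemma~\ref{lem:subsetsum}: for any assignment $\sigma$ to $B^*$, extending $\sigma$ to a satisfying assignment of $\psi$ amounts to extending it independently on each side (since no variable outside $B^*$ is shared), which by the lemma happens iff the vector encoded by $\sigma$ is simultaneously expressible as $\sum_{j\in[k/2]}A_j[i_j]$ and as $\sum_{j\in[k/2+1,k]}A_j[i_j]$. Taking the disjunction over all $\sigma$ gives the claimed equivalence with satisfiability of $\psi$, hence with the given \kxor\ instance having a solution. I do not expect any real obstacle; the only point worth double-checking is that the concatenation along a shared bag genuinely yields a valid path decomposition, which follows from the disjointness of the non-interface variables and from the fact that every clause of $\psi_L$ (resp.\ $\psi_R$) is already covered inside $\mathcal{P}_L$ (resp.\ $\mathcal{P}_R$).
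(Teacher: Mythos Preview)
Your proposal is correct and follows essentially the same approach as the paper: invoke Lemma~\ref{lem:subsetsum} on each half, identify the $u$ terminal variables, take the conjunction, and glue the two path decompositions along the merged last bags to form $B^*$. Your write-up is slightly more explicit about reversing $\mathcal{P}_R$ and about why the glued decomposition is valid, but the argument is the same.
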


\begin{proof}

We invoke Lemma~\ref{lem:subsetsum} on the \kxor\ instance $A_1,\ldots,A_{k/2}$ and
the \kxor\ instance $A_{k/2+1},\ldots,A_k$. This produces two formulas
$\phi_1,\phi_2$ and two path decompositions, where the last bags of the
decompositions have the properties described in the lemma. We obtain a formula
for the whole instance by taking the union of $\phi_1,\phi_2$ and identifying
the $u$ variables of the last bags (respecting the ordering). We construct a
path decomposition by merging the last bags of the two decompositions, and call
the bag resulting from this merging $B^*$.  \end{proof}

\subsubsection{Linear Hash Functions}\label{sec:hashalon}

We now recall some known properties of linear hash functions. In this section
(and following sections) when we say that we construct a random linear function
$h:\{0,1\}^u\to\{0,1\}^r$ we mean that we construct an $r\times u$ boolean
matrix $R$ whose elements are set independently to 0 or 1 with probability
$1/2$. The function is then defined as $h(v)=Rv$. Furthermore, when we write
$h^{-1}(y)$, we mean the set of pre-images of $y$, that is, $h^{-1}(y)=\{\ x\
|\ h(x)=y\ \}$.

\begin{theorem}\label{thm:alon}[Theorem 4 of \cite{AlonDMPT99}] Let $u,r$ be
integers, with $r<u$ and $\mathcal{H}$ be the set of all linear functions from
$\{0,1\}^u$ to $\{0,1\}^r$. For all sets $S\subseteq \{0,1\}^u$ such that
$|S|\le 2^r$ we have the following: $E_{h\in\mathcal{H}}[\max_{y\in\{0,1\}^r} |
h^{-1}(y)\cap S |] = O(r\log r)$.\end{theorem}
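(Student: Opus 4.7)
The plan is a moment-plus-union-bound strategy, exploiting the linearity of $h$ to get tight control on the joint distribution of collisions. Fix a bucket $y \in \{0,1\}^r$ and let $L_y = |h^{-1}(y) \cap S|$. Since each row of $R$ is uniform in $\{0,1\}^u$, for any fixed nonzero $x$ the value $h(x) = Rx$ is uniform on $\{0,1\}^r$. Hence $E_h[L_y] = |S|/2^r \le 1$. The whole game is to control the fluctuations of $L_y$ above its mean and then union bound over the $2^r$ choices of $y$.

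The key structural fact I would exploit is a ``rank-governs-independence'' principle for linear hash functions: for any $x_1,\dots,x_t \in \{0,1\}^u$, the joint distribution of $(h(x_1),\dots,h(x_t))$ is uniform on an affine subspace of $(\{0,1\}^r)^t$ whose dimension is exactly $r\cdot \dim\langle x_1,\dots,x_t\rangle$. Consequently, for any target $y$, the event $\{h(x_i)=y \text{ for all } i\}$ has probability either $0$ (when the constraints are inconsistent with the $GF(2)$-linear dependencies among the $x_i$) or $2^{-r\cdot d}$ where $d = \dim\langle x_1,\dots,x_t\rangle$. This is a strong structural statement that goes beyond mere pairwise independence, and it is the reason linear hashing behaves almost as well as truly random hashing.

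With this in hand I would bound the tail by a ``subset-method'' estimate:
\[
\Pr[L_y \ge t] \le \sum_{\substack{T \subseteq S \\ |T|=t}} \Pr\bigl[h \text{ is constantly } y \text{ on } T\bigr] \le \sum_{d} N_d(t)\cdot 2^{-rd},
\]
where $N_d(t)$ is the number of $t$-subsets of $S$ whose linear span has dimension $d$. Since $|T| \le 2^{\dim\langle T\rangle}$, the smallest possible $d$ is $\lceil\log_2(t+1)\rceil$; this already contributes at most $\binom{|S|}{t}\cdot 2^{-r\log_2 t} \le 2^{rt - r\log_2 t}/t!$, which is tiny once $t$ slightly exceeds $r$. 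Larger-$d$ terms can be absorbed by counting: the number of $t$-subsets of $S$ contained in a $d$-dimensional subspace is at most $\binom{2^d}{t}$, and the number of $d$-dimensional subspaces spanned by elements of $S$ is at most $\binom{|S|}{d}$, giving $N_d(t) \le \binom{|S|}{d}\binom{2^d}{t}$. Multiplying by $2^{-rd}$ and using $|S|\le 2^r$ yields per-$d$ bounds that decay geometrically, so the sum is dominated by the smallest $d$. Tuning $t = \Theta(r\log r)$ makes $\Pr[L_y \ge t] \le 2^{-2r}$; a union bound over $y$ then gives $E[\max_y L_y] \le t + 2^r \cdot t \cdot 2^{-2r} = O(r\log r)$.

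The main obstacle is the counting step: bounding $N_d(t)$ carefully in the regime $d$ only slightly above $\log_2 t$, where $T$ is nearly a full subspace. One has to track how the interaction between ``how subspace-like $T$ is'' and ``what the target $y$ forces'' plays out, since when $T$ is close to a subspace many hash targets $y$ are unreachable, and for those that are reachable the probability is exactly $2^{-rd}$. Pushing this delicate balance is what forces the $\log r$ factor (as opposed to the $\log r/\log\log r$ bound available for fully random hashing) and is the technical heart of the argument in~\cite{AlonDMPT99}.
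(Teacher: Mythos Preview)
This theorem is not proved in the paper; it is quoted from \cite{AlonDMPT99} and used as a black box, so there is no in-paper argument to compare against. I can only comment on whether your sketch is a viable route to the result.

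You correctly isolate the structural fact that makes linear hashing special---the probability that $h$ takes the constant value $y$ on a set $T$ is either $0$ or $2^{-r\cdot\dim\langle T\rangle}$---but the framework you build on it cannot deliver the bound. First, your counting estimate $N_d(t)\le\binom{|S|}{d}\binom{2^d}{t}$ gives, after using $|S|\le 2^r$, a per-$d$ contribution of at most $\binom{2^d}{t}/d!$; this is tiny at $d=t$ and harmless at $d=\lceil\log_2 t\rceil$, but already at $d=\lceil\log_2 t\rceil+1$ it is of order $\binom{2t}{t}/(\log t)!\approx 4^t/(\log t)!$, so the terms explode rather than ``decay geometrically.'' Second, and more decisively, the per-bucket tail you aim for is unattainable in general. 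Take $S$ to be an $r$-dimensional subspace of $\{0,1\}^u$; then $h|_S$ is a uniformly random linear map from $\{0,1\}^r$ to $\{0,1\}^r$, every nonempty bucket has size exactly $2^{\dim\ker(h|_S)}$, and a standard rank count gives $\Pr[\dim\ker(h|_S)\ge j]=2^{-\Theta(j^2)}$. Hence for any fixed $y$ one gets $\Pr[L_y\ge t]=2^{-\Theta((\log t)^2)}$, which for $t=\Theta(r\log r)$ is only $2^{-\Theta((\log r)^2)}$---far too large to survive a union bound over $2^r$ buckets. For this particular $S$ the expected maximum load is in fact $O(1)$, so the theorem holds comfortably; it is the per-bucket-tail-plus-union-bound \emph{route} that fails, not merely your bookkeeping on $N_d(t)$. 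The proof in \cite{AlonDMPT99} is organized along different lines and does not pass through a pointwise tail estimate of this strength.
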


We remark that is was shown in \cite{AlonDMPT99} that in fact the same upper
bound on the maximum load can be shown even if $|S|$ is slightly larger than
$2^r$ (namely, the theorem also holds under the assumption $|S|<2^rr$).
Conversely, Babka \cite{abs-1810-04161} gives an improvement to Theorem~\ref{thm:alon}
which upper bounds the expected maximum load by $O(r)$ (rather than $O(r\log
r)$) and this was much more recently further improved by Jaber, Kumar, and
Zuckerman to match the performance of a truly random function, that is,
$O(r/\log r)$ \cite{JaberKZ25}.  Nevertheless, neither of these improvements is
necessary for our purposes and indeed Theorem~\ref{thm:alon} is already more
than good enough (it would have been sufficient for us to have an upper bound
of $2^{o(r)}$ on the maximum load).  We therefore stick with the bound of
Theorem~\ref{thm:alon}.

Recall that a class of functions $\mathcal{H}$ from $\{0,1\}^u$ to $\{0,1\}^r$
is called $1$-universal if  for all distinct $x_1,x_2\in\{0,1\}^u$ we have that
$Pr_{h\in\mathcal{H}}[h(x_1)=h(x_2)] \le \frac{1}{2^r}$. The following is a
standard fact about $1$-universal functions.

\begin{theorem}\label{thm:wegman} Let  $u,r$ be integers, with $r<u$ and
$\mathcal{H}$ be a $1$-universal set of functions from $\{0,1\}^u$ to
$\{0,1\}^r$. Then, for all sets $S\subseteq \{0,1\}^u$ such that $|S|\le
2^{r/2}$ we have the following: $Pr_{h\in\mathcal{H}}[\max_{y\in\{0,1\}^r} |
h^{-1}(y)\cap S| >1 ] \le \frac{1}{2}$. \end{theorem}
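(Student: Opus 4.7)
The plan is to prove this by a standard birthday-paradox style union bound. The event that $\max_{y\in\{0,1\}^r} |h^{-1}(y)\cap S| > 1$ is exactly the event that there exist two distinct elements $x_1,x_2 \in S$ with $h(x_1)=h(x_2)$, i.e., that $h$ has a collision on $S$. So I would reformulate the goal as: bound the probability, over a uniformly chosen $h\in\mathcal{H}$, that some pair of distinct elements of $S$ collides.

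First I would fix any pair of distinct $x_1,x_2\in S$. By the definition of $1$-universality, $\Pr_{h\in\mathcal{H}}[h(x_1)=h(x_2)] \le 2^{-r}$. Then, taking a union bound over all $\binom{|S|}{2}$ unordered pairs of distinct elements of $S$, I get
\[
\Pr_{h\in\mathcal{H}}\Bigl[\max_{y\in\{0,1\}^r} |h^{-1}(y)\cap S| > 1\Bigr]
\;\le\; \binom{|S|}{2}\cdot \frac{1}{2^r} \;\le\; \frac{|S|^2}{2\cdot 2^r}.
\]
Finally, substituting the hypothesis $|S|\le 2^{r/2}$ gives $|S|^2 \le 2^r$, and hence the right-hand side is at most $\tfrac{1}{2}$, which is the desired conclusion.

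There is essentially no obstacle here: everything follows from the definition of $1$-universality combined with a union bound, and the threshold $2^{r/2}$ in the hypothesis is exactly what is needed to make $\binom{|S|}{2}/2^r \le 1/2$. I would just be slightly careful to note that the event in the statement is literally the collision event (so no slack is lost in that rewriting), and that the bound $\binom{|S|}{2}\le |S|^2/2$ is tight enough for the argument to go through without any additional hypothesis on $|S|$.
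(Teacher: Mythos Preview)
Your proof is correct and essentially identical to the paper's. The paper phrases it as computing the expected number of collisions (at most $\binom{|S|}{2}\cdot 2^{-r}\le\frac{1}{2}$) and then invoking Markov's inequality, while you phrase it as a direct union bound over pairs; for the event ``at least one collision'' these two formulations give literally the same bound.
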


\begin{proof} There are at most ${|S|\choose 2}\le \frac{2^r}{2}$ possible
colliding pairs and each pair has probability at most $\frac{1}{2^r}$ of
actually colliding. Therefore, the expected number of collisions is at most
$\frac{1}{2}$. By Markov's inequality, the probability that there exists a pair
of distinct values that collide is at most $\frac{1}{2}$. \end{proof}

It is also known (and not too hard to observe) that the linear hash functions
we are using are indeed $1$-universal.

\begin{observation} Let $u,r$ be integers, with $r<u$ and $\mathcal{H}$ be the
set of all linear functions from $\{0,1\}^u$ to $\{0,1\}^r$. Then, for all
distinct $x_1,x_2\in\{0,1\}^u$ we have $Pr_{h\in\mathcal{H}}[h(x_1)=h(x_2)] \le
\frac{1}{2^r}$.\end{observation}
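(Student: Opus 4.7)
The plan is to reduce the collision event for $h=R$ (with $R$ a uniformly random $r\times u$ boolean matrix acting by matrix--vector multiplication over $\mathbb{F}_2$) to a simple event about the kernel of $R$ applied to a single nonzero vector. First I would observe that, over $\mathbb{F}_2$, the condition $h(x_1)=h(x_2)$ is equivalent to $R(x_1\oplus x_2)=0$. Setting $z=x_1\oplus x_2$, the hypothesis that $x_1\neq x_2$ ensures $z\neq 0$, so the target probability becomes $\Pr_R[Rz=0]$ for a fixed nonzero $z\in\{0,1\}^u$.

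Next I would write $Rz=0$ coordinatewise: if $R_1,\dots,R_r$ denote the rows of $R$ (each a uniformly random vector in $\{0,1\}^u$, independent across rows), then $Rz=0$ iff $\langle R_i, z\rangle = 0$ in $\mathbb{F}_2$ for every $i\in[r]$. Because the entries of $R$ are independent across rows, these $r$ events are mutually independent, and so the probability factors as $\prod_{i=1}^{r}\Pr[\langle R_i,z\rangle=0]$.

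The main (and essentially only) calculation is to show that $\Pr[\langle R_i,z\rangle=0]=\tfrac12$ for a single uniform row $R_i\in\{0,1\}^u$ against a fixed nonzero $z$. I would argue this by fixing any coordinate $j$ with $z_j=1$ (which exists because $z\neq 0$) and pairing each outcome $R_i$ with the outcome $R_i\oplus e_j$ obtained by flipping the $j$-th bit; this pairing is a fixed-point-free involution on $\{0,1\}^u$ that flips the parity of $\langle R_i,z\rangle$, so exactly half of the $2^u$ outcomes satisfy $\langle R_i,z\rangle=0$. Combining with the previous step gives $\Pr_R[Rz=0]=2^{-r}$, establishing the claimed bound (indeed with equality).

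There is no real obstacle here; the only thing to be careful about is the use of independence across rows (which holds because each entry of $R$ is drawn independently) and the observation that the bit-flip involution requires $z$ to have at least one nonzero entry, which is guaranteed by $x_1\neq x_2$. Note also that the condition $r<u$ is not used in the argument; it merely reflects the regime in which these hash functions are interesting for the reduction.
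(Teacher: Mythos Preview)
Your proof is correct and takes essentially the same approach as the paper: both reduce to the single-row case via independence of the rows of $R$, and both establish the $\tfrac12$ probability for one row by exploiting a coordinate where $x_1$ and $x_2$ differ. Your version is slightly cleaner in that you first pass to the nonzero vector $z=x_1\oplus x_2$ and then use the involution $R_i\mapsto R_i\oplus e_j$ on the \emph{row}, whereas the paper keeps $x_1,x_2$ separate and conditions on the value of $Rx_1$ versus $Rx_2'$ (with $x_2'$ obtained by flipping the last differing bit of $x_2$); the underlying parity argument is identical.
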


\begin{proof} It is sufficient to prove the observation for $r=1$, as in our
context a hash function with $r$ bits of output is just a concatenation of $r$
independent hash functions with one bit of output, therefore if the probability
of collision is at most $1/2$ in this case, it is at most $\frac{1}{2^r}$ in
general. 

Let $R$ then be a random boolean vector of dimension $u$ and we want to show
that $Pr[Rx_1=Rx_2]\le \frac{1}{2}$ when $x_1\neq x_2$. Let $j$ be the last bit
in which $x_1,x_2$ differ. Let $x_2'$ be the vector obtained from $x_2$ by
flipping the $j$-th bit. Then $Pr[Rx_1=Rx_2] = \frac{1}{2}Pr[Rx_1=Rx_2'] +
\frac{1}{2}Pr[Rx_1\neq Rx_2']$. This follows because when $Rx_1=Rx_2'$, then
$Rx_1=Rx_2$ if and only if the $j$-th bit of $R$ is $0$ (which happens with
probability $1/2$); and similarly, when $Rx_1\neq Rx_2'$, then $Rx_1=Rx_2$ if
and only if the $j$-th bit of $R$ is $1$. However, $Pr[Rx_1=Rx_2']+Pr[Rx_1\neq
Rx_2']=1$.  \end{proof}

\subsection{The Reduction}\label{sec:kxorreduction}

In this section we present a construction which reduces \kxor\ to \textsc{SAT}
in a way that will allow us to establish Theorem~\ref{thm:kxor}.  We are given a
\kxor\ instance: $k$ arrays (we assume without loss of generality that $k$ is
even) $A_1,\ldots,A_k$, each containing $n$ boolean vectors of $u$ bits each.
The question is whether there exist $i_1,i_2,\ldots,i_k$ such that
$\sum_{j\in[k]} A_j[i_j] = 0$. Equivalently, the question is whether there
exists a boolean vector $v$ with $u$ bits and $i_1,i_2,\ldots,i_k$ such that
$v=\sum_{j\in[k/2]} A_j[i_j] =\sum_{j\in[k/2+1,k]} A_j[i_j]$. We assume without
loss of generality that $n$ is a power of $2$.

Let $r_0=\frac{k}{2}\log n+1$ and pick a hash function $h^*$ uniformly at
random from the space of all linear functions from $\{0,1\}^u$ to
$\{0,1\}^{r_0}$. In the remainder we call $h^*$ our \emph{main} hash function.

Let $r_1=\lceil3\log(r_0)\rceil$. For each $\ell\in[10k\log n]$ pick a hash
function $h_\ell$ uniformly at random from the space of all linear functions
from $\{0,1\}^u$ to $\{0,1\}^{r_1}$. We will call these our \emph{secondary}
hash functions.

Our construction now proceeds as follows: for each $\ell\in[10k\log n]$ we
construct a new instance of \kxor\ made up of $k$ arrays of size $n$, where
each vector has $r_0+r_1$ bits. Let $B_1^\ell, B_2^\ell,\ldots, B_k^\ell$ be
the arrays of the new instance. We set for all $i\in[k], j\in[n]$ that
$B_i^\ell[j] = (h^*(A_i[j]), h_\ell(A_i[j]))$. In other words, to obtain
instance $\ell$ we apply to each element of the original instance the main hash
function as well as the $\ell$-th secondary hash function and replace the
element with the concatenation of the two returned values.

For each of the $10k\log n$ instances of \kxor\ we have produced we invoke
Corollary~\ref{cor:subsetsum}. This produces $10k\log n$ formulas
$\phi_1,\ldots,\phi_{10k\log n}$ and their corresponding path decompositions of
width $r_0+r_1+O(1)$. Recall that by Corollary~\ref{cor:subsetsum} each decomposition
contains a special bag $B^*$ with $r_0+r_1$ variables, such that an assignment
to this bag is extendible to a satisfying assignment to the formula if and only
if the vector of the assignment is constructible as a sum from the first $k/2$
arrays and also as a sum from the last $k/2$ arrays.  

So far we have constructed $10k\log n$ independent CNF formulas (and their path
decompositions). The last step of our construction adds some extra clauses to
ensure that all satisfying assignments must consistently select the same value
for the main hash functions. More precisely, for each $\ell\in[10k\log n-1]$
consider the formulas $\phi_\ell$ and $\phi_{\ell+1}$ and the corresponding
special bags of their decompositions, which contain $r_0+r_1$ variables each.
We add clauses ensuring that the first $r_0$ variables of the special bag of
$\phi_\ell$ (encoding the value of the main hash function) must receive the
same assignments as the $r_0$ corresponding variables of the special bag of
$\phi_{\ell+1}$.

\subsection{Proof of Correctness}\label{sec:kxorproof}

\begin{lemma}\label{lem:kxortreewidth} The algorithm of
Section~\ref{sec:kxorreduction} produces a formula of treewidth
$(1+o(1))\frac{k}{2}\log n$. Furthermore, a tree decomposition of this width
can be constructed in polynomial time.  \end{lemma}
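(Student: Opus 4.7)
The plan is to explicitly construct a tree decomposition by stitching together the path decompositions of the individual formulas $\phi_\ell$ provided by Corollary~\ref{cor:subsetsum}. The crucial observation that makes the treewidth bound work is that the equality clauses between consecutive $\phi_\ell$ and $\phi_{\ell+1}$ on the first $r_0$ variables of their special bags can be treated as identifying the $r_0$ main hash variables across all formulas: logically the clauses force these variables to take the same value, and in the primal graph this amounts to merging these $r_0$ vertices into a shared set $\{x_1,\ldots,x_{r_0}\}$, after which no remaining clause spans two different $\phi_\ell$'s.

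First, for each $\ell \in [10k\log n]$, let $P_\ell$ be the path decomposition of $\phi_\ell$ of width $r_0 + r_1 + O(1)$ produced by Corollary~\ref{cor:subsetsum}, containing the special bag $B^*_\ell$ whose first $r_0$ variables are (after identification) exactly $\{x_1,\ldots,x_{r_0}\}$. I would then build the tree decomposition as a ``spider'': introduce a single hub bag $C = \{x_1,\ldots,x_{r_0}\}$ and attach each $P_\ell$ to $C$ by making $B^*_\ell$ adjacent to $C$ in the tree. This gives a tree whose $10k\log n$ legs are the individual path decompositions, joined at a single central node.

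The verification is routine. Clauses within $\phi_\ell$ are covered by bags of $P_\ell$, and the equality clauses vanish after identification. The connectedness property holds for variables private to some $\phi_\ell$ (they live in a single $P_\ell$), and for each shared variable $x_i$, the set of bags containing it forms a connected subtree: $B^*_\ell$ always contains $x_i$ and lies on the tree-path from $C$ to every other bag of $P_\ell$ holding $x_i$, while $C$ contains $x_i$ by construction. The width is therefore $\max(|C|, \mathrm{width}(P_\ell)+1) = r_0 + r_1 + O(1)$. With $r_0 = \tfrac{k}{2}\log n + 1$ and $r_1 = \lceil 3\log r_0\rceil = O(\log\log n)$ for fixed $k$, this is $(1+o(1))\tfrac{k}{2}\log n$, as required.

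Polynomial-time construction is immediate since each $P_\ell$ is built in polynomial time by Corollary~\ref{cor:subsetsum}, there are $O(k\log n)$ of them, and adding the hub and the attaching edges contributes only $O(k\log n)$ extra nodes. I do not expect a real technical obstacle here; the main conceptual point worth emphasising is that flattening this spider into a single path would force the $r_0$ hub variables to appear in every bag along any spine connecting the various $B^*_\ell$'s, blowing up the pathwidth by an additive $r_0$ term. This is exactly the structural reason the reduction yields the \twseth\ rather than the \ppseth, mirroring the quantifier-alternation intuition discussed in the proof overview.
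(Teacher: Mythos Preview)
Your spider construction is clean, but there is a gap in step one: the algorithm of Section~\ref{sec:kxorreduction} does \emph{not} identify the main-hash variables across the $\phi_\ell$; it keeps them as distinct variables $x_j^\ell$ and adds binary equality clauses between $x_j^\ell$ and $x_j^{\ell+1}$. In the primal graph this produces \emph{edges} $\{x_j^\ell,x_j^{\ell+1}\}$, not merged vertices. So the formula whose treewidth the lemma asks about is not the identified formula you analyse, and your spider (with a single hub $C=\{x_1,\dots,x_{r_0}\}$) does not give a valid tree decomposition of it: there is no bag covering the edge $\{x_j^\ell,x_j^{\ell+1}\}$, and if you un-identify the variables in each leg the hub becomes disconnected from everything.

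There are two clean fixes. One is to change the construction itself to literally identify the $r_0$ main variables across all $\phi_\ell$ instead of adding equality clauses; this yields an equisatisfiable formula (the correctness Lemmas~\ref{lem:kxorcorrect1} and~\ref{lem:kxorcorrect2} go through verbatim, since they only use that the main-hash value is common to all copies), and your spider then proves the bound for that formula. The other, which is what the paper does, is to keep the formula as written and connect $B^*_\ell$ to $B^*_{\ell+1}$ by a path of $2r_0$ intermediate bags that swap $x_j^\ell$ for $x_j^{\ell+1}$ one coordinate at a time; each such bag has size at most $r_0+r_1+O(1)$ and contains the endpoint pair $\{x_j^\ell,x_j^{\ell+1}\}$, so the equality edges are covered and the width is unchanged. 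Either route gives the stated bound; you should just be explicit about which one you are taking rather than asserting that equality clauses ``amount to merging'' in the primal graph.
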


\begin{proof} We start with the union of the path decompositions of the
formulas $\phi_1,\ldots,\phi_{10k\log n}$ guaranteed by Corollary~\ref{cor:subsetsum}.
These cover the whole construction except the clauses we added in the last step
to ensure consistency for the main hash function between different CNF
formulas. Let us then explain how to obtain a tree decomposition that also
covers these new clauses, while at the same time connecting the forest of path
decompositions into a single tree decomposition. 

For each $\ell\in[10k\log n-1]$ we do the following: consider the special bags
$B^*_\ell, B^*_{\ell+1}$ of the path decompositions of $\phi_\ell,
\phi_{\ell+1}$. The primal graph induced by these two bags is a matching with
$r_0$ edges, say from variables $x_1^\ell, x_2^\ell,\ldots, x_{r_0}^\ell$ to
variables $x_1^{\ell+1}, x_2^{\ell+1},\ldots,x_{r_0}^{\ell+1}$, because we have
added clauses (of arity two) ensuring that $x_j^\ell=x_j^{\ell+1}$. We connect
$B_\ell^*$ and $B_{\ell+1}^*$ in the decomposition by a sequence of $2r_0$ bags
as follows: start with $B_\ell^*$ as the current bag and then at each step
$j\in[r_0]$, add a bag that is a copy of the current bag with $x_j^{\ell+1}$
added, and then add a copy of the bag with $x_j^{\ell}$ removed and make that
the current bag. Continue in this way until the last bag contains all of
$x_j^{\ell+1}$, for $j\in[r_0]$, and make that bag a neighbor of
$B_{\ell+1}^*$. It is not hard to see that this constructs a valid tree
decomposition of the whole instance.

The width of this decomposition is dominated by the width of the path
decompositions given by Corollary~\ref{cor:subsetsum}, which is $r_0+r_1+O(1)$. Since
$r_0= \frac{k}{2}\log n +O(1)$ and $r_1=O(\log r_0) = o(k\log n)$ we have that
the width is at most $(1+o(1))\frac{k}{2}\log n$.  \end{proof}

\begin{lemma}\label{lem:kxorcorrect1} If the \kxor\ instance on which the
algorithm of Section~\ref{sec:kxorreduction} is executed has a solution, then the
resulting CNF formula is satisfiable. \end{lemma}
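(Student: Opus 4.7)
The plan is to produce a satisfying assignment explicitly by transporting a solution of the input \kxor\ instance through each of the hashed instances. Suppose the input \kxor\ instance has a solution, so there exist indices $i_1,\ldots,i_k$ with $\sum_{j\in[k]} A_j[i_j]=0$. Equivalently, the vector $v=\sum_{j\in[k/2]} A_j[i_j]$ equals $\sum_{j\in[k/2+1,k]} A_j[i_j]$; this common value $v$ is what we intend to ``commit to'' across all the copies.

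Next I would invoke the crucial property of our hash functions: both $h^*$ and every $h_\ell$ are linear, and concatenation of linear functions is linear. Consequently, $h^*(v)=\sum_{j\in[k/2]} h^*(A_j[i_j]) = \sum_{j\in[k/2+1,k]} h^*(A_j[i_j])$, and likewise for each $h_\ell$. So for every $\ell\in[10k\log n]$, the vector $(h^*(v),h_\ell(v))\in\{0,1\}^{r_0+r_1}$ is simultaneously constructible as a sum of one element from each of the first $k/2$ arrays of the hashed instance $B_1^\ell,\ldots,B_k^\ell$ and as a sum from the last $k/2$ arrays, namely by the very same indices $i_1,\ldots,i_k$.

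Now I would apply Corollary~\ref{cor:subsetsum} to each $\phi_\ell$: precisely because $(h^*(v),h_\ell(v))$ is realizable as a double-sided sum in the $\ell$-th hashed instance, the partial assignment that sets the $r_0+r_1$ variables of the special bag $B^*_\ell$ to $(h^*(v),h_\ell(v))$ extends to a satisfying assignment $\sigma_\ell$ of $\phi_\ell$. Take $\sigma$ to be the union of the $\sigma_\ell$'s across all $\ell$; this is well-defined since the formulas $\phi_\ell$ share no variables other than what will be constrained below.

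Finally, I would verify the consistency clauses added in the last step of the construction. These clauses force, for each $\ell\in[10k\log n-1]$, the first $r_0$ variables of $B^*_\ell$ and $B^*_{\ell+1}$ to agree. But by construction the first $r_0$ variables of every $B^*_\ell$ are set to $h^*(v)$, which does not depend on $\ell$, so all these equality constraints are satisfied. Hence $\sigma$ is a satisfying assignment of the entire constructed formula. There is no real obstacle here: the whole argument hinges on the linearity of $h^*$ and $h_\ell$ guaranteeing that a genuine solution survives both hashing steps, and on Corollary~\ref{cor:subsetsum} providing the needed extendibility inside each $\phi_\ell$.
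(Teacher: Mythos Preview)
Your proof is correct and follows essentially the same approach as the paper: fix the common vector $v$, assign $(h^*(v),h_\ell(v))$ to the special bag of each $\phi_\ell$, invoke linearity to see this is realizable as a two-sided sum in the $\ell$-th hashed instance, apply Corollary~\ref{cor:subsetsum} to extend, and observe that the first $r_0$ coordinates agree across all $\ell$ so the consistency clauses are satisfied.
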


\begin{proof} If the given \kxor\ instance has a solution given by indices
$i_1,i_2,\ldots,i_k$, then there exists a boolean vector $v$ such that
$v=\sum_{j\in[k/2]}A_j[i_j]=\sum_{j\in[k/2+1,k]}A_j[i_j]$. We obtain a
satisfying assignment of the formula as follows.

For each $\ell\in[10k\log n]$ we assign to the $r_0+r_1$ variables of the
special bag $B^*_{\ell}$ the values $(h^*(v),h_{\ell}(v))$. Observe that since
we are using the same value for the first $r_0$ variables of the special bags,
this satisfies the consistency clauses added in the final step of the
construction.  Furthermore, by Corollary~\ref{cor:subsetsum} for each
$\ell\in[10k\log n]$ the assignment we have set is extendible to a satisfying
assignment of $\phi_\ell$.  This is because $\sum_{j\in[k/2]} B^\ell_j[i_j]
=\sum_{j\in[k/2+1,k]} B^\ell_j[i_j] = (h^*(v),h_\ell(v))$, as follows from the
linearity of the functions $h^*, h_\ell$.  \end{proof}

\begin{lemma}\label{lem:kxorcorrect2} If the \kxor\ instance  on which the
algorithm of Section~\ref{sec:kxorreduction} is executed has no solution, then the
resulting CNF formula is satisfiable with probability at most $o(1)$.
\end{lemma}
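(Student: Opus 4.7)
The plan is to first unpack what satisfiability of the constructed formula actually means combinatorially (in terms of the hash functions), and then apply the two hashing tools from Section~\ref{sec:xortools} to show that this event is rare.

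First, let $L=\{\sum_{j\in[k/2]}A_j[i_j]\mid i_1,\dots,i_{k/2}\in[n]\}$ and $R=\{\sum_{j\in[k/2+1,k]}A_j[i_j]\mid i_{k/2+1},\dots,i_k\in[n]\}$. The assumption that the \kxor\ instance has no solution is equivalent to $L\cap R=\emptyset$, and $|L\cup R|\le 2n^{k/2}\le 2^{r_0}$. By the linearity of $h^*$ and $h_\ell$ and by Corollary~\ref{cor:subsetsum}, an assignment to the special bag $B^*_\ell$ of $\phi_\ell$ is extendible to a satisfying assignment iff it equals $(h^*(u),h_\ell(u))=(h^*(v),h_\ell(v))$ for some $u\in L, v\in R$. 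The consistency clauses added in the last step of the construction force all $10k\log n$ special bags to agree on the first $r_0$ bits of this assignment, call this common value $y\in\{0,1\}^{r_0}$. Hence, the formula is satisfiable iff there exists $y$ such that for every $\ell\in[10k\log n]$ there exist $u_\ell\in L_y := L\cap(h^*)^{-1}(y)$ and $v_\ell\in R_y := R\cap(h^*)^{-1}(y)$ with $h_\ell(u_\ell)=h_\ell(v_\ell)$, and necessarily $u_\ell\neq v_\ell$ since $L\cap R=\emptyset$.

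Next I would control the load of $h^*$. Applying Theorem~\ref{thm:alon} to $S=L\cup R$ gives $\mathbb{E}[\max_y|(h^*)^{-1}(y)\cap S|]=O(r_0\log r_0)$. Let $M$ denote this maximum load and pick a threshold $M_0=r_0\log^2 r_0$; by Markov's inequality the event $\mathcal E:=\{M\le M_0\}$ holds with probability $1-o(1)$ as $r_0\to\infty$. Note that $r_0=\tfrac{k}{2}\log n+1$, so this ``$o(1)$'' is a function of $k$ and $n$ that vanishes for sufficiently large $k$ (or $n$).

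Then I condition on $h^*$ with $\mathcal E$ holding, so $|L_y|,|R_y|\le M_0$ for every $y$, and exploit the independence of the $h_\ell$'s. Since linear maps are $1$-universal (the observation at the end of Section~\ref{sec:hashalon}), for each fixed pair $(u,v)$ with $u\neq v$ we have $\Pr_{h_\ell}[h_\ell(u)=h_\ell(v)]\le 2^{-r_1}$. Union-bounding over at most $M_0^2$ pairs $(u,v)\in L_y\times R_y$ yields, for each fixed $y$ and $\ell$,
\[
\Pr\bigl[\exists u\in L_y, v\in R_y: h_\ell(u)=h_\ell(v)\;\big|\;h^*\bigr]\le \frac{M_0^2}{2^{r_1}}=\frac{r_0^2\log^4 r_0}{r_0^3}=\frac{\log^4 r_0}{r_0},
\]
which is at most $1/2$ for $r_0$ (equivalently $k$) large enough. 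Since the $h_\ell$ are mutually independent, the probability that a collision occurs in bucket $y$ for \emph{every} $\ell\in[10k\log n]$ is at most $2^{-10k\log n}=n^{-10k}$. Finally, union-bounding over the $2^{r_0}\le 2n^{k/2}$ values of $y$ gives
\[
\Pr[\text{formula satisfiable}\mid\mathcal E]\le 2n^{k/2}\cdot n^{-10k}=o(1),
\]
and combining with $\Pr[\overline{\mathcal E}]=o(1)$ proves the lemma.

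The main delicate step is the numerical balancing between $r_1$, $M_0$, and the number of rounds $10k\log n$: we need $r_1$ large enough relative to $\log M_0$ so that a single round fails to produce a spurious collision with probability bounded below $1$, yet small enough to keep the pathwidth of each $\phi_\ell$ near $\tfrac{k}{2}\log n$. The choices $r_1=\lceil 3\log r_0\rceil$ and $M_0=r_0\log^2 r_0$ are precisely what makes the $\log^4 r_0/r_0$ bound above beat the $2^{r_0}$ union-bound factor after $\Theta(k\log n)$ independent rounds, which is why the extra logarithmic factor in Theorem~\ref{thm:alon} is harmless and no stronger balance property is required.
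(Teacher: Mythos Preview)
Your proof is correct and follows essentially the same approach as the paper: characterize satisfiability via a common bucket $y$ for $h^*$ that admits a secondary collision in every round, use Theorem~\ref{thm:alon} plus Markov to bound the load of every bucket by $r_0\log^2 r_0$ with probability $1-o(1)$, then argue each secondary round is collision-free with probability at least $1/2$, multiply over the $10k\log n$ independent rounds, and union-bound over the $2^{r_0}$ buckets. The only cosmetic difference is that the paper invokes Theorem~\ref{thm:wegman} directly (using $2^{r_1}\ge r_0^3>|S_y|^2$), whereas you unfold that lemma into an explicit pair-count union bound $M_0^2/2^{r_1}$; the two are the same computation.
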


\begin{proof}

Suppose that the given \kxor\ instance has no solution. Let $S_1$ be the set of
vectors constructible from $A_1,\ldots,A_{k/2}$, that is, $v\in S$ if and only
if there exist $i_1,\ldots,i_{k/2}$ such that $v=\sum_{j\in[k/2]}A_j[i_j]$.
Similarly, let $S_2$ be the set of vectors constructible from
$A_{k/2+1},\ldots,A_k$ and we have by assumption that $S_1\cap S_2=\emptyset$.
Furthermore, $|S_1|,|S_2|\le n^{k/2}$.

Suppose now that we apply the main hash function $h^*$ to $S_1\cup S_2$. Since
$|S_1\cup S_2|\le 2n^{k/2}=2^{r_0}$, by Theorem~\ref{thm:alon} we have that
$E[\max_{y\in\{0,1\}^{r_0}} | (h^*)^{-1}(y)\cap (S_1\cup S_2) |] = O(r_0\log
r_0)$.  By using Markov inequality we conclude that the probability that there
exists $y\in\{0,1\}^{r_0}$ such that $| (h^*)^{-1}(y)\cap (S_1\cup S_2) | >
r_0\log^2r_0$ is $o(1)$, that is, tends to $0$ as $n$ tends to infinity. In the
remainder we will therefore condition the analysis under the assumption that
for all $y\in\{0,1\}^{r_0}$ we have $| (h^*)^{-1}(y)\cap (S_1\cup S_2) | <
r_0\log^2r_0$.

By the constraints added in the last step of our construction of the CNF
formula we have that if a satisfying assignment exists, it must pick the same
values for the first $r_0$ variables of the special bags $B^*$ of each of the
$10k\log n$ instances.  Fix now a value $y\in\{0,1\}^{r_0}$. We want to show
that the probability that a satisfying assignment exists which uses $y$ as the
truth assignment for the first $r_0$ variables of the special bags is
$o(\frac{1}{n^{k/2}})$. Since there are $2^{r_0} = O(n^{k/2})$ such assignments
$y$, if we show this, then by union bound we will have that the probability
that a satisfying assignment exists is $o(1)$.

Consider now the set $S_y=(h^*)^{-1}(y)\cap (S_1\cup S_2)$. As mentioned, we
are assuming that $|S_y|< r_0\log^2r_0$. Consider now a value $\ell\in[10k\log
n]$ and suppose that the secondary hash function $h_\ell$ is collision-free for
$S_y$, that is, for all $z\in\{0,1\}^{r_1}$ we have $|h_\ell^{-1}(z)\cap
S_y|\le 1$. We claim that in this case the CNF formula indeed has no satisfying
assignment which picks $y$ as the assignment to the $r_0$ special variables.
Indeed, otherwise we would have a satisfying assignment to the sub-formula
$\phi_\ell$, which would by Corollary~\ref{cor:subsetsum} correspond to a
collection $i_1,\ldots,i_k$ such that $\sum_{j\in[k/2]}B_i^\ell[i_j] = (y,z) =
\sum_{j\in[k/2+1,k]}B_i^\ell[i_j]$. In other words, we would have to find two
distinct elements of $S_y$ which have a collision for the secondary hash
function $h_\ell$, which is impossible as we assumed that this function is
collision-free on $S_y$.

We now observe that for each $\ell\in[10k\log n]$, the probability that
$h_\ell$ is collision-free on $S_y$ is at least $1/2$. Indeed, by
Theorem~\ref{thm:wegman}, if $|S_y|\le r_0\log^2r_0$, since the domain of the
secondary hash function has size $2^{r_1} \ge r_0^3 > |S_y|^2$ for sufficiently
large $n$, we have that $h_\ell$ is not collision-free on $S_y$ with
probability at most $1/2$. However, for each $\ell\in[10k\log n]$, we pick each
$h_\ell$ independently and uniformly at random, so the probability that all the
secondary hash functions have collisions on $S_y$ is at most $2^{-10k\log n} =
n^{-10k}$. In particular, for a fixed assignment $y$ to the $r_0$ special
variables the constructed formula is satisfiable only if all secondary hash
functions have collisions, and this probability is at most
$n^{-10k}=o(n^{-k/2})$, therefore, by union bound the probability of any
satisfying assignment existing is $o(1)$.  \end{proof}

\begin{proof}[Proof of Theorem~\ref{thm:kxor}]

Suppose there exists an algorithm solving \textsc{SAT} parameterized by
treewidth in time $2^{(1-\eps)\tw}|\phi|^{C}$ for some constant $C$. We are
given a \kxor\ instance on $k$ arrays of $n$ vectors each and we execute the
construction of Section~\ref{sec:kxorreduction}. By Lemma~\ref{lem:kxorcorrect1} and
Lemma~\ref{lem:kxorcorrect2} deciding if the constructed formula is satisfiable is
equivalent (up to a very small probability of one-sided error) to deciding if
the original instance has a solution.

Let us then examine the running time of this procedure. The treewidth of the
new formula is, according to Lemma~\ref{lem:kxortreewidth}, at most
$(1+\frac{\eps}{2})\frac{k}{2}\log n$ (for sufficiently large $n$).
Furthermore, since the construction runs in randomized polynomial time, we have
$|\phi|=(kn)^{O(1)}$. Therefore, the running time of the whole procedure is at
most $2^{(1-\eps)(1+\frac{\eps}{2})\frac{k}{2}\log n}n^{C'}$, for some constant
$C'$.  We have $(1-\eps)(1+\frac{\eps}{2})<1-\frac{\eps}{2}$. Suppose then that
$k>k_0=\lceil \frac{8C'}{\eps}\rceil$.  Then $\frac{\eps k}{4} \ge \frac{\eps
k}{8}+C'$. We therefore have that the algorithm runs in time at most
$n^{(1-\frac{\eps }{8})\frac{k}{2}}$, disproving the hypothesis that \kxor\ is
hard.  \end{proof}

\section{\ksum\ implies \twseth}\label{sec:ksum}

Our goal in this section is to establish the following theorem:

\begin{theorem}\label{thm:ksum} (\ksum\ $\Rightarrow$ \twseth) Suppose there
exists an $\eps>0$ and an algorithm which, given a CNF formula $\phi$ and a
tree decomposition of its primal graph of width $\tw$ decides if $\phi$ is
satisfiable in time $(2-\eps)^\tw|\phi|^{O(1)}$. Then, there exist $\delta>0,
k_0>0$ and a randomized algorithm which for all $k>k_0$ solves \ksum\ on
instances with $k$ arrays each containing $n$ integers in time
$n^{(1-\delta)\frac{k}{2}}$ with one-sided error and success probability
$1-o(1)$.  In other words, if the \twseth\ is false, then the randomized \ksum\
hypothesis is false.  \end{theorem}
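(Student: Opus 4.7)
The plan is to mirror the structure of the \kxor\ proof (Theorem~\ref{thm:kxor}) from Section~\ref{sec:kxor}, but work out how to simulate each of its ingredients over the integers. As in the \kxor\ case, I would pick a \emph{main} hash function $h^*$ with range of size roughly $n^{k/2}$, many independent \emph{secondary} hash functions $h_\ell$ with much smaller range (poly-logarithmic in $n$), and for each $\ell$ construct a CNF formula $\phi_\ell$ that encodes a non-deterministic \textsc{Subset Sum}-style sweep through the (hashed) \ksum\ instance. The formulas would be glued into a single instance by adding equality clauses between the ``mid-point'' bags of the $\phi_\ell$'s that enforce agreement on the value of $h^*$ across all rounds. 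This produces a tree decomposition (not path) of width $(1+o(1))\frac{k}{2}\log n$, exactly as in Lemma~\ref{lem:kxortreewidth}.

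To instantiate this plan, I first need an integer analogue of Lemma~\ref{lem:subsetsum} and Corollary~\ref{cor:subsetsum}. The \kxor\ formula encoded the running sum bit-by-bit; here I would encode an integer (or short vector of integers) running sum with $\log t + O(1)$ bits, where $t$ is a bound on the target, using the same template of variables $x_{i,j}, s_i, y_i$ and constraints that emulate bit-addition with carries. This yields a formula of pathwidth $\log t + O(1)$ whose ``special'' bag exposes the running total at the mid-point, so that an assignment extends to a satisfying assignment iff the vector it codes is constructible both as a sum from the first $k/2$ arrays and as a sum from the last $k/2$ arrays. Since concatenating the main and secondary hash values produces a short vector, I would in fact encode a \emph{Vector Subset Sum} using one counter per coordinate; this is essentially a notational change and does not affect the asymptotics of the width.

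The main obstacle, as the overview warns, is the hashing step. Over $\mathbb{Z}$ the natural almost-linear family is Dietzfelbinger's, for which $h(x+y)$ only equals $h(x)+h(y)$ up to an additive $O(1)$ rounding error. I would absorb this by (i) inflating the range of the running-sum counter by an $O(k)$ additive slack so that the target is $h^*(0)$ plus an explicit unknown error in a bounded interval, represented by $O(\log k)$ extra variables at the mid-point bag; and (ii) ensuring the secondary hash is handled analogously. These extra variables cost only $O(\log k) = o(\log n)$ in the width. The more serious issue is the maximum-load bound that underlies Lemma~\ref{lem:kxorcorrect2}: for a single Dietzfelbinger function only an $N^{1/3}$ load guarantee is known, which is far too weak. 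I would therefore take $h^*$ to be the concatenation of $\Theta(\log n / \log\log n)$ independent copies of a base Dietzfelbinger function, each with a small range, so that the full output still has $r_0 \approx \frac{k}{2}\log n$ bits. Using pairwise independence of the base family and a Chernoff/second-moment argument over the independent copies, I expect the maximum load after this concatenation to be sub-polynomial in $N = n^{k/2}$ (any bound of the form $n^{o(k)}$ suffices). The rest of the correctness argument then tracks Lemma~\ref{lem:kxorcorrect1} and Lemma~\ref{lem:kxorcorrect2} verbatim: in the no-instance case, the ``bad'' assignments to the main-hash portion of the special bag are covered by a union bound over the at most $O(n^{k/2})$ possible values, against the $n^{-\Omega(k)}$ probability that all $\Omega(k\log n)$ secondary hashes fail to be collision-free on a small bucket.

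Putting the pieces together mirrors the proof of Theorem~\ref{thm:kxor}: a hypothetical $(2-\eps)^{\tw}|\phi|^{O(1)}$ algorithm on the constructed formula solves the original \ksum\ instance in randomized time $n^{(1-\delta)\frac{k}{2}}$ once $k$ is taken large enough to absorb the polynomial overhead in $|\phi|$ into the $\delta$ savings. The technically delicate points I expect to spend the most care on are the load bound for the concatenated main hash and the bookkeeping around the additive rounding error in the special-bag consistency clauses; both are where the \ksum\ argument genuinely diverges from its \kxor\ counterpart.
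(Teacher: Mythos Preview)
Your plan is correct and matches the paper's approach closely: the paper also builds the main hash as a concatenation of independent Dietzfelbinger hashes (it uses $\frac{\log n}{2}$ copies with $k$-bit output each rather than your $\Theta(\log n/\log\log n)$, but either parameterization works), encodes the result via a vector-subset-sum formula with carry bits, handles almost-linearity by per-coordinate additive slack (the paper realizes this by adjoining two extra arrays $B_0^\ell,B_{k+1}^\ell$ of correction vectors in $[0,k]^{\frac{\log n}{2}+1}$, which is functionally the same as your slack variables), and proves the sub-polynomial max-load bound via a Chebyshev/second-moment argument on each independent component. One small slip to fix: since the max load you obtain is only $n^{o(k)}$ rather than poly-logarithmic, the secondary-hash range must likewise be $n^{o(k)}$ (the paper takes $m'\approx k^2 n^{4\delta k}$ and uses a strengthened collision-freeness that separates hashed values by at least $2k$ to absorb the rounding error), not poly-logarithmic in $n$; this still costs only $o(k\log n)$ width, so nothing structural changes.
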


The high-level strategy we will follow is the same as for
Theorem~\ref{thm:kxor}, but there are some additional technical obstacles we
need to overcome. In particular, the first ingredient will once again be a
reduction from (a variant of) \textsc{Subset Sum} with a target value that can
be encoded with $t$ bits to \textsc{SAT} on instances with pathwidth $t$. It is
therefore crucial to make $t$ small (that is, at most roughly $\frac{k}{2}\log
n$), which in turn seems to imply that we need to edit the given \ksum\
instance so that the largest absolute value is also of the order of $n^{k/2}$.

In general it is known that an arbitrary \ksum\ instance can be reduced (via a
randomized algorithm) to an equivalent one where integers have absolute values
$O(n^k)$ (Lemma 2.2 of \cite{DalirrooyfardLS25}). This is achieved by applying
an almost-linear hash function to the input integers. The almost-linearity of
the function guarantees that if $k$ integers sum to zero the sum of their
hashed values will also be almost zero, while it can be shown that the
probability of false positives ($k$ integers whose sum is non-zero, but whose
hashed values sum to zero) is small. We will use this approach, based on an
almost-linear (in fact, almost-affine) hash function due to Dietzfelbinger
\cite{Dietzfelbinger96,Dietzfelbinger18}.

The problem we are faced with now is that we cannot afford to allow the
integers of the hashed instance to have a range of size $n^k$, but we rather
need the range to be in the order of $n^{k/2}$. However, setting the range of
the hash function to be so small will almost surely produce many false
positives. 

Recall the strategy we used to deal with this problem for
Theorem~\ref{thm:kxor}: we reformulated the problem as a \textsc{List
Disjointness} question (if $L,R$ are the sets of at most $n^{k/2}$ integers
constructible from the first and last $k/2$ arrays respectively, is $L\cap
R\neq \emptyset$?) which can be solved by examining hashed values; then, a
purported solution is encoded by picking a specific hash value $y$ such that
$x_1\in L, x_2\in R$ and $h(x_1)=h(x_2)=y$; a key fact now was that even though
false positives exist in the hashed instance, once we fix a specific $y$ the
number of false positive pairs for this particular $y$ can be guaranteed to be
small ($\log ^{O(1)}n$), using a result of \cite{AlonDMPT99}
(Theorem~\ref{thm:alon}).  We then used a second-level hash function to
eliminate these (few) collisions.

The obstacle is now that a result analogous to Theorem~\ref{thm:alon} is not
known for the almost-affine hash functions we wish to use. Indeed, in our
setting we essentially want to hash a set of at most $N=n^{k/2}$ integers (the
sums constructible from the first $k/2$ arrays) into a range of size roughly
$N=n^{k/2}$. Whereas for a linear hash function of the type of
Theorem~\ref{thm:alon} this will with high probability produce a maximum load
poly-logarithmic in $N$, for almost-linear hash functions dealing with
integers, such as the ones we use, the best known upper bound on the maximum
load in $O(N^{1/3})$ given by Knudsen \cite{Knudsen19} and it is a known open
problem whether this can be improved further (Westover's work nicely summarizes
what is known \cite{abs-2307-13016}).  Unfortunately, this bound on the maximum
load is definitely not good enough for our purposes, because our plan is to use
a second-level hash function whose range will be quadratic in the maximum load,
but we cannot afford to increase the number of output bits (and therefore the
treewidth) by another $\log (N^{2/3}) = \frac{2k}{3}\log n$ bits. 

Thinking a little more carefully we observe that what we need for the strategy
of Theorem~\ref{thm:kxor} to work is a bound on the maximum load that is
stronger than what is known for our hash function, but could be weaker than the
bound of Theorem~\ref{thm:alon}. In particular, it would be good enough to show
that the maximum load produced by hashing $N$ elements in a range of size $N$
is at most sub-polynomial (that is, $N^{o(1)}$). Our main effort in this
section is then devoted to constructing an almost-affine integer hash function
which achieves this maximum load.

Let us then outline how our construction of an almost-affine hash function
achieving small maximum load works. The basic idea is simple: rather than using
Dietzfelbinger's hash function to map $N=n^{k/2}$ integers to integers with
$\frac{k}{2}\log n$ bits, we select independently $\frac{\log n}{2}$ hash
functions each of which produces as output an integer on $k$ bits. In other
words, we hash integers to integer \emph{vectors} of dimension $\frac{\log
n}{2}$, where each coordinate has $k$ bits. To give some intuition why we do
this, recall that Dietzfelbinger's hash function maps elements in a pairwise
independent way.  This fact is of little help if we map $N$ items into $N$
buckets, as the maximum load of a pairwise independent hash function could be
$\Theta(\sqrt{N})$; however, if we map $N$ elements into a small number (say
$2^k$) buckets, the maximum load is $\frac{N}{2^k}+O(\sqrt{N})\le
(1+o(1))\frac{N}{2^k}$, that is, with high probability every bit we are using
is decreasing the load by a factor of almost $2$. We can then analyze the
performance of the concatenation of the hash functions by using the fact that
each function is selected independently, allowing us to show that with high
probability most hash functions will work as expected. The precise details of
this approach are given in Lemma~\ref{lem:alon2}.

Given this approach we still need to solve a few secondary problems to apply the
strategy we used for \kxor. Namely, we need to adapt the reduction from
\textsc{Subset Sum} to \textsc{SAT} so that it works for vector sums (this is
straightforward); and we need to refine our analysis to take into account that
the functions we use are not linear but almost-affine. We review the basic
tools we will need in Section~\ref{sec:sumtools}; we then describe our
reduction from \ksum\ to \textsc{SAT} in Section~\ref{sec:ksumreduction}; and
finally we prove the correctness of our reduction, including the claims about
the maximum load of our main hash function, in Section~\ref{sec:ksumproof}.

\subsection{Tools}\label{sec:sumtools}

In this section we first give another variation of the reduction from
\textsc{Subset Sum} to \textsc{SAT} (similar to Lemma~\ref{lem:subsetsum})
which this time is able to handle integer vectors. We then review some facts we
will need about the hash functions we will be using, including how Chebyshev's
inequality and pairwise independence can be used to provide concentration
bounds on the load of any bucket (Lemma~\ref{lem:chebyshev}).

\subsubsection{Subset Sum to Pathwidth SAT}

Similarly to Section~\ref{sec:xorsat} we present a variant of the reduction
from \textsc{Subset Sum} to \textsc{SAT} of \cite{Lampis25} which is adapted to
our construction. In particular, we will focus on a version of \textsc{Subset
Sum} where the elements are vectors with integer coordinates, that is, elements
of $[2^u]^d$, where each integer coordinate is made up of $u$ bits and we have
$d$ coordinates for each vector. The question is whether it is possible to
select one vector from each set, so that their component-wise sum (modulo
$2^u$) is equal to a given target $t\in[2^u]^d$. The lemma below states that it
is possible to encode a \textsc{Subset Sum} instance into a CNF formula with a
path decomposition such that the last bag contains $du$ variables with the
following property: if we read these variables as encoding an element $t\in
[2^u]^d$, then the satisfying assignments of the formula correspond exactly to
the sums $t$ which can be realized in the instance.

For technical reasons, we formulate this lemma in a more general way, allowing
for the ranges of the integers in each component of the given vectors to be
distinct. We will therefore assume that we are dealing with $d$-dimensional
vectors, where for each $j\in[d]$ the integers of the $j$-th coordinate are in
the range $[0,2^{u_j}-1]$.

\begin{lemma}\label{lem:subsetsum2} Let $d$ be a positive integer,
$u_1,u_2,\ldots,u_d$ be $d$ positive integers and suppose we are given $\ell$
arrays $A_1,\ldots, A_\ell$, each containing a collection of elements of
$[0,2^{u_1}-1]\times[0,2^{u_2}-1]\times\ldots\times[0,2^{u_d}-1]$.  There
exists a polynomial-time algorithm which produces a formula $\psi$ of pathwidth
$(\sum_{j\in[d]}u_j)+O(1)$ and a path decomposition of $\psi$ such that the
last bag of the decomposition $B_0$ contains $t=\sum_{j\in[d]}u_j$ variables
$x_1,\ldots,x_{t}$ and such that we have the following: for each truth
assignment $\sigma$ to the variables of $B_0$, $\sigma$ can be extended to a
satisfying assignment of $\psi$ if and only if there exist
$i_1,i_2,\ldots,i_\ell$ such that $\sum_{j\in[\ell]} A_j[i_j] =
\left((\sigma(x_1)\sigma(x_2)\ldots\sigma(x_{u_1})),(\sigma(x_{u_1+1})\ldots\sigma(x_{u_1+u_2})),\ldots,(\sigma(x_{t-u_d+1})\ldots\sigma(x_{t}))\right)$,
where additions are component-wise and modulo $2^{u_j}$ for the $j$-th
component, $j\in[d]$.  \end{lemma}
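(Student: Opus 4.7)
The plan is to generalize the proof of Lemma~\ref{lem:subsetsum} in a direct way, with the only substantive change being that XOR-addition per bit is replaced by integer addition with carry bits in each coordinate. I would start with the base case $\ell=1$ and then glue constructions together as in Corollary~\ref{cor:subsetsum}. For the base case, introduce for each $i\in[0,n]$, $j\in[d]$, $b\in[u_j]$ a variable $x_{i,j,b}$ encoding the $b$-th bit of the $j$-th coordinate of the running sum after processing the first $i$ elements of $A_1$; add selection variables $s_i$ and indicator variables $y_i$ to enforce ``at most one element chosen'' exactly as in Lemma~\ref{lem:subsetsum}; and force $x_{0,j,b}=0$ for all $j,b$ in the base case (these constraints are dropped in the inductive gluing). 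The special bag $B_0$ in the statement will be the bag listing $x_{n,j,b}$ for all $j\in[d], b\in[u_j]$, read in the prescribed order, so that an assignment to $B_0$ encodes the vector that is claimed to be the total sum.

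The only new ingredient over Lemma~\ref{lem:subsetsum} is encoding addition modulo $2^{u_j}$ in coordinate $j$ when $s_i=1$. For this I would use a standard ripple-carry encoding: for each $i\in[n], j\in[d]$ introduce auxiliary carry variables $c_{i,j,0},c_{i,j,1},\ldots,c_{i,j,u_j-1}$ with $c_{i,j,0}=0$, and for each bit $b\in[u_j]$ add a constant number of CNF clauses encoding the full-adder relation among $x_{i-1,j,b}$, the (constant) $b$-th bit of $A_1[i]$ in coordinate $j$, the incoming carry $c_{i,j,b-1}$, the output bit $x_{i,j,b}$, and the outgoing carry $c_{i,j,b}$. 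When $s_i=0$ we instead enforce $x_{i,j,b}=x_{i-1,j,b}$ as before. Since we work modulo $2^{u_j}$, the outgoing carry out of bit $u_j-1$ is simply discarded, which makes the encoding of modular arithmetic effortless.

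For the path decomposition I would mimic the construction in Lemma~\ref{lem:subsetsum}. Begin with snapshot bags $B_i$ ($i\in[0,n]$) each containing all the $t=\sum_j u_j$ variables $x_{i,j,b}$; each adjacent pair $B_{i-1},B_i$ also temporarily contains $s_i,y_i$. Between $B_{i-1}$ and $B_i$ I would insert a sequence of intermediate bags that processes each coordinate $j$ in turn and within it each bit $b$ from least to most significant: at the step handling bit $(j,b)$ the bag holds the current snapshot, plus $s_i,y_i$, plus the two adjacent carry variables $c_{i,j,b-1},c_{i,j,b}$, plus simultaneously $x_{i-1,j,b}$ and $x_{i,j,b}$ during the swap, after which $x_{i-1,j,b}$ and $c_{i,j,b-1}$ are forgotten. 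Since at any instant we keep only $O(1)$ extra variables beyond the $t$ snapshot bits, the width is $t+O(1)=(\sum_j u_j)+O(1)$, and every clause of the full-adder is covered by some bag. The correctness argument is the same as in Lemma~\ref{lem:subsetsum}: satisfying assignments are in bijection with choices of a single $i$ together with an initial snapshot $x_{0,j,b}$, and the unique extension forces $x_{n,j,b}$ to equal the value of the initial snapshot plus $A_1[i]$ componentwise modulo the appropriate powers of two.

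The inductive step is identical to the one in Lemma~\ref{lem:subsetsum}: build the formula for $A_1,\ldots,A_{\ell-1}$; build the base-case formula for $A_\ell$ but drop the clauses pinning $x_{0,j,b}$ to zero; and identify the $x_{0,j,b}$ of the $A_\ell$-formula with the variables of the last bag of the $A_1,\ldots,A_{\ell-1}$-decomposition, preserving the coordinate ordering. Pathwidth and correctness are preserved by construction. The only real obstacle is getting the full-adder encoding to cost only $O(1)$ additional width, and this is handled by processing one bit at a time so that only the two carries and the two copies of the bit being swapped coexist in a bag; everything else is a routine generalization of the XOR case.
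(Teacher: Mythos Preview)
Your proposal is correct and follows essentially the same approach as the paper's proof: both reduce to the $\ell=1$ base case, reuse the $s_i,y_i$ machinery from Lemma~\ref{lem:subsetsum} verbatim, introduce per-bit carry variables to implement a ripple-carry adder in each coordinate when $s_i=1$, and extend the path decomposition of Lemma~\ref{lem:subsetsum} by carrying along the two relevant carry bits during each one-bit swap so that the width increases only by an additive constant. The paper's proof is in fact a terser sketch of exactly what you wrote.
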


\begin{proof}

The proof is essentially the same as for Lemma~\ref{lem:subsetsum}, so we
sketch the details and only focus on the differences. As in
Lemma~\ref{lem:subsetsum}, we will describe the construction for $\ell=1$; the
general construction can be obtained inductively by gluing together a formula
for the first $\ell-1$ arrays with the formula constructed for $A_\ell$.
Suppose then for the remainder that we have a single array $A$ containing $n$
vectors.

Similarly to Lemma~\ref{lem:subsetsum} we construct the variables
$x_{i,(j_1,j_2)}$ for $i\in[0,n], j_1\in[d], j_2\in[u_{j_1}]$; $s_i$ for
$i\in[n]$; and $y_i$ for $i\in[n]$. The intuitive meanings of these variables
are the same, namely, for a given $i\in[0,n]$, the $t=\sum_{j_1\in[d]} u_{j_1}$
variables $x_{i,(j_1,j_2)}$ encode the value of the sum we have constructed
after considering the first $i$ elements of the array; $s_i$ is set to $1$ if
we select the $i$-th element of the array; and $y_i$ is $1$ if at least one
element has been selected among the first $i$ elements of the array.

The clauses involving the $y_i$ and $s_i$ variables are identical to
Lemma~\ref{lem:subsetsum}. Furthermore, again as in Lemma~\ref{lem:subsetsum}
we add clauses ensuring that if $s_i$ is $0$, then
$x_{i-1,(j_1,j_2)}=x_{i,(j_1,j_2)}$, for all $j_1\in[d], j_2\in[u_{j_1}]$.

The main difference with the proof of Lemma~\ref{lem:subsetsum} is the clauses
we add to cover the case when $s_i$ is set to $1$. In this case we have to
ensure that the value encoded by $x_{i,(j_1,j_2)}$ is equal to the value
encoded by $x_{i-1,(j_1,j_2)}$ plus $A[i]$ and this is slightly more
complicated than in Lemma~\ref{lem:subsetsum} because rather than bitwise XOR
we have to compute integer additions, which means we have to take into account
the calculation of carry bits.

For each $i\in[n]$ and for each $j_1\in[d]$ we construct $u_{j_1}$ ``carry''
variables $c_{i,(j_1,j_2)}$, $j_2\in[u_{j_1}]$. We denote by $A[i]_{j_1,j_2}$
the $j_2$-th least significant bit of the $j_1$-th coordinate of $A[i]$. We add
clauses ensuring that if $s_i$ is set to $1$, then for all $j_1\in[d]$ we have
(i) $x_{i,(j_1,1)} = x_{i-1,(j_1,1)}+A[i]_{j_1,1}$ (ii) $c_{i,(j_1,1)} =
x_{i-1,(j_1,1)}\land A[i]_{j_1,1}$ (ii) for all $j_2\in[2,u_{j_1}]$ we ensure
that $x_{i,(j_1,j_2)} = x_{i-1,(j_1,j_2)}+A[i]_{j_1,j_2}+c_{i,(j_1,j_2-1)}$ and
$c_{i,(j_1,j_2)} = 1$ if and only if at least two of $x_{i-1,(j_1,j_2)},
A[i]_{j_1,j_2}, c_{i,(j_1,j_2-1)}$ are set to $1$. 

We now observe that the clauses added above implement the desired property,
namely, that when $s_i$ is set to $1$ we must give values to
$x_{i-1,(j_1,j_2)}$ and $x_{i,(j_1,j_2)}$ such that the difference of their
encoded values is $A[i]$. Furthermore, it is not hard to see that we can build
a path decomposition of the new instance in a manner similar to that of
Lemma~\ref{lem:subsetsum} with the only difference that in the bags that
contain $x_{i-1,(j_1,j_2)}$ and $x_{i,(j_1,j_2)}$ we also add the carry
variables $c_{i,(j_1,j_2-1)}$ and $c_{(i,j_1,j_2)}$.  \end{proof}

We now obtain a corollary similar to Corollary~\ref{cor:subsetsum} by gluing
together the formulas representing the first half and second half of the arrays
of a \ksum\ instance.

\begin{corollary}\label{cor:subsetsum2} Let $d$ be a positive integer,
$u_1,\ldots,u_d$ be $d$ positive integers and suppose we are given $k$ arrays
(for $k$ even) $A_1,\ldots, A_k$, each containing a collection of integer
vectors from $[0,2^{u_1}-1]\times
[0,2^{u_2}-1]\times\ldots\times[0,2^{u_d}-1]$.  There exists a polynomial-time
algorithm which produces a formula $\psi$ of pathwidth
$(\sum_{j\in[d]}u_j)+O(1)$ and a path decomposition of $\psi$ such that some
``special'' bag of the decomposition $B^*$ contains $t=\sum_{j\in[d]}u_j$
variables $x_1,\ldots,x_{t}$ and such that we have the following: for each
truth assignment $\sigma$ to the variables of $B^*$, $\sigma$ can be extended
to a satisfying assignment of $\psi$ if and only if there exist
$i_1,i_2,\ldots,i_k$ such that $\sum_{j\in[k/2]} A_j[i_j] =
\left((\sigma(x_1)\sigma(x_2)\ldots\sigma(x_{u_1})),(\sigma(x_{u_1+1})\ldots\sigma(x_{u_1+u_2})),\ldots,(\sigma(x_{t-u_d+1})\ldots\sigma(x_{t}))\right)
= \sum_{j\in[k/2+1,k]} A_j[i_j]$.  Therefore, $\psi$ is satisfiable if and only
if the given \ksum\ instance has a solution.\end{corollary}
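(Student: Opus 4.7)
The strategy is a direct mimic of the proof of Corollary~\ref{cor:subsetsum}, using Lemma~\ref{lem:subsetsum2} as the building block. Concretely, I would invoke Lemma~\ref{lem:subsetsum2} twice, once on the sub-instance $A_1,\ldots,A_{k/2}$ and once on $A_{k/2+1},\ldots,A_k$, with the same dimension $d$ and the same coordinate widths $u_1,\ldots,u_d$. This produces two CNF formulas $\psi_1,\psi_2$, each of pathwidth $(\sum_{j\in[d]}u_j)+O(1)$, together with path decompositions whose last bags $B_0^{(1)},B_0^{(2)}$ each contain an ordered list of $t=\sum_{j\in[d]}u_j$ variables with the sum-encoding property guaranteed by the lemma.

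The next step is the gluing. I would rename the variables of the two formulas so that they are disjoint except on the two distinguished bags: identify the $r$-th variable of $B_0^{(1)}$ with the $r$-th variable of $B_0^{(2)}$ for every $r\in[t]$, respecting the ordering specified by Lemma~\ref{lem:subsetsum2}. The formula $\psi$ is then the conjunction of the (renamed) $\psi_1$ and $\psi_2$; the path decomposition is obtained by concatenating the decomposition of $\psi_1$ (ending at $B_0^{(1)}$) with the decomposition of $\psi_2$ read in reverse (starting at $B_0^{(2)}$), and merging the identified terminal bags into a single bag $B^*$. Because the two sub-formulas share only the variables of $B^*$ (and hence no clause of $\psi$ mixes variables from $\psi_1$ with variables from $\psi_2$), this is a valid path decomposition, and its width is still $(\sum_{j\in[d]}u_j)+O(1)$.

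For correctness, fix an assignment $\sigma$ to the variables of $B^*$. Since $\psi_1$ and $\psi_2$ share exactly these variables, $\sigma$ extends to a satisfying assignment of $\psi$ if and only if it extends to a satisfying assignment of each of $\psi_1$ and $\psi_2$ independently. Applying Lemma~\ref{lem:subsetsum2} to each side, this holds if and only if the vector read off from $\sigma$ according to the ordering of $B^*$ is simultaneously realisable as $\sum_{j\in[k/2]}A_j[i_j]$ and as $\sum_{j\in[k/2+1,k]}A_j[i_j]$ (with the sums in each coordinate taken modulo $2^{u_{j_1}}$). Taking the existential quantifier over $\sigma$ gives satisfiability of $\psi$ iff the \ksum\ instance admits a collision between sums from the two halves, which is exactly the original problem.

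There is essentially no obstacle here: the only thing one needs to be careful about is that the renaming respects the ordering on $B^*$ so that both applications of Lemma~\ref{lem:subsetsum2} interpret the assignment as the same integer vector, and that no spurious edges are introduced between private variables of $\psi_1$ and $\psi_2$ (which is automatic since we add no clauses during gluing). Polynomial running time is inherited from two invocations of Lemma~\ref{lem:subsetsum2}.
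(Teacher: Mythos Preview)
Your proposal is correct and follows essentially the same approach as the paper: invoke Lemma~\ref{lem:subsetsum2} on each half of the arrays, identify the ordered variables of the two terminal bags, and merge the path decompositions at that bag to form $B^*$. The paper's own proof is in fact terser (it simply refers back to the proof of Corollary~\ref{cor:subsetsum}), so your added detail about reversing one decomposition and the correctness argument is fine but not required.
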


\begin{proof} Identically to the proof of Corollary~\ref{cor:subsetsum}, we
produce formulas $\phi_1,\phi_2$ from $A_1,\ldots,A_{k/2}$ and
$A_{k/2+1},\ldots,A_k$ respectively, and then identify the variables of the
last bags, obtaining the special bag $B^*$.  \end{proof}

\subsubsection{Linear and Affine Hash Functions}\label{sec:hashsum}

The main hash function we will rely on is the following function due to
Dietzfelbinger \cite{Dietzfelbinger96,Dietzfelbinger18}. Let
$U=\{0,\ldots,u-1\}$ be a universe, $M=\{0,\ldots,m-1\}$ a target range, and
$r=um$, where $u,m$ are positive integers with $m<u$. Then we define the hash
function $h_{a,b}(x) = \left\lfloor \frac{(ax+b) \bmod r}{u} \right\rfloor$ and
the class of hash functions $\mathcal{H}_{u,m} = \{ h_{a,b}\ |\ 0\le a,b <r
\}$. The following theorem was shown (in a more general form) in
\cite{Dietzfelbinger96,Dietzfelbinger18}:

\begin{theorem}\label{thm:dietz} If $u,m$ are powers of $2$, then
$\mathcal{H}_{u,m}$ is a pairwise independent family of hash functions, that
is, for all distinct $x_1,x_2\in U$ and for all $y_1,y_2\in M$ we have
$Pr_{h\in\mathcal{H}_{u,m}}[ h(x_1)=y_1\land h(x_2)=y_2] = \frac{1}{|M|^2}$.
\end{theorem}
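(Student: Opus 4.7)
The plan is to directly count the number of pairs $(a,b)\in[0,r)^2$ such that both $h_{a,b}(x_1)=y_1$ and $h_{a,b}(x_2)=y_2$ hold, and to show this count is exactly $u^2$; since the family has size $r^2=u^2m^2$, the claimed probability $1/m^2$ falls out. I would start by rewriting each condition as the existence of an integer $c_i$ in the interval $[y_iu,(y_i+1)u)$ with $ax_i+b\equiv c_i\pmod r$. Thus I want to enumerate quadruples $(c_1,c_2,a,b)$ satisfying both congruences for some valid choice of intervals, using the fact that once $(c_1,a)$ are fixed, $b\equiv c_1-ax_1\pmod r$ is uniquely determined (and automatically yields $ax_2+b\equiv c_2\pmod r$ by the subtraction step below).

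Subtracting the two congruences gives $a d\equiv c_1-c_2\pmod r$, where $d:=x_1-x_2$ is a nonzero integer with $|d|<u$. Because $u$ and $r=um$ are powers of $2$, I would write $d=2^s d'$ with $d'$ odd; the bound $|d|<u$ forces $2^s\le u/2$, so in particular $2^s$ strictly divides $u$. A standard fact about linear congruences modulo a power of $2$ then tells me that $a d\equiv e\pmod r$ has exactly $2^s$ solutions in $a\in[0,r)$ when $2^s\mid e$, and none otherwise. This is the step I would be most careful about; the hypothesis that $u,m$ are powers of $2$ is used here in an essential way (in contrast to the argument, say, when $r$ is prime, where one would get a single solution).

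For the final count, I fix $y_1,y_2$ and count valid pairs $(c_1,c_2)$. Each of the two intervals $[y_iu,(y_i+1)u)$ has length $u$, and within each, the residues modulo $2^s$ are equidistributed because $2^s\mid u$. Hence, for every $c_1$ in the first interval there are exactly $u/2^s$ values $c_2$ in the second interval with $2^s\mid c_1-c_2$, giving $u^2/2^s$ valid pairs $(c_1,c_2)$. Each such pair contributes $2^s$ choices of $a$ and then a unique $b$, for a total of $(u^2/2^s)\cdot 2^s=u^2$ pairs $(a,b)$. Dividing by $r^2=u^2m^2$ yields $1/m^2$, which is what is claimed.

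The main obstacle, as already noted, is the counting of solutions of $ad\equiv e\pmod r$ modulo a power of $2$; everything else is bookkeeping. The proof also suggests how robust the statement is: if one replaced "power of $2$" by an arbitrary modulus, the same outline would still work but one would have to replace $2^s$ by $\gcd(d,r)$ and check a mild divisibility alignment of the interval lengths with this gcd, which is exactly what the power-of-$2$ hypothesis makes automatic.
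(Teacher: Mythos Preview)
The paper does not actually give a proof of this theorem; it is stated with a citation to Dietzfelbinger \cite{Dietzfelbinger96,Dietzfelbinger18} and used as a black box. So there is no ``paper's own proof'' to compare against.

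That said, your counting argument is correct and is essentially the standard proof. The only spot where I would tighten the exposition is the bijection bookkeeping: you phrase the count as ``pairs $(c_1,c_2)$ times choices of $a$ times a unique $b$'', and it is worth making explicit that this does not overcount. The point is that any $(a,b)$ determines $c_i=(ax_i+b)\bmod r$ uniquely, so distinct triples $(c_1,c_2,a)$ (with $b$ determined by $c_1$ and $a$) give distinct pairs $(a,b)$. With that remark the count $u^2$ is airtight. Your observation that $\gcd(d,r)=2^s$ because $r$ is a power of $2$ and $d'$ is odd is exactly the place where the power-of-$2$ hypothesis enters, and your closing comment about what would change for general moduli (replacing $2^s$ by $\gcd(d,r)$ and needing $\gcd(d,r)\mid u$) correctly identifies the generalization proved in Dietzfelbinger's papers.
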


In the remainder, whenever we use this family of hash functions we will always
have that $u,m$ are powers of $2$.  Observe that Theorem~\ref{thm:dietz}
immediately implies that the family $\mathcal{H}_{u,m}$ is $1$-universal, so
Theorem~\ref{thm:wegman} applies. However, using Chebyshev's inequality, we can
also establish the following:

\begin{lemma}\label{lem:chebyshev}  Let $u,m$ be powers of $2$. Then, for all
$\delta>0$, $S\subseteq U$ and $y\in M$ we have $Pr_{h\in\mathcal{H}_{u,m}}[
|h^{-1}(y)\cap S| \ge (1+\delta)\frac{|S|}{|M|} ] \le \frac{|M|}{\delta^2|S|}$
\end{lemma}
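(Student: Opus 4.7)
The plan is a standard application of Chebyshev's inequality, with the pairwise independence guaranteed by Theorem~\ref{thm:dietz} playing the crucial role in bounding the variance. I would set up the indicator random variables, compute the expectation, use pairwise independence to kill the covariance terms, and then plug into Chebyshev.

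More precisely, for each $x \in S$ let $X_x$ be the indicator of the event $h(x) = y$, and let $X = \sum_{x \in S} X_x = |h^{-1}(y) \cap S|$. First I would observe that Theorem~\ref{thm:dietz} implies the $1$-marginal $\Pr[h(x) = y] = 1/|M|$ for every $x \in U, y \in M$ (by summing the pairwise probability over the value of $h(x_2)$, for any choice of a second element $x_2 \neq x$). Hence $E[X_x] = 1/|M|$ and by linearity $E[X] = |S|/|M|$, which matches the quantity we are comparing against.

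Next I would bound the variance. For any single $x$, $\mathrm{Var}(X_x) = E[X_x^2] - E[X_x]^2 \le E[X_x] = 1/|M|$ since $X_x \in \{0,1\}$. For distinct $x_1, x_2 \in S$, Theorem~\ref{thm:dietz} gives $\Pr[h(x_1) = y \wedge h(x_2) = y] = 1/|M|^2 = E[X_{x_1}]E[X_{x_2}]$, so $\mathrm{Cov}(X_{x_1}, X_{x_2}) = 0$. Consequently $\mathrm{Var}(X) = \sum_{x \in S} \mathrm{Var}(X_x) \le |S|/|M| = E[X]$.

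The conclusion is then immediate from Chebyshev's inequality:
\[
\Pr\!\left[X \ge (1+\delta)\tfrac{|S|}{|M|}\right] \le \Pr\!\left[|X - E[X]| \ge \delta E[X]\right] \le \frac{\mathrm{Var}(X)}{\delta^2 E[X]^2} \le \frac{|S|/|M|}{\delta^2 (|S|/|M|)^2} = \frac{|M|}{\delta^2 |S|}.
\]
There is no real obstacle here; the only thing to be careful about is correctly extracting the $1$-marginal from the pairwise statement of Theorem~\ref{thm:dietz} and checking that pairwise independence really suffices to zero out the covariances (which it does exactly because the joint probability factors into the product of the marginals).
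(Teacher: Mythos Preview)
Your proposal is correct and follows essentially the same approach as the paper: both write the load as a sum of Bernoulli indicators, use pairwise independence (Theorem~\ref{thm:dietz}) to conclude that the variance equals the sum of the individual variances and is bounded by $E[X]$, and then apply Chebyshev's inequality. Your extra care in deriving the $1$-marginal from the pairwise statement is a nice touch, but otherwise the arguments are the same.
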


\begin{proof} Let $Y=|h^{-1}(y)\cap S|$ be the random variable we are
interested in. We can see $Y$ as the sum of $|S|$ Bernoulli random variables
where for each $x\in S$ the corresponding variable takes value $1$ if $h(x)=y$,
that is, with probability $\frac{1}{|M|}$. We have $E[Y] = \frac{|S|}{|M|}$ and
$Var[Y] = \frac{|S|}{|M|}(1-\frac{1}{|M|}) \le E[Y]$ where we used the fact
that the variance of the sum of pairwise independent variables is equal to the
sum of their variances.  By Chebyshev's inequality we have that $Pr[ Y\ge E[Y]
+ \delta E[Y] ] \le \frac{Var[Y]}{\delta^2 (E[Y])^2} \le \frac{|M|}{\delta^2
|S|}$.  \end{proof}

Furthermore, $1$-universality will not be sufficient for our \ksum\
construction, because the functions we are using are not exactly linear (or
affine).  We will therefore need the following variation of
Theorem~\ref{thm:wegman} which states informally that if the range we are
mapping is $k^2$ times larger than what is needed for Theorem~\ref{thm:wegman},
then not only do we have no collisions but in fact distinct keys are mapped
into buckets at distance at least $k$ from each other.

\begin{lemma}\label{lem:wegman2} Let $u,m$ be powers of $2$ and $k$ a positive
integer. Then, for all sets $S\subseteq [0,u-1]$ with $|S| \le
\frac{\sqrt{m}}{k}$ we have the following:
$Pr_{h\in\mathcal{H}_{u,m}}[\max_{y\in[0,m-1]} |S\cap (\bigcup_{j\in[0,k-1]}
h^{-1}( (y+j) \bmod m))|>1]  \le \frac{1}{2}$.  \end{lemma}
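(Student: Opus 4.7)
The lemma generalizes Theorem~\ref{thm:wegman} (which is essentially the $k=1$ case) from ``no two elements of $S$ collide'' to ``no two elements of $S$ fall in a common window of $k$ consecutive buckets.'' My plan is to prove it via a direct union bound over pairs of elements, relying only on the pairwise independence guaranteed by Theorem~\ref{thm:dietz}.

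The first step is to recast the event. The event $\max_{y} |S \cap \bigcup_{j \in [0,k-1]} h^{-1}((y+j) \bmod m)| > 1$ holds iff there exist two distinct elements $x_1, x_2 \in S$ and some $y \in [0,m-1]$ such that both $h(x_1)$ and $h(x_2)$ lie in the cyclic interval $\{y, y+1, \ldots, y+k-1\} \pmod m$. Equivalently, there exist distinct $x_1, x_2 \in S$ whose cyclic distance $d_m(h(x_1), h(x_2)) := \min(|h(x_1)-h(x_2)|, m - |h(x_1)-h(x_2)|)$ is at most $k-1$. Note that the hypothesis $|S| \le \sqrt{m}/k$ (which is only meaningful when $|S| \ge 2$) forces $m \ge 4k^2$, so the cyclic interval has no wrap-around issues.

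The second step is to bound the probability for a fixed pair. For any distinct $x_1, x_2 \in S$, Theorem~\ref{thm:dietz} gives $\Pr[h(x_1) = y_1 \land h(x_2) = y_2] = 1/m^2$ for every pair $(y_1, y_2) \in M^2$. The number of such pairs with $d_m(y_1, y_2) \le k-1$ is exactly $m(2k-1)$, since for each of the $m$ choices of $y_1$ there are $2k-1$ admissible values of $y_2$ (namely $y_1, y_1 \pm 1, \ldots, y_1 \pm (k-1)$, all taken modulo $m$). Therefore
\[
\Pr_h[d_m(h(x_1), h(x_2)) \le k-1] = \frac{m(2k-1)}{m^2} = \frac{2k-1}{m}.
\]

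The third (and final) step is the union bound. There are at most $\binom{|S|}{2} \le |S|^2/2 \le m/(2k^2)$ unordered pairs of distinct elements in $S$, so
\[
\Pr_h\!\left[\max_{y} \left|S \cap \bigcup_{j \in [0,k-1]} h^{-1}((y+j) \bmod m)\right| > 1\right] \le \frac{m}{2k^2} \cdot \frac{2k-1}{m} = \frac{2k-1}{2k^2} \le \frac{1}{2},
\]
where the last inequality holds for every $k \ge 1$ since $2k-1 \le k^2$. Since the argument is a straightforward adaptation of the standard ``birthday-style'' calculation used for Theorem~\ref{thm:wegman}, I do not expect any serious obstacle; the only subtlety is keeping the cyclic distance interpretation consistent, which is why the quantitative slack $|S| \le \sqrt{m}/k$ (rather than $|S| \le \sqrt{m}$) is needed to absorb the factor $2k-1$ from counting pairs within a window instead of exact collisions.
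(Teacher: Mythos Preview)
Your proof is correct and follows essentially the same approach as the paper: both argue via pairwise independence (Theorem~\ref{thm:dietz}) and a union bound over unordered pairs from $S$. Your reformulation in terms of cyclic distance is slightly cleaner than the paper's nested union bound (first over $(j_1,j_2)\in[0,k-1]^2$, then over $y\in[0,m-1]$), yielding a per-pair probability of $(2k-1)/m$ rather than the paper's $k^2/m$, but the structure of the argument is identical.
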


\begin{proof}

Fix a $y\in[0,m-1]$ and consider the values $(y+j)\bmod m$, for $j\in[0,k-1]$.
Consider two distinct elements $x_1,x_2\in S$. The probability that
$h(x_1)=y+j_1\bmod m$ and $h(x_2)=y+j_2\bmod m$, for some given
$j_1,j_2\in[0,k-1]$ is at most $\frac{1}{m^2}$, by Theorem~\ref{thm:dietz}. By
union bound, the probability that there exist $j_1,j_2\in[0,k-1]$ such that
$h(x_1)=y+j_1\bmod m$ and $h(x_2)=y+j_2\bmod m$ is at most $\frac{k^2}{m^2}$.
Again by union bound, the probability that there exist distinct $x_1,x_2$ such
that there exist $y_1,y_2$ as stated is at most ${|S|\choose 2}\frac{k^2}{m^2}
\le \frac{|S|^2}{2} \frac{k^2}{m^2} \le \frac{1}{2m}$. Taking a union bound
over the $m$ possible values of $y$ completes the proof.  \end{proof}

It will at some point be more convenient to work with the linear functions of
$\mathcal{H}_{u,m}$, that is, the function $h_{a,b}(x)$ with $b=0$. Given a
function $h(x)=\lfloor \frac{(ax+b)\bmod r}{u}\rfloor$ we define $\hat{h}(x) =
\lfloor \frac{ (ax)\bmod r}{u} \rfloor$, that is, $\hat{h}$ is the modification
of $h(x)$ where we have removed the offset $b$. The functions $\hat{h}$ are
almost-linear in the following sense:

\begin{observation}\label{obs:almost} For all $h\in\mathcal{H}_{u,m}$ and
$x_1,x_2\in U$ we have that $ \hat{h}(x_1+x_2)\equiv \hat{h}(x_1) +
\hat{h}(x_2)+ \kappa' \mod m $, where $k'\in\{0,1\}$.  \end{observation}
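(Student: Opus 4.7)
The plan is a direct calculation that tracks two separate ``carries'' that can arise when unfolding the definition of $\hat h$. Recall $\hat h(x) = \lfloor (ax \bmod r)/u \rfloor$ where $r = um$. I would first factor $ax_1$ and $ax_2$ with respect to $r$, writing $ax_i = q_i r + s_i$ with $s_i \in [0,r)$, so that $s_i = (ax_i) \bmod r$. Then $a(x_1+x_2) = (q_1+q_2)r + (s_1+s_2)$, and since $s_1 + s_2 \in [0,2r)$ we have $(a(x_1+x_2))\bmod r = s_1 + s_2 - \kappa r$ for a unique $\kappa \in \{0,1\}$.

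Next I would further decompose $s_i = \hat h(x_i)\, u + t_i$ with $t_i \in [0,u)$, so that $\hat h(x_i) = \lfloor s_i/u\rfloor$. Then
\[
\hat h(x_1+x_2) \;=\; \left\lfloor \frac{s_1+s_2 - \kappa r}{u} \right\rfloor \;=\; \hat h(x_1) + \hat h(x_2) + \left\lfloor \frac{t_1+t_2}{u} \right\rfloor - \kappa m,
\]
using $r = um$ in the last step. Since $t_1+t_2 \in [0,2u)$, the inner floor is a value $\kappa'' \in \{0,1\}$. Reducing modulo $m$ kills the $-\kappa m$ term regardless of whether $\kappa$ is $0$ or $1$, leaving
\[
\hat h(x_1+x_2) \;\equiv\; \hat h(x_1) + \hat h(x_2) + \kappa'' \pmod{m},
\]
which is precisely the claim with $\kappa' = \kappa''$.

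The argument is short and essentially a bookkeeping exercise, so I do not anticipate a genuine obstacle. The only subtle point worth stating carefully is that two independent ``overflows'' occur---one when reducing modulo $r$ in the numerator and one when taking the floor division by $u$---but only the latter survives modulo $m$, because the former contributes an integer multiple of $m$. This is exactly why the almost-linearity error is bounded by $1$ rather than by something larger, and it is the key property the subsequent reduction in Section~\ref{sec:ksum} will lean on.
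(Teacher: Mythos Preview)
Your proof is correct and follows essentially the same approach as the paper: both track the error coming from the floor division by $u$ (your $\kappa''$) and observe that the discrepancy coming from the outer reduction modulo $r$ contributes only a multiple of $m$, which vanishes in the final congruence. Your version is a bit more explicit in naming the two carries separately, but the underlying decomposition is the same.
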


\begin{proof}

Let $\hat{h}(x)=\lfloor \frac{(ax)\bmod r}{u}\rfloor$.  Then, $\hat{h}(x_1+x_2)
= \lfloor \frac{(ax_1+ax_2)\bmod r}{u} \rfloor$, while
$\hat{h}(x_1)+\hat{h}(x_2)= \lfloor \frac{(ax_1)\bmod r}{u} \rfloor + \lfloor
\frac{(ax_2)\bmod r}{u} \rfloor = \lfloor \frac{(ax_1)\bmod r + (ax_2)\bmod
r}{u} \rfloor + \kappa$, where $\kappa\in \{0,1\}$. We now observe that the
difference between $(ax_1)\bmod r+(ax_2)\bmod r$ and $(ax_1+ax_2)\bmod r$ is an
integer multiple of $r$; therefore, dividing this difference by $u$ gives an
integer multiple of $m$ (since $r=mu$); therefore the two values are equal if
calculations are performed modulo $m$.  \end{proof}

\subsection{The Reduction}\label{sec:ksumreduction}

We are given a \ksum\ instance: $k$ arrays (we assume without loss of
generality that $k$ is even) $A_1,\ldots,A_k$, each containing $n$ integers.
The question we want to answer is whether there exist $i_1,\ldots,i_k$ such
that $\sum_{j\in[k]} A_j[i_j]=0$. Equivalently, if we multiply all entries of
$A_{k/2+1},\ldots,A_k$ by $-1$ the question becomes whether there exists an
integer $v$ and $i_1,\ldots, i_k$ such that $v=\sum_{j\in[k/2]} A_j[i_j] =
\sum_{j\in[k/2+1,k]} A_j[i_j]$. We will focus on the latter formulation of the
problem and assume without loss of generality that $n$ is a power of $4$.
Furthermore, by adding $\max_{j\in[k], i\in[n]}|A_j[i]|$ to all entries we can
now assume that all integers are non-negative without affecting the answer. Let
$W=\max_{j\in[k], i\in[n]}A_j[i]$ be the largest integer of the resulting
instance.

Similarly to the proof of Theorem~\ref{thm:kxor} we will now construct a
\emph{main} hash function.  However, our main hash function will now be a
composition of many hash functions with smaller range. In particular, let
$m=2^k$ and let $u$ be the smallest power of $2$ such that $kW <u$.  For
$\beta\in[\frac{\log n}{2}]$ pick independently and uniformly at random a hash
function $h^*_\beta\in\mathcal{H}_{u,m}$. The main hash function is the
concatenation $h^*=(h^*_1,h^*_2,\ldots,h^*_{\frac{\log n}{2}})$.  Observe that
if we apply $h$ to an integer in the input we obtain a vector of $\frac{\log
n}{2}$ integers, each of $k$ bits, so $\frac{k}{2}\log n$ bits of output.

Let $\delta>0$ be a small fixed constant (we will precisely define $\delta$ in
the proof of correctness in a way that will depend on the running time of the
supposed algorithm falsifying the \twseth). Let $m'$ be the smallest power of
$2$ such that $m'>4k^2n^{4\delta k}$. For each $\ell\in[10k\log n]$ pick a
(secondary) hash function $h_\ell$ independently and uniformly at random from
$\mathcal{H}_{u,m'}$.

For each $\ell\in[10k\log n]$ we now construct a distinct (hashed) instance of
\ksum, using the main hash function $h^*=(h^*_1,h^*_2,\ldots,h^*_{\frac{\log
n}{2}})$ and the secondary hash function $h_\ell$. In order to do so, we will
need to take into account that the functions we are using are in fact not
linear but almost-affine transformations, that is, each function $h_{a,b}(x)$
adds to the output a constant offset $b$ and then taking the floor of the
result may introduce a small error.

Recall that for a hash function $h_{a,b}(x) = \lfloor \frac{(ax+b)\bmod r}{u}
\rfloor$ we defined $\hat{h}_{a,b}(x) = h_{a,0}(x) = \lfloor \frac{ax\bmod
r}{u} \rfloor$, that is, the function we obtain if we remove the offset $b$.
For the main hash function $h^*$ we set
$\hat{h}^*=(\hat{h}^*_1,\ldots,\hat{h}^*_{\frac{\log n}{2}})$. We construct the
$\ell$-th \ksum\ instance by constructing $k$ arrays $B_1^\ell,\ldots,B_k^\ell$
and setting for all $j\in[k], i\in[n]$ that $B_j^\ell[i] =
(\hat{h}^*(A_j[i]),\hat{h}_\ell(A_j[i]))$. Furthermore, we add to the instance
two more arrays $B_0^\ell, B_{k+1}^\ell$ as follows. Let $b_1^*, b_2^*,\ldots,
b_{\frac{\log n}{2}}^*$ be the offset values of $h^*_1,h^*_2,\ldots,
h^*_{\frac{\log n}{2}}$ respectively, and $b_\ell$ be the offset value of
$h_\ell$. Then, we add to $B_0^\ell$ and to $B_{k+1}^\ell$ all vectors
$(\lfloor \frac{b_1^*}{u}\rfloor ,\lfloor \frac{b_2^*}{u}\rfloor,\ldots,
\lfloor\frac{b^*_{{\log n}/{2}}}{u}\rfloor,\lfloor\frac{b_\ell}{u}\rfloor)+v$,
where $v\in [0,k]^{\frac{\log n}{2}+1}$.

We now invoke Corollary~\ref{cor:subsetsum2} on each of the $[10k\log n]$
instances we constructed to produce $10k\log n$ CNF formulas and corresponding
decompositions of pathwidth $\frac{k}{2}\log n+4\delta k\log n+2\log k+O(1)$,
where the bound on the pathwidth follows because the output of the main hash
function consists of $\frac{\log n}{2}$ integers on $k$ bits and the output of
each secondary hash function is an integer on $\log m' \le \log (8k^2n^{4\delta
k})$ bits. Finally, in the same way as for \kxor, for each $\ell\in[10k\log
n-1]$ we add constraints between the first $\frac{k}{2}\log n$ variables of the
special bag $B^*_\ell$ and the special bag $B^*_{\ell+1}$ to ensure that a
consistent value needs to be selected for the main hash function across all
instances. This completes the construction.

\subsection{Proof of Correctness}\label{sec:ksumproof}

We now present the proof of correctness of the construction of the previous
section, following the same outline as for Theorem~\ref{thm:kxor}. The main
difference is that before establishing the ``hard'' part of the reduction (if
the formula is satisfiable, then with high probability there is a \ksum\
solution, given in Lemma~\ref{lem:ksumred2}) we need to prove that the expected
maximum load of our main hash function is sub-polynomial, as promised. This is
done in Lemma~\ref{lem:alon2}.

\begin{lemma}\label{lem:ksumtw} The algorithm of
Section~\ref{sec:ksumreduction} produces a CNF formula with primal treewidth at
most $\frac{k}{2}\log n + 4\delta k\log n +O(\log k)$, as well as a
decomposition of this width.  \end{lemma}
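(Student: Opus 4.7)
The plan is to mirror the strategy used for Lemma~\ref{lem:kxortreewidth}, accounting only for the slightly different sizes that arise in the \ksum\ construction. First, I would invoke Corollary~\ref{cor:subsetsum2} on each of the $10k\log n$ hashed \ksum\ instances produced in Section~\ref{sec:ksumreduction} to obtain formulas $\phi_1,\ldots,\phi_{10k\log n}$, each equipped with a path decomposition whose width I would bound explicitly. Recall that the vectors in each $B_j^\ell$ live in $[0,m-1]^{\frac{\log n}{2}}\times[0,m'-1]$; since $m=2^k$ and $m'$ is the smallest power of $2$ exceeding $4k^2 n^{4\delta k}$, the sum $\sum_{j} u_j$ appearing in Corollary~\ref{cor:subsetsum2} is at most $\frac{\log n}{2}\cdot k+\log m'\le \frac{k}{2}\log n+4\delta k\log n+2\log k+O(1)$. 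The two ``padding'' arrays $B_0^\ell$ and $B_{k+1}^\ell$ used to absorb the offsets contribute no additional variables to the bag widths since their entries lie in the same space. Hence each individual path decomposition has the claimed width up to an additive $O(\log k)$.

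Next I would take the disjoint union of all $10k\log n$ path decompositions and splice them into a single tree decomposition that also covers the consistency clauses added in the final step of the construction. The situation is identical to that of Lemma~\ref{lem:kxortreewidth}: for each $\ell\in[10k\log n-1]$ the consistency clauses form a matching of arity-two clauses between the first $\frac{k}{2}\log n$ variables of $B^*_\ell$ (those encoding the main hash function output) and the corresponding variables of $B^*_{\ell+1}$. To cover these, I would insert between $B^*_\ell$ and $B^*_{\ell+1}$ a path of $2\cdot \frac{k}{2}\log n=k\log n$ bags that gradually swaps each variable $x_j^\ell$ for $x_j^{\ell+1}$: starting from $B^*_\ell$, add the variable $x_j^{\ell+1}$ (covering the corresponding equality clause), then drop $x_j^\ell$, and iterate. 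In each intermediate bag only one extra variable is present compared to $B^*_\ell$, so the width increases by at most $1$.

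Finally, I would verify that the resulting object is indeed a valid tree decomposition (the bag containing each clause is present, and the subtree of bags containing any variable is connected, since variables introduced in the swapping paths are introduced and dropped contiguously). The overall width is dominated by the pathwidth of the individual $\phi_\ell$ decompositions, namely $\frac{k}{2}\log n+4\delta k\log n+2\log k+O(1)$, which absorbs into $\frac{k}{2}\log n+4\delta k\log n+O(\log k)$. Since everything can be carried out in polynomial time, this yields both the bound and the explicit decomposition.

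There is no genuine technical obstacle here; the only thing to be careful about is getting the bit-count for $\log m'$ right (namely that $\log m'=4\delta k\log n+O(\log k)$) and checking that the splicing operation between successive $B^*_\ell$'s preserves the tree decomposition axioms, both of which are routine.
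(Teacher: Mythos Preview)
Your proposal is correct and follows exactly the paper's approach: the paper's proof simply invokes Corollary~\ref{cor:subsetsum2} to get the pathwidth bound $\frac{k}{2}\log n + 4\delta k\log n + O(\log k)$ on each $\phi_\ell$ and then defers the handling of the consistency constraints to the splicing argument of Lemma~\ref{lem:kxortreewidth}, which is precisely what you spell out in more detail.
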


\begin{proof} As mentioned, when we invoke the algorithm of
Corollary~\ref{cor:subsetsum2} we obtain path decompositions of width
$\frac{k}{2}\log n + 4\delta k\log n +O(\log k)$. The extra constraints we
added to ensure consistency between instances can be handled in a tree
decomposition in the same way as in the proof of Lemma~\ref{lem:kxortreewidth}.
\end{proof}

\begin{lemma}\label{lem:ksumred1} If the original \ksum\ instance on which the
algorithm of Section~\ref{sec:ksumreduction} is applied has a solution, then
the resulting formula is satisfiable. \end{lemma}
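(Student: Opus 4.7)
The plan is to mimic the proof of Lemma~\ref{lem:kxorcorrect1}, with the added subtlety that our hash functions are almost-affine rather than linear. Given a \ksum\ solution, say indices $i_1,\ldots,i_k$ witnessing $v := \sum_{j\in[k/2]} A_j[i_j] = \sum_{j\in[k/2+1,k]} A_j[i_j]$, I would set, for every $\ell\in[10k\log n]$, the first $\frac{k}{2}\log n$ variables of the special bag $B^*_\ell$ to encode $h^*(v)$, and the remaining $\log m'$ variables to encode $h_\ell(v)$. Because $h^*(v)$ does not depend on $\ell$, this assignment automatically satisfies the consistency constraints added between consecutive special bags in the final step of the construction.

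It then remains to show, for each $\ell$, that this partial assignment extends to a satisfying assignment of the sub-formula $\phi_\ell$. By Corollary~\ref{cor:subsetsum2}, this reduces to checking that the target vector $(h^*(v), h_\ell(v))$ is realizable as a sum over both halves of the $(k+2)$-array instance $B_0^\ell,\ldots,B_{k+1}^\ell$. For the left half I would pick indices $i_1,\ldots,i_{k/2}$ from $B_1^\ell,\ldots,B_{k/2}^\ell$, together with a carefully chosen element from the correction array $B_0^\ell$. Applying Observation~\ref{obs:almost} inductively gives, for each main-hash coordinate $\beta\in[\frac{\log n}{2}]$, a relation of the form $\sum_{j=1}^{k/2}\hat{h}^*_\beta(A_j[i_j]) \equiv \hat{h}^*_\beta(v) - \kappa_{L,\beta} \pmod{m}$, where $\kappa_{L,\beta}\in[0, k/2-1]$ accumulates one carry per addition. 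A similar direct computation of the definitions of $h$ and $\hat{h}$ shows that $h^*_\beta(v) \equiv \hat{h}^*_\beta(v) + \lfloor b^*_\beta/u \rfloor + \kappa_\beta \pmod{m}$ for some $\kappa_\beta\in\{0,1\}$ coming from the carry produced when adding the offset. The $\beta$-th coordinate of the chosen element of $B_0^\ell$ therefore needs to equal $\lfloor b^*_\beta/u \rfloor + (\kappa_{L,\beta}+\kappa_\beta)$, and since $\kappa_{L,\beta}+\kappa_\beta \in [0,k/2]\subseteq [0,k]$, such a vector is available in $B_0^\ell$ by construction. The secondary-hash coordinate is handled identically, and the right half proceeds symmetrically using $B_{k+1}^\ell$.

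The main technical obstacle I foresee is the careful bookkeeping of the two distinct sources of rounding error: the per-addition error from Observation~\ref{obs:almost}, which accumulates to at most $k/2-1$ over the summation, and the single carry incurred when replacing $h$ by its linear part $\hat{h}$. The correction arrays $B_0^\ell, B_{k+1}^\ell$ were defined with their range $[0,k]$ precisely to absorb this combined error in each coordinate, so once the rounding analysis is pinned down the verification essentially reduces to observing that $(k/2-1)+1 \le k$, with room to spare.
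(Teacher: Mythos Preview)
Your proposal is correct and follows essentially the same approach as the paper: set the special-bag variables to $(h^*(v),h_\ell(v))$, invoke Corollary~\ref{cor:subsetsum2}, and use Observation~\ref{obs:almost} to bound the accumulated rounding error so that a suitable correction vector from $B_0^\ell$ (respectively $B_{k+1}^\ell$) exists. Your bookkeeping is in fact slightly more explicit than the paper's, which bundles the offset carry and the summation carries into a single error in $[0,k-1]$ without separating them.
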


\begin{proof} 

This lemma is slightly more complicated to establish than
Lemma~\ref{lem:kxorcorrect1} because the hash functions we use are not exactly
linear, but the floor operation may introduce a small error. This is the reason
why we added to our construction the extra arrays $B_0^{\ell}, B_{k+1}^\ell$,
which allow us to add to the input a desired error correction value.

Let us be more precise. Suppose there is a solution $i_1,i_2,\ldots,i_k$ to the
given \ksum\ instance such that $\sum_{j\in[k/2]}A_j[i_j] =
\sum_{j\in[k/2+1,k]} A_j[i_j] = v$. We will attempt to construct a satisfying
assignment by giving a value encoding $h^*(v)$ to the first $\frac{k}{2}\log n$
variables of the special bags of the $10k\log n$ formulas. This satisfies the
consistency constraints added in the end, so we need to show that for all
$\ell\in[10k\log n]$ this can be extended to a satisfying assignment of the
formula representing the $\ell$-th \ksum\ instance.

We complete the assignment to the special bag $B^*_\ell$ by giving value
$h_\ell(v)$ to the remaining variables. We now claim there exist $i_0$ and
$i_{k+1}$ such that $\sum_{j\in[0,k/2]}B_j^\ell[i_j] =
\sum_{j\in[k/2,k+1]}B_j^\ell[i_j] = (h^*(v),h_\ell(v))$. In that case, there
exists a satisfying assignment by Corollary~\ref{cor:subsetsum2}.

A potential difficulty now is that because the functions $\hat{h}$ are not
exactly linear, the sum $\sum_{j\in[k/2]}B^{\ell}_j[i_j]$ may not exactly
correspond to the target value because (i) we have used the functions $\hat{h}$
to produce the entries of $B^{\ell}_j$, so the offset of our hash functions is
missing (ii) the functions $\hat{h}$ are not exactly linear.

However, by invoking Observation~\ref{obs:almost} we can see that the functions
$\hat{h}$ have the property that for any $k$ integers $x_1,x_2,\ldots,x_k$ we
have $\hat{h}(\sum_{i\in[k]} x_i) \equiv \sum_{i\in[k]} \hat{h}(x_i) + \kappa'
\mod m$, where $\kappa'\in [0,k-1]$. Because $B^\ell_0$ contains entries which
are equal to the missing offset vectors plus any vector from $[0,k]^{\frac{\log
n}{2}+1}$, it is always possible to select a vector $B^{\ell}_0[i_0]$ (and
similarly from $B^{\ell}_{k+1}$) so that we can construct the target sum and
hence satisfy the formula.  \end{proof} 

%
%
%
%

Before we proceed with Lemma~\ref{lem:ksumred2} which establishes the more
challenging direction of the reduction we need to establish that the main hash
function we are using has the desired property that with high probability no
bucket has very high load. This is shown in the following lemma which states
that if we select $\frac{\log n}{2}$ hash functions with $k$ bits of output
each (as we do in our construction) the probability that there exists a set of
size $n^{\delta k}$ which collide on all hash functions is very small. In other
words, the expected maximum load for our main hash function is $n^{o(1)}$ with
high probability.

\begin{lemma}\label{lem:alon2} For all $S\subseteq [u]$ with $|S|\le n^{k/2}$
and for all $\delta>0$ we have the following: suppose we select independently
and uniformly at random $\frac{\log n}{2}$ hash functions
$h_1,\ldots,h_{\frac{\log n}{2}}\in \mathcal{H}_{u,m}$ with $m=2^k$. Then, the
probability there exists $(y_1,y_2,\ldots,y_{\frac{\log n}{2}})\in
[0,2^k-1]^{\frac{\log n}{2}}$ such that $|\bigcap_{j\in\frac{\log n}{2}}
(h_j^{-1}(y_j) \cap S)| \ge n^{\delta k}$ tends to $0$ as $n$ tends to
infinity.  \end{lemma}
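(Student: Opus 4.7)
The plan is to apply the $\log n / 2$ hash functions iteratively and track the maximum bucket load using Chebyshev's inequality (Lemma~\ref{lem:chebyshev}). For each $j \in [\log n/2]$ and each tuple $(y_1,\ldots,y_j) \in [m]^j$, define $T_j(y_1,\ldots,y_j) = S \cap \bigcap_{i=1}^{j} h_i^{-1}(y_i)$; as $(y_1,\ldots,y_j)$ varies, these sets form a refinement partition of $S$, and the expected size of a terminal bucket ($j = J := \log n/2$) is $|S|/m^J \le 1$. The goal is to show that no terminal bucket is much larger than this expectation.

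First I would control a single split. For any fixed bucket $T = T_{j-1}(y_1,\ldots,y_{j-1})$, which is determined once $h_1,\ldots,h_{j-1}$ are fixed, the function $h_j$ is still uniform in $\mathcal{H}_{u,m}$, so Lemma~\ref{lem:chebyshev} with $\delta' = 1$ gives $\Pr[|T \cap h_j^{-1}(y_j)| \ge 2|T|/m] \le m/|T|$ for each $y_j$; a union bound over $m$ sub-buckets bounds the per-split failure by $m^2/|T|$. Call a bucket \emph{heavy} if it contains at least $n^{\delta k}$ elements; since heavy buckets are disjoint, at most $|S|/n^{\delta k} \le n^{k/2 - \delta k}$ of them exist at any step. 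A further union bound over all heavy buckets across all $J$ steps shows that the probability that some heavy bucket splits badly is $O(4^k \cdot n^{k/2 - 2\delta k} \cdot \log n)$, which tends to $0$ as $n \to \infty$ provided $\delta k$ is sufficiently large relative to~$1/4$.

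Conditioned on the favorable event that no heavy bucket splits badly, I would conclude by tracing ancestors. If some terminal bucket $T_J(y_1,\ldots,y_J)$ had size $\ge n^{\delta k}$, then every ancestor $T_j(y_1,\ldots,y_j)$ along the descending chain is a superset, hence also heavy; under the favorable event, each step then shrinks the size by a factor of at least $m/2$. Iterating back to $T_0 = S$ yields $|S| \ge (m/2)^J \cdot n^{\delta k} = n^{(k-1)/2} \cdot n^{\delta k}$, which exceeds $n^{k/2}$ whenever $\delta k > 1/2$, contradicting $|S| \le n^{k/2}$.

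The main obstacle is that Dietzfelbinger's family $\mathcal{H}_{u,m}$ is only pairwise independent (Theorem~\ref{thm:dietz}), so one is restricted to second-moment tools such as Chebyshev rather than Chernoff-type bounds. This limits the analysis to an iterative refinement whose final guarantee is of order $\sqrt{n}$, so the argument needs $k$ large enough (relative to $1/\delta$) for the target bound $n^{\delta k}$ to dominate this residual floor; this dependence is consistent with the ``sufficiently large $k$'' hypothesis in Theorem~\ref{thm:ksum}.
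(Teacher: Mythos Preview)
Your overall strategy---apply Chebyshev to each split and trace back from a large terminal bucket---is the same as the paper's, but your union bound over heavy-bucket splits is too lossy and the argument does \emph{not} establish the lemma for the full range of $\delta$.

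Concretely, your failure probability is
\[
O\bigl(J \cdot (\text{\# heavy buckets}) \cdot m^2/n^{\delta k}\bigr)
= O\bigl(\log n \cdot 4^k \cdot n^{k/2 - 2\delta k}\bigr),
\]
and this tends to $0$ if and only if $k/2 - 2\delta k < 0$, i.e.\ $\delta > 1/4$. This condition is on $\delta$ alone, not on $\delta k$: increasing $k$ does not help, contrary to what your final paragraph suggests. Since in the application (proof of Theorem~\ref{thm:ksum}) one sets $\delta = \eps/40$, which is certainly below $1/4$, the gap is fatal even for the intended use.

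The missing idea is that you are treating a \emph{single} bad split anywhere in the tree as a global failure, so you never exploit the independence of the $\frac{\log n}{2}$ hash functions beyond one level at a time. The paper instead fixes a target $(y_1,\ldots,y_{\log n/2})$ and analyzes only the chain $S_0\supseteq S_1\supseteq\cdots$ to that target. It defines $X_b=1$ when the $b$-th shrink succeeds (with a factor $(1+\delta/4)/2^k$ rather than $2/2^k$), and then \emph{allows up to $10/\delta$ failures} along the chain. If fewer than $10/\delta$ steps fail, the terminal bucket is small by the same backward-tracing you use. The probability that at least $10/\delta$ of the $\frac{\log n}{2}$ independent steps fail, each with probability $O(2^k/(\delta^2 n^{\delta k}))$, is at most
\[
\binom{\log n/2}{10/\delta}\left(\frac{2^{k+4}}{\delta^2 n^{\delta k}}\right)^{10/\delta}
= o\!\left(n^{-k}\right),
\]
which survives the union bound over the $n^{k/2}$ targets for every $\delta>0$. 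Allowing a bounded number of failures is precisely what converts the weak per-step Chebyshev bound into a sufficiently small tail, and it is the piece your argument lacks.
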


\begin{proof}

Fix a $\delta>0$ and a specific vector $(y_1,y_2,\ldots,y_{\frac{\log
n}{2}})\in [0,2^k-1]^{\frac{\log n}{2}}$. We will show that for any such
vector, the probability that $|\bigcap_{j\in\frac{\log n}{2}} (h_j^{-1}(y_j)
\cap S)| \ge n^{\delta k}$ is $o(\frac{1}{n^{k/2}})$. Since there are $n^{k/2}$
such vectors, by union bound we will have that the probability that a vector
satisfying the stated property exists tends to $0$.

Let $S_0=S$ and for $b\in [\frac{\log n}{2}]$ we define a random variable $S_b$
as $S_b = |\bigcap_{j\in [b]} (h_j^{-1}(y_j) \cap S)|$. In other words, $S_b$
is the set of elements of $S$ which are hashed to a vector that is consistent
with the $b$ first components of our selected vector. We want to argue that the
probability that $|S_{\frac{\log n}{2}}|\ge n^{\delta k}$ is small.

Define for each $b\in [\frac{\log n}{2}]$ a random variable $X_b$ which is
equal to $1$ if $|S_b|\le (1+\frac{\delta}{4})\frac{|S_{b-1}|}{2^k}$. We have
that $Pr[X_b = 0] \le \frac{2^{k+4}}{\delta^2 |S_{b-1}|}$ by
Lemma~\ref{lem:chebyshev}.  We now distinguish two cases:

\begin{enumerate}

\item Suppose $\sum_{b\in[\frac{\log n}{2}]} X_b > \frac{\log n}{2} -
\frac{10}{\delta}$.  Observe that for all $b\in[\frac{\log n}{2}]$ we have
$|S_b|\le |S_{b-1}|$ and furthermore, if $X_b=1$ then $|S_b|\le
(1+\frac{\delta}{4})\frac{|S_{b-1}|}{2^k}$. We therefore have $|S_{\frac{\log
n}{2}}| \le |S| \cdot \left( \frac{1+\frac{\delta}{4}}{2^k} \right)^{\frac{\log
n}{2}-\frac{10}{\delta}} \le \left( 1+\frac{\delta}{4} \right)^{\frac{\log
n}{2}}\cdot 2^{\frac{10k}{\delta}} \le e^{\frac{\delta \log n}{8}}
2^{\frac{10k}{\delta}} \le n^{\frac{\delta}{2}}$ where we used that $|S|\le
n^{k/2}=(2^k)^{\frac{\log n}{2}}$, that $1+\frac{\delta}{4}\le
e^{\frac{\delta}{4}}$, that $e^{\log n}<n^2$, and that
$2^{\frac{10k}{\delta}}<n^{\frac{\delta}{4}}$ for sufficiently large $n$. So in
this case $S_{\frac{\log n}{2}}$ is as small as desired.

\item The remaining case is $\sum_{b\in[\frac{\log n}{2}]} X_b \le \frac{\log
n}{2} - \frac{10}{\delta}$, therefore there exist at least $\frac{10}{\delta}$
distinct variables $X_b$ which have value $0$. We claim that the probability of
this happening is very small. Indeed, because the hash functions
$h_1,\ldots,h_{\frac{\log n}{2}}$ are chosen independently, the variables $X_b,
b\in[\frac{\log n}{2}]$ are also independent. Furthermore, if $|S_{b-1}|\ge
n^{\delta k}$, then $Pr[X_b=0] \le \frac{2^{k+4}}{\delta^2 n^{\delta k}}$. If
$|S_{\frac{\log n}{2}}|\ge n^{\delta k}$, then $|S_b|\ge n^{\delta k}$ for all
$b\in[\frac{\log n}{2}]$.  Hence, if $|S_\frac{\log n}{2}|\ge n^{\delta k}$,
then there are $f$ variables $X_b$ set to $0$ with probability at most
${\frac{\log n}{2}\choose  f} \left(\frac{2^{k+4}}{\delta^2 n^{\delta
k}}\right)^f \le \left( \frac{2^{k+4}\log n}{2\delta^2 n^{\delta k}}
\right)^f$.  Since this is a decreasing function of $f$ (for $n$ sufficiently
large) and we are in the case $f\ge \frac{10}{\delta}$, we have that the
probability that $\sum_{b\in[\frac{\log n}{2}]} X_b < \frac{\log n}{2} -
\frac{10}{\delta}$ is at most $\frac{\log n}{2} \left( \frac{2^{k+4}\log
n}{2\delta^2 n^{\delta k}} \right)^{\frac{10}{\delta}} = o(\frac{1}{n^{k}})$,
as desired.

\end{enumerate}

In other words, the above analysis establishes that either almost all hash
functions are successful (in the sense that they place at most an
$(1+\frac{\delta}{4})$ fraction of the remaining items on the bucket we care
about), in which case in the end our bucket contains few items; or many hash
functions fail. However, by Lemma~\ref{lem:chebyshev} this can happen with
probability at most (roughly) $n^{-\delta k}$ if the current collection of
items is large (larger than $n^{\delta k}$), so it is very unlikely that more
than $\frac{10}{\delta}$ hash functions fail in this way.  \end{proof}

\begin{lemma}\label{lem:ksumred2} Suppose that the algorithm of
Section~\ref{sec:ksumreduction} is applied to an instance of \ksum\ that has no
solution. If $\log k<\delta k$, then the probability that a satisfiable formula
is output is $o(1)$.  \end{lemma}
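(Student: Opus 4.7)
The plan is to mirror the proof of Lemma~\ref{lem:kxorcorrect2}, handling three new difficulties specific to the \ksum\ setting: the main hash $\hat{h}^*$ is a concatenation of $(\log n)/2$ small-range maps rather than a single linear function, so Lemma~\ref{lem:alon2} (not Theorem~\ref{thm:alon}) is what controls bucket load; the per-summand carries $\kappa'$ from Observation~\ref{obs:almost}, together with the offset-plus-shift entries of $B_0^\ell, B_{k+1}^\ell$, let a satisfying assignment move the target by any vector in $[0,O(k)]^{(\log n)/2}$; and collision-freeness of the secondary hash $h_\ell$ must be strengthened to the near-collision bound of Lemma~\ref{lem:wegman2}.

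First I would set up the analogue of the $S_1, S_2$ buckets. Let $S_1, S_2 \subseteq [0, kW]$ be the sets of integers constructible as sums from $A_1, \ldots, A_{k/2}$ and from $A_{k/2+1}, \ldots, A_k$ respectively; by assumption $S_1 \cap S_2 = \emptyset$ and $|S_1|, |S_2| \le n^{k/2}$. Applying Lemma~\ref{lem:alon2} separately to $S_1$ and to $S_2$, with probability $1 - o(1)$ every bucket $y \in [0, 2^k - 1]^{(\log n)/2}$ of $\hat{h}^*$ satisfies $|(\hat{h}^*)^{-1}(y) \cap (S_1 \cup S_2)| < 2 n^{\delta k}$; I condition the rest of the argument on this event.

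Next I fix a candidate value $y$ for the first $\frac{k}{2}\log n$ variables of the special bag. The consistency clauses added in the final step of Section~\ref{sec:ksumreduction} force this same value $y$ to be used by every subformula $\phi_\ell$. Unpacking Corollary~\ref{cor:subsetsum2} together with Observation~\ref{obs:almost}, a satisfying extension of $\phi_\ell$ requires $v_1^\ell \in S_1$ and $v_2^\ell \in S_2$ (necessarily distinct, since $S_1 \cap S_2 = \emptyset$) such that $\hat{h}^*(v_1^\ell)$ and $\hat{h}^*(v_2^\ell)$ both lie within an additive shift in $[0, O(k)]^{(\log n)/2}$ of $y$ (this shift absorbing the fixed offset $\lfloor b^*/u\rfloor$, the carries from Observation~\ref{obs:almost}, and the free choice inside $B_0^\ell, B_{k+1}^\ell$), and such that $\hat{h}_\ell(v_1^\ell)$ and $\hat{h}_\ell(v_2^\ell)$ differ by at most $O(k)$. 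Let $\tilde S_y \subseteq S_1 \cup S_2$ be the union of all such shifted buckets; this is at most $O(k)^{(\log n)/2}$ buckets, each of size less than $2 n^{\delta k}$ by the conditioning, so under the hypothesis $\log k < \delta k$ we obtain $|\tilde S_y| \le 2 n^{2\delta k}$. The choice $m' > 4 k^2 n^{4\delta k}$ is calibrated precisely so that $k\,|\tilde S_y| \le \sqrt{m'}$, and Lemma~\ref{lem:wegman2} then bounds by $1/2$ the probability that $\hat{h}_\ell$ admits a pair of distinct elements of $\tilde S_y$ within distance $O(k)$. Because $\hat{h}^*$ and the $\hat{h}_\ell$ are chosen mutually independently, the probability that all $10 k \log n$ secondary hashes simultaneously admit such a near-collision is at most $2^{-10 k \log n} = n^{-10k}$. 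A union bound over the $2^{(k/2)\log n} = n^{k/2}$ candidate values of $y$ yields total failure probability $o(1)$, as desired.

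The main obstacle is this interplay between the almost-affine hashes and the maximum-load bound: one cannot simply quote Lemma~\ref{lem:wegman2} on the ``true'' bucket $(\hat{h}^*)^{-1}(y)\cap(S_1\cup S_2)$, because the arrays $B_0^\ell, B_{k+1}^\ell$ let the prover shift $y$ by any element of $[0,O(k)]^{(\log n)/2}$, inflating the effective bucket by a factor of $O(k)^{(\log n)/2}$. Absorbing this blow-up while still leaving enough slack to invoke Lemma~\ref{lem:wegman2} is precisely why the hypothesis $\log k < \delta k$ appears in the statement, and why the secondary range was set to $4k^2 n^{4\delta k}$ rather than the $O(k^2 n^{2\delta k})$ that would suffice in the purely linear \kxor\ analogue.
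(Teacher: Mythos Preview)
Your proposal is correct and follows essentially the same approach as the paper's proof: condition on the maximum-load bound from Lemma~\ref{lem:alon2}, fix a value $y$ of the main hash, enlarge the bucket to a window of $O(k)^{(\log n)/2}$ neighbouring buckets to absorb the carries and the $B_0^\ell,B_{k+1}^\ell$ shifts (using $\log k<\delta k$ to bound its size by $n^{2\delta k}$), apply Lemma~\ref{lem:wegman2} to each secondary hash, and finish with independence across the $10k\log n$ secondaries and a union bound over $y$. Your handling is in fact slightly more careful than the paper's in two places: you apply Lemma~\ref{lem:alon2} to $S_1$ and $S_2$ separately (so the hypothesis $|S|\le n^{k/2}$ is met exactly), and you make the role of the hypothesis $\log k<\delta k$ explicit.
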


\begin{proof} 

We follow a strategy similar to the proof of Lemma~\ref{lem:kxorcorrect2} with
the main difference being that we need to account for the fact that the use of
the floor function in our hash introduces some minor errors, as seen in
Observation~\ref{obs:almost}.  Suppose that the given \ksum\ instance has no
solution and let $S_1,S_2$ respectively be the sums constructible from
$A_1,\ldots,A_{k/2}$ and $A_{k/2+1},\ldots,A_k$ respectively. By assumption
$S_1\cap S_2=\emptyset$.  Also, $|S_1|,|S_2|\le n^{k/2}$.

Our main hash function is made up of $\frac{\log n}{2}$ independently chosen
hash functions, as required by Lemma~\ref{lem:alon2}. We will therefore
condition our analysis on the event that the statement of Lemma~\ref{lem:alon2}
is satisfied, that is, for all $y=(y_1,y_2,\ldots,y_{\frac{\log n}{2}})$, there
are at most $n^{\delta k}$ elements of $|S_1\cup S_2|$ mapped to $y$ by $h^*$.

Suppose that we want to find a satisfying assignment to the produced CNF
formula and observe that because of the consistency constraints added in the
last step we must select a consistent value to the $\frac{k}{2}\log n$
variables that represent the output of the main hash function $h^*$. Fix then
such an assignment $y=(y_1,\ldots,y_{\frac{\log n}{2}})$ and we will show that
the probability that this assignment can be extended to an assignment that
satisfies the formula is $o(\frac{1}{n^{k/2}})$ and the lemma will follow by
union bound.

Consider all vectors $\tilde{y}$ which are ``close'' to $y$, meaning that each
coordinate differs from the corresponding coordinate of $y$ by at most $2k$.
The set of all such vectors is at most $(4k)^{\frac{\log n}{2}} < n^{\log k}$.
Since we are conditioning on the event that no vector has more than $n^{\delta
k}$ elements of $S_1\cup S_2$ mapped to it, the total number of elements of
$S_1\cup S_2$ which are mapped to a vector $\tilde{y}$ that is close to $y$ is
at most $n^{\delta k + \log k} < n^{2\delta k}$.

We now observe that if we fix $y$ as the assignment encoding the output of
$h^*$, by Corollary~\ref{cor:subsetsum2} for all $\ell\in[10k\log n]$ in order
to produce a satisfying assignment to the formula representing the $\ell$-th
\ksum\ instance, we must select elements of $S_1,S_2$ which are mapped by $h^*$
close to $y$. This can be seen because by an analysis similar to that of
Lemma~\ref{lem:ksumred1} the sums of the hashed values differ from the hashed
value of a sum by at most $k/2$ (using Observation~\ref{obs:almost}) and the
extra elements of $B_0^{\ell}$ allow us to modify a sum by at most $k$.
Therefore, there are at most $n^{2\delta k}$ elements of $S_1\cup S_2$ which
correspond to possible satisfying assignments of the $\ell$ \ksum\ instances
(extending the assignment encoding $y$).

Now we recall that we have set up the range of the secondary hash function to
be at least $4k^2n^{4\delta k}$, meaning that, according to
Lemma~\ref{lem:wegman2} each secondary hash function is collision-free with
probability at least $\frac{1}{2}$ on the set of elements of $S_1\cup S_2$
which may lead to a satisfying assignment extending the current one. More
strongly, Lemma~\ref{lem:wegman2} establishes not only collision-freeness but
the property that any two distinct elements are mapped to values at distance
more than $2k$. If there exists a secondary hash function that achieves this
strong collision-freeness property, then the corresponding \ksum\ instance
cannot be satisfied while extending the current assignment (because using
Observation~\ref{obs:almost} we can bound the error due to non-linearity by at
most $k/2$).  We now observe that the probability that no secondary hash
function achieves the strong collision-freeness property is at most
$2^{-10k\log n}= o(\frac{1}{n^k})$.  Hence, the probability that the assignment
encoding $y$ can be extended to a satisfying assignment to the whole instance
is small, as desired.  \end{proof}

\begin{proof}[Proof of Theorem~\ref{thm:ksum}] Suppose that we have an
algorithm solving \textsc{SAT} in time $2^{(1-\eps)\tw}|\phi|^{O(1)}$. We set
$\delta=\frac{\eps}{40}$ and $k_0$ sufficiently large (to be defined later). We
now execute the algorithm of Section~\ref{sec:ksumreduction} to obtain an
instance which preserves the answer with high probability according to
Lemmas~\ref{lem:ksumred1} and \ref{lem:ksumred2}. 

Suppose the original instance has $k$ arrays of $n$ integers. We constructed
$O(k\log n)$ hashed instances, with total size $O(k^{O(\log n)}) = n^{O(\log
k)}$ (this is dominated by the arrays $B^\ell_0, B^\ell_{k+1}$ which contain a
vector for each $[0,k]^{\log n/2}$). The algorithm of
Corollary~\ref{cor:subsetsum2} runs in polynomial time, so the final formula
has size $n^{O(\log k)}$, hence the $|\phi|^{O(1)}$ factor of the running time
contributes at most $n^{C\log k}$ for some constant $C$. Let $k_0$ be
sufficiently large that $C\log k_0 < \delta k_0$. Then, the total running time
is at most $2^{(1-\eps)(\frac{k}{2}\log n+4\delta k\log n)}n^{\delta k}$ which
is at most $n^{\frac{k}{2}(1-\eps+10\delta)}$ and since $\delta$ is
sufficiently small we obtain the theorem.  \end{proof}

\section{Applications}\label{sec:applications}

As mentioned, the main application of the results of this paper is in the
fine-grained parameterized complexity of problems parameterized by treewidth.
In this area, numerous results are known proving that known dynamic programming
algorithms are optimal, under the \seth, or more recently the \ppseth.  In
particular, Lokshtanov, Marx, and Saurabh \cite{LokshtanovMS18} started a line
of work which succeeded in attacking problems for which an algorithm running in
$c^{\tw}n^{O(1)}$ was known, for $c$ a constant, and showing that an algorithm
with running time $(c-\eps)^{\tw}n^{O(1)}$ would falsify the \seth. The list of
problems for which such tight results are known is long and includes
\textsc{Independent Set}, \textsc{Dominating Set}, $k$-\textsc{Coloring},
\textsc{Max Cut} \cite{LokshtanovMS18}, \textsc{Steiner Tree}, \textsc{Feedback
Vertex Set} \cite{CyganNPPRW22}, \#-\textsc{Perfect Matching}
\cite{CurticapeanM16}, and numerous other problems
\cite{BorradaileL16,FockeMINSSW23,FockeMR22,KatsikarelisLP22,LampisV24,abs-2502-14161,MarxSS21}.
Analogous results are also known for other structural parameters, such as
clique-width and cut-width
\cite{BojikianCHK23,BojikianK24,KatsikarelisLP19,Lampis20}.

We will rely heavily on the work of Iwata and Yoshida who first put forward the
\twseth\ as a hypothesis (though not explicitly under this name)
\cite{IwataY15}. Their main result was the following:

\begin{theorem}\label{thm:iwata}

The following statements are equivalent:

\begin{enumerate}

\item (The \twseth\ is false): There exists $\eps>0$ and an algorithm solving
\textsc{SAT} in time $(2-\eps)^{\tw}|\phi|^{O(1)}$, where $\tw$ is the width of
a given tree decomposition of the primal graph of $\phi$.

\item The same as the previous statement, but for $3$-\textsc{SAT}.

\item The same as the previous statement, but for \textsc{Max-2-SAT}.

\item There exists an $\eps>0$ and an algorithm solving \textsc{Independent
Set} in time $(2-\eps)^{\tw}n^{O(1)}$.

\item There exists an $\eps>0$ and an algorithm solving \textsc{Independent
Set} in time $(2-\eps)^{\cw}n^{O(1)}$, where $\cw$ is the width of a given
clique-width expression.

\end{enumerate}

\end{theorem}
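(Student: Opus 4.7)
The plan is to establish the five equivalences by building a cycle of implications, exploiting that several directions are immediate. First, $(1)\Rightarrow(2)$ is trivial since $3$-\textsc{SAT} is a syntactic restriction of \textsc{SAT}. For the converse $(2)\Rightarrow(1)$, I would apply the standard transformation that breaks every clause of arity $>3$ into a chain of $3$-clauses using fresh auxiliary variables: a clause $(\ell_1\vee\ell_2\vee\cdots\vee\ell_r)$ becomes $(\ell_1\vee\ell_2\vee z_1)\wedge(\neg z_1\vee\ell_3\vee z_2)\wedge\cdots$. The key check is that the auxiliary variables per clause only need to appear alongside variables of that clause, so they can be inserted into a single bag of the given tree decomposition. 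This yields a tree decomposition of the $3$-\textsc{SAT} instance of essentially the same width (up to $O(1)$), and an algorithm of complexity $(2-\eps)^{\tw}|\phi|^{O(1)}$ on the $3$-\textsc{SAT} side transfers back to \textsc{SAT} with the same base.

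For $(2)\Leftrightarrow(3)$, I would use the classical Garey--Johnson gadget reducing each $3$-clause on $(x,y,z)$ to a fixed collection of $10$ two-clauses on $(x,y,z,w_C)$ with an auxiliary variable $w_C$ per clause. The maximum number of satisfied $2$-clauses determines whether the $3$-clause is satisfiable, so the reduction translates the satisfiability question into a \textsc{Max-2-SAT} threshold question. Since each $w_C$ only interacts with the variables of its own clause, I can attach it to the bag already containing those variables in the input decomposition, preserving treewidth up to $O(1)$. The converse direction encodes \textsc{Max-2-SAT} back into a $3$-\textsc{SAT} formula (with a linear arithmetic circuit for the counter, whose width interacts only locally with the input variables), again with bounded overhead in treewidth.

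For the bridge to \textsc{Independent Set}, $(3)\Rightarrow(4)$, I would use the clause-gadget reduction: for each $2$-clause build a small set of literal-vertices and join contradictory literals across clauses. A standard argument shows the treewidth of the resulting graph is bounded by $O(\tw)$, so a $(2-\eps)^{\tw}n^{O(1)}$ algorithm for \textsc{Independent Set} solves \textsc{Max-2-SAT} in the same time. Combined with $(3)\Leftrightarrow(2)\Leftrightarrow(1)$ this closes the cycle back to \textsc{SAT}. The implication $(4)\Rightarrow(5)$ is handled by the (well-known) fact that one can produce a clique-width expression of width $O(\tw)$ from a tree decomposition, while the reverse $(5)\Rightarrow(4)$ is the subtle direction: since clique-width can be much smaller than treewidth, one needs a treewidth-to-clique-width reduction \emph{on the \textsc{Independent Set} instance} that does not blow up the parameter. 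The natural route is to design a graph-theoretic gadget that, given a primal graph of treewidth $\tw$, produces a new graph on which \textsc{Independent Set} has the same answer and whose clique-width is $\tw+O(1)$, so that a $(2-\eps)^{\cw}$ algorithm immediately yields a $(2-\eps)^{\tw}$ algorithm on the original.

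The main obstacle I expect is this last step, $(5)\Rightarrow(4)$. The difficulty is that clique-width lacks the ``locality'' of tree decompositions (join nodes can relabel entire modules), so a naive reduction allows the clique-width algorithm to exploit structure not present in the tree decomposition and inflate the effective base of the exponential. Overcoming this requires constructing a graph whose \emph{optimal} clique-width expression is forced to simulate a tree-decomposition dynamic program, for instance by planting high-treewidth ``rigid'' gadgets that prevent a clique-width expression from being shorter than the treewidth. Once this construction is in place, the rest of the equivalences follow from the fairly standard width-preserving reductions outlined above.
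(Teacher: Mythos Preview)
The paper does not prove this theorem at all: it is quoted verbatim as the main result of Iwata and Yoshida \cite{IwataY15} and used as a black box. So there is no ``paper's own proof'' to compare against, and your task here was only to cite the result.

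That said, your sketch has genuine gaps that would matter if you actually had to prove the statement. The most serious one is the repeated use of ``treewidth $O(\tw)$'' where the argument requires ``treewidth $\tw+O(1)$''. In this fine-grained setting the base of the exponential is exactly what is at stake: a reduction that blows up the parameter by any multiplicative constant $c>1$ turns a $(2-\eps)^{\tw'}$ algorithm into a $(2-\eps)^{c\cdot\tw}$ algorithm, which is \emph{not} of the form $(2-\eps')^{\tw}$ for any $\eps'>0$. Every reduction in the cycle must therefore be width-preserving up to an additive constant, and you should check this explicitly rather than appeal to ``standard arguments''.

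Relatedly, your treatment of the clique-width equivalence is incorrect in both directions. For $(4)\Rightarrow(5)$ you invoke the ``well-known fact'' that one can produce a clique-width expression of width $O(\tw)$ from a tree decomposition; but the Corneil--Rotics bound is $\cw\le 3\cdot 2^{\tw-1}$, and this exponential gap is known to be essentially tight, so there is no such linear bound in general. Moreover, even if such a bound existed it would be a reduction in the wrong direction: to show $(4)\Rightarrow(5)$ you are \emph{given} a clique-width expression and must produce a tree decomposition (or an equivalent instance with one), not the other way around. A similar direction swap occurs in your ``$(3)\Rightarrow(4)$'' paragraph, where the reduction you describe (clauses to literal-vertices) actually establishes $(4)\Rightarrow(3)$. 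The Iwata--Yoshida equivalences rely on problem-specific gadget reductions that control the width additively in both directions, not on generic width comparisons.
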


The main result of this paper is that the \ksum\ and \kxor\ Hypotheses imply
the \twseth. As a result we immediately obtain that the two hypotheses imply
all the statements of Theorem~\ref{thm:iwata}. Going a bit further, we prove
the statements of Theorem~\ref{thm:applications} (stated in
Section~\ref{sec:intro}).

\begin{proof}[Proof of Theorem~\ref{thm:applications}]

The first two statements follow immediately from Theorems~\ref{thm:kxor},
\ref{thm:ksum}, and the last two statements of Theorem~\ref{thm:iwata}.

The third statement follows from the same theorems and the third statement of
Theorem~\ref{thm:iwata} via the following simple reduction from
\textsc{Max-2-SAT} to \textsc{Max-Cut}.

\begin{claim} There is a treewidth-preserving reduction from \textsc{Max-2-SAT}
to \textsc{Max-Cut}. \end{claim}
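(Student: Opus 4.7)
My plan is to use a standard gadget reduction in which each variable and each clause of the Max-2-SAT instance $\phi$ is replaced by a constant-size piece of the Max-Cut graph, and then lift any tree decomposition of the primal graph of $\phi$ to one of the Max-Cut graph with only an additive-constant loss in width.

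The construction I would use introduces two global vertices $T, F$ joined by a very heavy edge (equivalently, many parallel edges), which forces them to lie on opposite sides of any optimal cut; the side containing $T$ represents ``true.'' For each variable $x_i$ of $\phi$ I would introduce a single vertex $v_i$ whose side encodes the truth value of $x_i$. The gadget for a clause $C_j = (\ell_1 \vee \ell_2)$ is driven by the elementary identity $2\cdot[[\ell_1 \vee \ell_2]] = [[\ell_1]] + [[\ell_2]] + [[\ell_1 \neq \ell_2]]$, which lets me realize each of the three indicators on the right-hand side as the cut-indicator of a single weighted edge: a positive literal $x_i$ contributes an edge $(v_i,F)$, a negative literal $\bar x_i$ contributes an edge $(v_i,T)$, and the ``disagreement'' term contributes an edge between the two variable vertices. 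Summing over all clauses, the maximum cut equals a fixed additive constant plus the Max-2-SAT optimum, so the two problems reduce to each other.

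To establish the treewidth bound, I would take any tree decomposition $(\mathcal{T},\{B_t\})$ of the primal graph of $\phi$ of width $\tw$ and define $B'_t = \{T,F\} \cup \{v_i : x_i \in B_t\}$, which has size at most $\tw+3$. The edge $(T,F)$ and every edge incident to $T$ or $F$ is covered since $T,F$ sit in every bag, while every variable-to-variable edge coming from a clause is covered by any bag corresponding to the clique of that clause in the primal graph of $\phi$. Running intersection is inherited from the original decomposition, so the treewidth of the Max-Cut graph is $\tw + O(1)$, as required.

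The main obstacle I anticipate is the case of a mixed-polarity clause, say $(x_a \vee \bar x_b)$, where the identity above forces me to realize the ``same-side'' indicator $[[v_a = v_b]]$, and no positive-weight Max-Cut edge can produce this directly. The fix I plan to use is to introduce one clause-specific auxiliary vertex $c_j$ with two equal-weight edges to $v_a$ and $v_b$: optimizing the side of $c_j$ yields an absolute-value expression in $Y_{v_a}+Y_{v_b}$ that, after subtracting a constant, is exactly the same-side indicator. I would attach $c_j$ to a single extra leaf bag $\{T,F,v_a,v_b,c_j\}$ of constant size, hung off any bag containing both $v_a,v_b$ (which exists because $x_a,x_b$ are adjacent in the primal graph of $\phi$). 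This keeps the width at $\tw+O(1)$ and completes the treewidth-preserving reduction.
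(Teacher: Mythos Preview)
Your argument is correct and takes a genuinely different route from the paper's. The paper uses a single global vertex $s_0$ and, for each clause $(\ell_1\lor\ell_2)$, builds a uniform gadget: two literal vertices $c_{\ell_1},c_{\ell_2}$ forming a triangle with $s_0$ via weight-$1$ edges, each attached to its variable vertex by a heavy edge of weight $4m$ (for a negative literal) or a heavy length-$2$ path (for a positive one); the heavy edges force literal vertices to agree with variable vertices in any cut meeting the target, after which the triangle contributes exactly $2$ when the clause is satisfied and $0$ otherwise. Your construction instead uses two global anchors $T,F$ and the identity $2[[\ell_1\lor\ell_2]]=[[\ell_1]]+[[\ell_2]]+[[\ell_1\ne\ell_2]]$, realizing each summand as a single cut-edge, with a one-vertex auxiliary gadget for the mixed-polarity case. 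Both routes preserve treewidth up to an additive constant via the same mechanism (add the global vertices to every bag, hang a constant-size leaf bag per clause). The paper's gadget has the advantage of being uniform across all clause types with no polarity case split; yours is smaller, algebraically transparent, and avoids the heavy-edge slack argument. Either is perfectly adequate for the claim.
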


\begin{claimproof} We describe a reduction that produces an edge-weighted
instance of \textsc{Max-Cut} where parallel edges are allowed. This can be
replaced by an unweighted instance by replacing edges of weight $w$ with $w$
parallel edges of weight $1$; and this can in turn be reduced to a simple graph
by replacing all edges with paths of length $3$ through new vertices and adding
$2m$ to the target cut value, where $m$ is the number of edges.  These
transformations do not affect the treewidth of the instance.

Suppose then that we are given a 2-CNF formula $\phi$ with $n$ variables and
$m$ clauses. We construct a graph that has one vertex for each variable, as
well as a special vertex $s_0$. For each clause $c=(\ell_1\lor \ell_2)$ we
construct two vertices $c_{\ell_1}$ and $c_{\ell_2}$ and make them adjacent to
each other and to $s_0$ via edges of weight $1$. Furthermore, if $\ell_1$ is a
positive appearance of variable $x$ we connect $c_{\ell_1}$ to the vertex
representing $x$ via a path of length $2$ through a new vertex and assign
weight $2m$ to its edges; while if $\ell_1$ is a negative appearance of $x$ we
connect $c_{\ell_1}$ to $x$ via an edge of weight $4m$ (and we do the same for
$\ell_2$). If the original instance had target value $t$ we set the target for
the new instance to $8m^2+2t$.

This completes the construction and we claim that the new graph has the same
treewidth as the primal graph of $\phi$ up to a small additive constant.
Indeed, we can take the tree decomposition of the primal graph and add $s_0$
everywhere. Then, for each clause $c=(\ell_1\lor \ell_2)$ involving variables
$x_1,x_2$ we observe that there must be a bag $B$ containing $x_1,x_2$; we make
a new bag which is adjacent to $B$ and contains
$x_1,x_2,s_0,c_{\ell_1},c_{\ell_2}$ and any other (degree $2$) vertices of the
gadget of clause $c$.  Proceeding in this way we obtain a valid tree
decomposition.

Now, if there is an assignment to $\phi$ satisfying $t$ clauses, we place in
the new graph all false variable and literal vertices on the same side as $s_0$
and all other vertices on the other side. Observe that this cuts all edges of
weight $2m$ and $4m$, which contributes $8m^2$ to the cut. Furthermore, if a
clause is satisfied, at least one of the literal vertices is on the other side
of $s_0$, so the corresponding gadget contributes $2$ more and we achieve the
target cut size.

For the converse direction, observe that in a cut that achieves the target
value we must cut all edges of weight $2m$ or $4m$ because if one of them is
not cut the cut has weight at most $8m^2-2m+2t \le 8m^2$ which is below the
target. However, if all such edges are cut we can obtain an assignment to
$\phi$ by setting to $0$ all variables whose vertices are on the same side as
$s_0$ and this assignment is consistent with the assignment we obtain if we do
the same for literal vertices. Observe now that since for each clause we added
a triangle, each clause gadget contributes at most $2$ to the cut. Furthermore,
no clause gadget can contribute $1$ from its unweighted edges. Therefore, there
exist at least $t$ gadgets that contribute exactly $2$. But, this can happen
only if one of the literal variables is on a different side from $s_0$, that
is, if the corresponding clause is satisfied.  \end{claimproof}

Finally, to obtain the last statement on $k$-\textsc{Coloring} we make the
following claims, which reuse existing tools from the literature (in fact, we
simply verify that the details of previous reductions go through in our
setting):

\begin{claim} If the \twseth\ holds, then for all $\eps>0$ and $B\ge 3$,
2-\textsc{CSP} instances $\psi$ over alphabets of size $B$ cannot be solved in
time $(B-\eps)^{\tw}|\psi|^{O(1)}$. \end{claim}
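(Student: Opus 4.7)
The plan is to establish this claim via its contrapositive: a hypothetical algorithm solving 2-CSP over alphabet $B$ in time $(B-\eps)^{\tw}|\psi|^{O(1)}$ would yield a SAT algorithm of running time $(2-\eps')^{\tw}|\phi|^{O(1)}$ for some $\eps'>0$, contradicting the \twseth. The required tool is a treewidth-preserving reduction from SAT to 2-CSP, in the spirit of the pathwidth-based constructions of Lokshtanov, Marx, and Saurabh~\cite{LokshtanovMS18}; as the author remarks, the task reduces to verifying that those ideas go through in the tree (rather than path) setting. I would first invoke Theorem~\ref{thm:iwata} to start from a 3-CNF formula $\phi$ with a tree decomposition of width $\tw$, and then fix positive integers $g, g'$ satisfying $2^g \le B^{g'}$ and $g/g' > \log_2(B-\eps)$; such a pair exists for every $\eps>0$ since $\log_2(B-\eps) < \log_2 B$ and the rationals are dense in the interval $(\log_2(B-\eps), \log_2 B]$.

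Using a nice tree decomposition of $\phi$, I would greedily pack the boolean variables into groups of size $g$ so that each group's variables appear together in a contiguous subtree of bags. Each group is then represented by $g'$ CSP super-variables over $[B]$ via a fixed injection $\{0,1\}^g \hookrightarrow [B]^{g'}$, with the ``legal codeword'' condition enforced by constraints among the $g'$ super-variables of each group. By construction, the resulting 2-CSP instance inherits a tree decomposition of width at most $(\tw+1)\cdot g'/g + O(g')$, which is strictly less than $\tw/\log_2(B-\eps)$ for sufficiently large $\tw$. Clauses are encoded locally: the three variables of any 3-SAT clause lie in a common bag of the input decomposition, so their (up to) $3g'$ representing super-variables coexist in a common bag of the new decomposition, and for each clause I attach a constant-size gadget consisting of a few auxiliary CSP super-variables of domain $B$ plus a short chain of 2-CSP constraints that collectively enforce the clause. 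These auxiliaries live in freshly inserted bags adjacent to the clause's common bag, adding only a constant to the treewidth. Feeding the reduction into the hypothetical algorithm then yields a SAT algorithm running in time at most $(B-\eps)^{\tw g'/g + O(g')}\cdot|\phi|^{O(1)} \le (2-\eps')^{\tw}|\phi|^{O(1)}$ for some $\eps'>0$, contradicting the \twseth.

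The main technical obstacle lies in the clause gadget. Because the injection $\{0,1\}^g\hookrightarrow [B]^{g'}$ cannot be made coordinate-wise when $B$ is not a power of~$2$ without sacrificing the target density $g/g' \to \log_2 B$, the truth value of a single literal depends jointly on all $g'$ super-variables of its variable's group, so a 3-SAT clause is a priori a constraint of arity up to $3g'$ rather than~$2$. The standard workaround is to introduce a short path of auxiliary CSP super-variables over $[B]$ that incrementally commits to the joint value of the involved super-variables; checking that such a gadget can be expressed entirely via binary constraints and attached locally without inflating the treewidth beyond an additive constant is where the bulk of the bookkeeping lies. A secondary, easier point is that the greedy grouping from the second paragraph must respect the tree (and not merely the path) structure, which can be handled by forming groups along root-to-leaf branches of the nice tree decomposition and merging compatibly at Join nodes.
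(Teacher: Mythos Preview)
Your contrapositive strategy and the two ingredients you isolate (an alphabet change from $\{0,1\}$ to $[B]$ via grouping, followed by an arity-reduction gadget for clauses) are exactly what the paper uses. The arity-reduction step you flag as the ``main technical obstacle'' is indeed routine and is handled in the paper by the same cycle/path-of-auxiliaries trick (Lemma~12 of~\cite{Lampis25}).

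The real gap is in the step you label the ``secondary, easier point''. You propose a \emph{global} partition of the boolean variables into groups of size $g$ and then assert that the resulting 2-CSP has treewidth $(\tw+1)g'/g+O(g')$. For this bound to hold, every bag must intersect at most $(\tw+1)/g+O(1)$ groups, and your ``greedy packing along root-to-leaf branches, merging compatibly at Joins'' does not establish this. In fact the natural greedy choice (group variables in the order they are introduced) already fails on a \emph{path} decomposition: if one repeatedly introduces $g$ fresh variables and forgets all but one ``survivor'', then after $T$ rounds the current bag contains $T$ survivors, one from each of $T$ distinct groups, while the width is only $T+g-1$; taking $T$ comparable to $\tw$ makes the number of groups meeting the bag a constant fraction of $\tw$ rather than $\tw/g+O(1)$. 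Whether \emph{some} global partition with the required bag-intersection bound always exists is a separate (and not obviously easy) combinatorial statement that you have not proved.

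The paper sidesteps this entirely by using the per-bag encoding of Lemma~13 of~\cite{Lampis25}: rather than partitioning the boolean variables once and for all, one introduces, for each bag $B_t$, a fresh block of $\lceil(\tw+1)/\delta\rceil\cdot\gamma$ CSP super-variables over $[B]$ that encode the assignment to the boolean variables of $B_t$, and adds consistency constraints between the blocks of adjacent bags. The new tree decomposition simply places the block of $B_t$ (together with the block of its parent) in node $t$, giving width $O((\tw/\delta)\gamma)$ directly, with no combinatorial partition lemma needed. This works verbatim for tree decompositions because the only new feature over paths is that consistency constraints now go to every neighbouring bag rather than just the previous one; Join nodes require nothing special. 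So the fix is to replace your global grouping with this per-bag encoding; the rest of your outline then goes through.
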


\begin{claimproof} We recall that in \cite{Lampis25} the same statement was
shown for pathwidth rather than treewidth (Lemma 13 of \cite{Lampis25}). The
proof of Lemma 13 of \cite{Lampis25} in turn relies on Lemma 12, which states
that one can reduce the arity of a given \textsc{CSP} instance from an
arbitrary constant to $2$ without increasing the pathwidth.

We observe that the proof of Lemma 12 easily goes through in our case, as it
relies on replacing a constraint of high arity with a cycle on new variables
and then attaching this cycle somewhere in the decomposition where all the
variables of the original constraint appear. This is still possible in a tree
decomposition (in fact, it is easier than in a path decomposition, as we can
attach a new branch to the corresponding bag).

The proof of Lemma 13 also goes through in our case. In particular, given a
decomposition of a CNF formula, we partition the boolean variables of each bag
in groups of $\delta$ variables so that each will be represented by $\gamma$
variables of the new \textsc{CSP} instance. We will have that $B^\gamma\sim
2^\delta$. The remaining construction is identical except we have consistency
constraints between any two neighboring bags of the decomposition, but it can
be shown that this gives the desired treewidth in the same way that these
constraints are shown to have the desired pathwidth in \cite{Lampis25}.
\end{claimproof}

\begin{claim} If there exist $\eps>0, k\ge 3$ such that $k$-\textsc{Coloring}
can be solved in time $(k-\eps)^{\tw}n^{O(1)}$, then there is an algorithm that
can solve 2-\textsc{CSP} instances $\psi$ over alphabets of size $k$ in time
$(k-\eps)^{\tw}|\psi|^{O(1)}$. \end{claim}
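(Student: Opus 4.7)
The plan is to give a treewidth-preserving reduction from 2-\textsc{CSP} over alphabet $[k]$ to $k$-\textsc{Coloring}, where the treewidth of the output graph exceeds that of the primal graph of the input by only an additive function of $k$. Since $k$ is a fixed constant, the claimed $(k-\eps)^{\tw}n^{O(1)}$ algorithm for $k$-\textsc{Coloring} then translates directly into a $(k-\eps)^{\tw}|\psi|^{O(1)}$ algorithm for 2-\textsc{CSP}.

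Concretely, given a 2-\textsc{CSP} instance $\psi$ with variable set $V$ and constraint set $C$ (each constraint of the form $(x, y, R)$ with $R \subseteq [k]^2$ specifying the allowed pairs), I would construct a graph $G$ in three steps. First, I introduce a palette clique $P = \{p_1, \ldots, p_k\}$ on $k$ pairwise-adjacent vertices; in any proper $k$-coloring these vertices receive all $k$ colors, and by relabeling we may assume $p_i$ gets color $i$. Second, for each variable $v \in V$ I add a single variable vertex $u_v$ whose color in any proper coloring will encode the CSP assignment. Third, for each constraint $(x, y, R) \in C$ and each forbidden pair $(a, b) \in [k]^2 \setminus R$, I add a forbidden-pair vertex $w_{x,y,a,b}$ adjacent to $u_x$, $u_y$, and every palette vertex $p_i$ with $i \notin \{a, b\}$.

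The correctness of the gadget is the key point to verify. The adjacencies of $w_{x,y,a,b}$ to $k-2$ palette vertices restrict it to colors in $\{a, b\}$, while its adjacencies to $u_x$ and $u_y$ forbid color $a$ when $c(u_x) = a$ and color $b$ when $c(u_y) = b$. Hence $w_{x,y,a,b}$ is properly colorable if and only if we do \emph{not} simultaneously have $c(u_x) = a$ and $c(u_y) = b$, so this single vertex exactly rules out the pair $(a, b)$. It follows that $G$ admits a proper $k$-coloring if and only if there is an assignment $c : V \to [k]$ satisfying every constraint of $\psi$.

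For the treewidth, I would start from a tree decomposition of the primal graph of $\psi$ of width $\tw$ (identifying each variable $v$ with $u_v$) and add all $k$ palette vertices $P$ to every bag, raising the width by $k$. For each constraint $(x, y, R)$, some bag $B$ already contains both $u_x$ and $u_y$ because they are adjacent in the primal graph; I attach to $B$ a short path of bags, each of the form $B \cup \{w_{x,y,a,b}\}$, to introduce the (at most $k^2$) forbidden-pair vertices one by one. The resulting decomposition has width at most $\tw + k + O(1)$. Applying the hypothesized algorithm to $G$ runs in time $(k-\eps)^{\tw + k + O(1)} |G|^{O(1)} = (k-\eps)^{\tw} |\psi|^{O(1)}$, since the $(k-\eps)^{O(k)}$ factor is a constant depending only on $k$. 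The only conceptual obstacle is designing a gadget that both enforces a specified forbidden pair and introduces only a constant number of new vertices with constant-sized neighborhoods among already-present vertices, which the single-vertex construction above achieves cleanly.
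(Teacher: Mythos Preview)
Your overall plan (palette clique, variable vertices, per-forbidden-pair gadget, attach gadgets locally in the decomposition) matches the paper's approach, but your single-vertex forbidden-pair gadget is incorrect. The vertex $w_{x,y,a,b}$ is adjacent to both $u_x$ and $u_y$ in a completely symmetric way, so the gadget cannot distinguish the ordered pair $(a,b)$ from $(b,a)$. Concretely, if $a\neq b$, set $c(u_x)=b$ and $c(u_y)=a$: since $w\in\{a,b\}$ but $w\neq c(u_x)=b$ and $w\neq c(u_y)=a$, there is no valid color for $w$, so your gadget also rules out $(b,a)$. A 2-\textsc{CSP} relation $R$ need not be symmetric, so this over-restricts the instance. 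The diagonal case $a=b$ is even worse: then $w$ is adjacent to all palette vertices except $p_a$, hence forced to color $a$, which makes \emph{both} $u_x\neq a$ and $u_y\neq a$ rather than merely forbidding the simultaneous assignment $(a,a)$.

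The paper's gadget (citing Lemma~22 of \cite{Lampis25}) is instead ``essentially a short path'' from $u_x$ to $u_y$, and this asymmetry is precisely what your construction is missing. A workable version: for $(a,b)$ with $a\neq b$ pick a third color $c$, add $v_1$ adjacent to $u_x$ with list $\{a,c\}$ (via palette edges) and $v_2$ adjacent to $v_1$ and $u_y$ with list $\{b,c\}$; one checks that exactly the pair $(c(u_x),c(u_y))=(a,b)$ becomes uncolorable. A similar length-$3$ path handles $(a,a)$. These path gadgets still have constant size and attach to the same bag, so your treewidth analysis goes through unchanged once the gadget is fixed.
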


\begin{claimproof} This follows from the construction of Lemma 22 of
\cite{Lampis25}, where for each constraint of the 2-\textsc{CSP} we consider
each non-satisfying assignment and represent by a simple gadget which is
essentially a short path between the vertices representing the variables of the
constraint. This construction does not increase the pathwidth by more than an
additive constant, and it also clearly does not significantly increase the
treewidth either.  \end{claimproof}

We remark that, as in \cite{Lampis25}, it is not too hard to also obtain a
reduction in the converse direction (from $k$-\textsc{Coloring} parameterized
by treewidth to \twseth), but this is not necessary to establish
Theorem~\ref{thm:applications}.  \end{proof}



\newpage

\bibliography{ksum}

\begin{thebibliography}{10}

\bibitem{AbboudBDN18}
Amir Abboud, Karl Bringmann, Holger Dell, and Jesper Nederlof.
\newblock More consequences of falsifying {SETH} and the orthogonal vectors
  conjecture.
\newblock In Ilias Diakonikolas, David Kempe, and Monika Henzinger, editors,
  {\em Proceedings of the 50th Annual {ACM} {SIGACT} Symposium on Theory of
  Computing, {STOC} 2018, Los Angeles, CA, USA, June 25-29, 2018}, pages
  253--266. {ACM}, 2018.
\newblock \href {https://doi.org/10.1145/3188745.3188938}
  {\path{doi:10.1145/3188745.3188938}}.

\bibitem{AbboudBHS22}
Amir Abboud, Karl Bringmann, Danny Hermelin, and Dvir Shabtay.
\newblock {SETH}-based lower bounds for subset sum and bicriteria path.
\newblock {\em {ACM} Trans. Algorithms}, 18(1):6:1--6:22, 2022.
\newblock \href {https://doi.org/10.1145/3450524} {\path{doi:10.1145/3450524}}.

\bibitem{AbboudHWW16}
Amir Abboud, Thomas~Dueholm Hansen, Virginia~Vassilevska Williams, and Ryan
  Williams.
\newblock Simulating branching programs with edit distance and friends: or: a
  polylog shaved is a lower bound made.
\newblock In Daniel Wichs and Yishay Mansour, editors, {\em Proceedings of the
  48th Annual {ACM} {SIGACT} Symposium on Theory of Computing, {STOC} 2016,
  Cambridge, MA, USA, June 18-21, 2016}, pages 375--388. {ACM}, 2016.
\newblock \href {https://doi.org/10.1145/2897518.2897653}
  {\path{doi:10.1145/2897518.2897653}}.

\bibitem{AbboudL13}
Amir Abboud and Kevin Lewi.
\newblock Exact weight subgraphs and the k-sum conjecture.
\newblock In Fedor~V. Fomin, Rusins Freivalds, Marta~Z. Kwiatkowska, and David
  Peleg, editors, {\em Automata, Languages, and Programming - 40th
  International Colloquium, {ICALP} 2013, Riga, Latvia, July 8-12, 2013,
  Proceedings, Part {I}}, volume 7965 of {\em Lecture Notes in Computer
  Science}, pages 1--12. Springer, 2013.
\newblock \href {https://doi.org/10.1007/978-3-642-39206-1\_1}
  {\path{doi:10.1007/978-3-642-39206-1\_1}}.

\bibitem{AbboudWY18}
Amir Abboud, Virginia~Vassilevska Williams, and Huacheng Yu.
\newblock Matching triangles and basing hardness on an extremely popular
  conjecture.
\newblock {\em {SIAM} J. Comput.}, 47(3):1098--1122, 2018.
\newblock \href {https://doi.org/10.1137/15M1050987}
  {\path{doi:10.1137/15M1050987}}.

\bibitem{Aggarwal023}
Divesh Aggarwal and Rajendra Kumar.
\newblock Why we couldn't prove {SETH} hardness of the closest vector problem
  for even norms!
\newblock In {\em 64th {IEEE} Annual Symposium on Foundations of Computer
  Science, {FOCS} 2023, Santa Cruz, CA, USA, November 6-9, 2023}, pages
  2213--2230. {IEEE}, 2023.
\newblock \href {https://doi.org/10.1109/FOCS57990.2023.00138}
  {\path{doi:10.1109/FOCS57990.2023.00138}}.

\bibitem{AllenderCLPT14}
Eric Allender, Shiteng Chen, Tiancheng Lou, Periklis~A. Papakonstantinou, and
  Bangsheng Tang.
\newblock Width-parametrized {SAT:} time--space tradeoffs.
\newblock {\em Theory Comput.}, 10:297--339, 2014.
\newblock URL: \url{https://doi.org/10.4086/toc.2014.v010a012}, \href
  {https://doi.org/10.4086/TOC.2014.V010A012}
  {\path{doi:10.4086/TOC.2014.V010A012}}.

\bibitem{AlonDMPT99}
Noga Alon, Martin Dietzfelbinger, Peter~Bro Miltersen, Erez Petrank, and
  G{\'{a}}bor Tardos.
\newblock Linear hash functions.
\newblock {\em J. {ACM}}, 46(5):667--683, 1999.
\newblock \href {https://doi.org/10.1145/324133.324179}
  {\path{doi:10.1145/324133.324179}}.

\bibitem{abs-1810-04161}
Martin Babka.
\newblock A note on the size of largest bins using placement with linear
  transformations.
\newblock {\em CoRR}, abs/1810.04161, 2018.
\newblock URL: \url{http://arxiv.org/abs/1810.04161}, \href
  {https://arxiv.org/abs/1810.04161} {\path{arXiv:1810.04161}}.

\bibitem{BaranDP08}
Ilya Baran, Erik~D. Demaine, and Mihai P{\u{a}}tra{\c{s}}cu.
\newblock Subquadratic algorithms for 3sum.
\newblock {\em Algorithmica}, 50(4):584--596, 2008.
\newblock URL: \url{https://doi.org/10.1007/s00453-007-9036-3}, \href
  {https://doi.org/10.1007/S00453-007-9036-3}
  {\path{doi:10.1007/S00453-007-9036-3}}.

\bibitem{BelovaCKM24}
Tatiana Belova, Nikolai Chukhin, Alexander~S. Kulikov, and Ivan Mihajlin.
\newblock Improved space bounds for subset sum.
\newblock In Timothy~M. Chan, Johannes Fischer, John Iacono, and Grzegorz
  Herman, editors, {\em 32nd Annual European Symposium on Algorithms, {ESA}
  2024, September 2-4, 2024, Royal Holloway, London, United Kingdom}, volume
  308 of {\em LIPIcs}, pages 21:1--21:17. Schloss Dagstuhl - Leibniz-Zentrum
  f{\"{u}}r Informatik, 2024.
\newblock URL: \url{https://doi.org/10.4230/LIPIcs.ESA.2024.21}, \href
  {https://doi.org/10.4230/LIPICS.ESA.2024.21}
  {\path{doi:10.4230/LIPICS.ESA.2024.21}}.

\bibitem{BelovaGKMS23}
Tatiana Belova, Alexander Golovnev, Alexander~S. Kulikov, Ivan Mihajlin, and
  Denil Sharipov.
\newblock Polynomial formulations as a barrier for reduction-based hardness
  proofs.
\newblock In Nikhil Bansal and Viswanath Nagarajan, editors, {\em Proceedings
  of the 2023 {ACM-SIAM} Symposium on Discrete Algorithms, {SODA} 2023,
  Florence, Italy, January 22-25, 2023}, pages 3245--3281. {SIAM}, 2023.
\newblock URL: \url{https://doi.org/10.1137/1.9781611977554.ch124}, \href
  {https://doi.org/10.1137/1.9781611977554.CH124}
  {\path{doi:10.1137/1.9781611977554.CH124}}.

\bibitem{BelovaKMRRS24}
Tatiana Belova, Alexander~S. Kulikov, Ivan Mihajlin, Olga Ratseeva, Grigory
  Reznikov, and Denil Sharipov.
\newblock Computations with polynomial evaluation oracle: ruling out
  superlinear {SETH}-based lower bounds.
\newblock In David~P. Woodruff, editor, {\em Proceedings of the 2024 {ACM-SIAM}
  Symposium on Discrete Algorithms, {SODA} 2024, Alexandria, VA, USA, January
  7-10, 2024}, pages 1834--1853. {SIAM}, 2024.
\newblock \href {https://doi.org/10.1137/1.9781611977912.73}
  {\path{doi:10.1137/1.9781611977912.73}}.

\bibitem{BjorklundK24}
Andreas Bj{\"{o}}rklund and Petteri Kaski.
\newblock The asymptotic rank conjecture and the set cover conjecture are not
  both true.
\newblock In Bojan Mohar, Igor Shinkar, and Ryan O'Donnell, editors, {\em
  Proceedings of the 56th Annual {ACM} Symposium on Theory of Computing, {STOC}
  2024, Vancouver, BC, Canada, June 24-28, 2024}, pages 859--870. {ACM}, 2024.
\newblock \href {https://doi.org/10.1145/3618260.3649656}
  {\path{doi:10.1145/3618260.3649656}}.

\bibitem{BodlaenderGJJL22}
Hans~L. Bodlaender, Carla Groenland, Hugo Jacob, Lars Jaffke, and Paloma~T.
  Lima.
\newblock {XNLP}-completeness for parameterized problems on graphs with a
  linear structure.
\newblock In Holger Dell and Jesper Nederlof, editors, {\em 17th International
  Symposium on Parameterized and Exact Computation, {IPEC} 2022, September 7-9,
  2022, Potsdam, Germany}, volume 249 of {\em LIPIcs}, pages 8:1--8:18. Schloss
  Dagstuhl - Leibniz-Zentrum f{\"{u}}r Informatik, 2022.
\newblock URL: \url{https://doi.org/10.4230/LIPIcs.IPEC.2022.8}, \href
  {https://doi.org/10.4230/LIPICS.IPEC.2022.8}
  {\path{doi:10.4230/LIPICS.IPEC.2022.8}}.

\bibitem{BodlaenderGJPP22}
Hans~L. Bodlaender, Carla Groenland, Hugo Jacob, Marcin Pilipczuk, and Michal
  Pilipczuk.
\newblock On the complexity of problems on tree-structured graphs.
\newblock In Holger Dell and Jesper Nederlof, editors, {\em 17th International
  Symposium on Parameterized and Exact Computation, {IPEC} 2022, September 7-9,
  2022, Potsdam, Germany}, volume 249 of {\em LIPIcs}, pages 6:1--6:17. Schloss
  Dagstuhl - Leibniz-Zentrum f{\"{u}}r Informatik, 2022.
\newblock URL: \url{https://doi.org/10.4230/LIPIcs.IPEC.2022.6}, \href
  {https://doi.org/10.4230/LIPICS.IPEC.2022.6}
  {\path{doi:10.4230/LIPICS.IPEC.2022.6}}.

\bibitem{BodlaenderGNS21}
Hans~L. Bodlaender, Carla Groenland, Jesper Nederlof, and C{\'{e}}line M.~F.
  Swennenhuis.
\newblock Parameterized problems complete for nondeterministic {FPT} time and
  logarithmic space.
\newblock In {\em 62nd {IEEE} Annual Symposium on Foundations of Computer
  Science, {FOCS} 2021, Denver, CO, USA, February 7-10, 2022}, pages 193--204.
  {IEEE}, 2021.
\newblock \href {https://doi.org/10.1109/FOCS52979.2021.00027}
  {\path{doi:10.1109/FOCS52979.2021.00027}}.

\bibitem{BojikianCHK23}
Narek Bojikian, Vera Chekan, Falko Hegerfeld, and Stefan Kratsch.
\newblock Tight bounds for connectivity problems parameterized by cutwidth.
\newblock In Petra Berenbrink, Patricia Bouyer, Anuj Dawar, and
  Mamadou~Moustapha Kant{\'{e}}, editors, {\em 40th International Symposium on
  Theoretical Aspects of Computer Science, {STACS} 2023, March 7-9, 2023,
  Hamburg, Germany}, volume 254 of {\em LIPIcs}, pages 14:1--14:16. Schloss
  Dagstuhl - Leibniz-Zentrum f{\"{u}}r Informatik, 2023.
\newblock URL: \url{https://doi.org/10.4230/LIPIcs.STACS.2023.14}, \href
  {https://doi.org/10.4230/LIPICS.STACS.2023.14}
  {\path{doi:10.4230/LIPICS.STACS.2023.14}}.

\bibitem{BojikianK24}
Narek Bojikian and Stefan Kratsch.
\newblock A tight monte-carlo algorithm for steiner tree parameterized by
  clique-width.
\newblock In Karl Bringmann, Martin Grohe, Gabriele Puppis, and Ola Svensson,
  editors, {\em 51st International Colloquium on Automata, Languages, and
  Programming, {ICALP} 2024, July 8-12, 2024, Tallinn, Estonia}, volume 297 of
  {\em LIPIcs}, pages 29:1--29:18. Schloss Dagstuhl - Leibniz-Zentrum f{\"{u}}r
  Informatik, 2024.
\newblock URL: \url{https://doi.org/10.4230/LIPIcs.ICALP.2024.29}, \href
  {https://doi.org/10.4230/LIPICS.ICALP.2024.29}
  {\path{doi:10.4230/LIPICS.ICALP.2024.29}}.

\bibitem{BorradaileL16}
Glencora Borradaile and Hung Le.
\newblock Optimal dynamic program for r-domination problems over tree
  decompositions.
\newblock In Jiong Guo and Danny Hermelin, editors, {\em 11th International
  Symposium on Parameterized and Exact Computation, {IPEC} 2016, August 24-26,
  2016, Aarhus, Denmark}, volume~63 of {\em LIPIcs}, pages 8:1--8:23. Schloss
  Dagstuhl - Leibniz-Zentrum f{\"{u}}r Informatik, 2016.
\newblock \href {https://doi.org/10.4230/LIPIcs.IPEC.2016.8}
  {\path{doi:10.4230/LIPIcs.IPEC.2016.8}}.

\bibitem{CarmosinoGIMPS16}
Marco~L. Carmosino, Jiawei Gao, Russell Impagliazzo, Ivan Mihajlin, Ramamohan
  Paturi, and Stefan Schneider.
\newblock Nondeterministic extensions of the strong exponential time hypothesis
  and consequences for non-reducibility.
\newblock In Madhu Sudan, editor, {\em Proceedings of the 2016 {ACM} Conference
  on Innovations in Theoretical Computer Science, Cambridge, MA, USA, January
  14-16, 2016}, pages 261--270. {ACM}, 2016.
\newblock \href {https://doi.org/10.1145/2840728.2840746}
  {\path{doi:10.1145/2840728.2840746}}.

\bibitem{Chan20}
Timothy~M. Chan.
\newblock More logarithmic-factor speedups for 3sum, (median, +)-convolution,
  and some geometric 3sum-hard problems.
\newblock {\em {ACM} Trans. Algorithms}, 16(1):7:1--7:23, 2020.
\newblock \href {https://doi.org/10.1145/3363541} {\path{doi:10.1145/3363541}}.

\bibitem{ChanH20}
Timothy~M. Chan and Qizheng He.
\newblock Reducing 3sum to convolution-3sum.
\newblock In Martin Farach{-}Colton and Inge~Li G{\o}rtz, editors, {\em 3rd
  Symposium on Simplicity in Algorithms, {SOSA} 2020, Salt Lake City, UT, USA,
  January 6-7, 2020}, pages 1--7. {SIAM}, 2020.
\newblock \href {https://doi.org/10.1137/1.9781611976014.1}
  {\path{doi:10.1137/1.9781611976014.1}}.

\bibitem{ChanL15}
Timothy~M. Chan and Moshe Lewenstein.
\newblock Clustered integer 3sum via additive combinatorics.
\newblock In Rocco~A. Servedio and Ronitt Rubinfeld, editors, {\em Proceedings
  of the Forty-Seventh Annual {ACM} on Symposium on Theory of Computing, {STOC}
  2015, Portland, OR, USA, June 14-17, 2015}, pages 31--40. {ACM}, 2015.
\newblock \href {https://doi.org/10.1145/2746539.2746568}
  {\path{doi:10.1145/2746539.2746568}}.

\bibitem{CurticapeanM16}
Radu Curticapean and D{\'{a}}niel Marx.
\newblock Tight conditional lower bounds for counting perfect matchings on
  graphs of bounded treewidth, cliquewidth, and genus.
\newblock In Robert Krauthgamer, editor, {\em Proceedings of the Twenty-Seventh
  Annual {ACM-SIAM} Symposium on Discrete Algorithms, {SODA} 2016, Arlington,
  VA, USA, January 10-12, 2016}, pages 1650--1669. {SIAM}, 2016.
\newblock URL: \url{https://doi.org/10.1137/1.9781611974331.ch113}, \href
  {https://doi.org/10.1137/1.9781611974331.CH113}
  {\path{doi:10.1137/1.9781611974331.CH113}}.

\bibitem{CyganDLMNOPSW16}
Marek Cygan, Holger Dell, Daniel Lokshtanov, D{\'{a}}niel Marx, Jesper
  Nederlof, Yoshio Okamoto, Ramamohan Paturi, Saket Saurabh, and Magnus
  Wahlstr{\"{o}}m.
\newblock On problems as hard as {CNF-SAT}.
\newblock {\em {ACM} Trans. Algorithms}, 12(3):41:1--41:24, 2016.
\newblock \href {https://doi.org/10.1145/2925416} {\path{doi:10.1145/2925416}}.

\bibitem{CyganFKLMPPS15}
Marek Cygan, Fedor~V. Fomin, Lukasz Kowalik, Daniel Lokshtanov, D{\'{a}}niel
  Marx, Marcin Pilipczuk, Michal Pilipczuk, and Saket Saurabh.
\newblock {\em Parameterized Algorithms}.
\newblock Springer, 2015.
\newblock \href {https://doi.org/10.1007/978-3-319-21275-3}
  {\path{doi:10.1007/978-3-319-21275-3}}.

\bibitem{CyganNPPRW22}
Marek Cygan, Jesper Nederlof, Marcin Pilipczuk, Michal Pilipczuk, Johan M.~M.
  van Rooij, and Jakub~Onufry Wojtaszczyk.
\newblock Solving connectivity problems parameterized by treewidth in single
  exponential time.
\newblock {\em {ACM} Trans. Algorithms}, 18(2):17:1--17:31, 2022.
\newblock \href {https://doi.org/10.1145/3506707} {\path{doi:10.1145/3506707}}.

\bibitem{DalirrooyfardLS25}
Mina Dalirrooyfard, Andrea Lincoln, Barna Saha, and Virginia {Vassilevska
  Williams}.
\newblock Average-case hardness of parity problems: Orthogonal vectors, k-sum
  and more.
\newblock In Yossi Azar and Debmalya Panigrahi, editors, {\em Proceedings of
  the 2025 Annual {ACM-SIAM} Symposium on Discrete Algorithms, {SODA} 2025, New
  Orleans, LA, USA, January 12-15, 2025}, pages 4613--4643. {SIAM}, 2025.
\newblock \href {https://doi.org/10.1137/1.9781611978322.158}
  {\path{doi:10.1137/1.9781611978322.158}}.

\bibitem{Dietzfelbinger96}
Martin Dietzfelbinger.
\newblock Universal hashing and k-wise independent random variables via integer
  arithmetic without primes.
\newblock In Claude Puech and R{\"{u}}diger Reischuk, editors, {\em {STACS} 96,
  13th Annual Symposium on Theoretical Aspects of Computer Science, Grenoble,
  France, February 22-24, 1996, Proceedings}, volume 1046 of {\em Lecture Notes
  in Computer Science}, pages 569--580. Springer, 1996.
\newblock \href {https://doi.org/10.1007/3-540-60922-9\_46}
  {\path{doi:10.1007/3-540-60922-9\_46}}.

\bibitem{Dietzfelbinger18}
Martin Dietzfelbinger.
\newblock Universal hashing via integer arithmetic without primes, revisited.
\newblock In Hans{-}Joachim B{\"{o}}ckenhauer, Dennis Komm, and Walter Unger,
  editors, {\em Adventures Between Lower Bounds and Higher Altitudes - Essays
  Dedicated to Juraj Hromkovi{\v{c}} on the Occasion of His 60th Birthday},
  volume 11011 of {\em Lecture Notes in Computer Science}, pages 257--279.
  Springer, 2018.
\newblock \href {https://doi.org/10.1007/978-3-319-98355-4\_15}
  {\path{doi:10.1007/978-3-319-98355-4\_15}}.

\bibitem{EsmerFMR24}
Baris~Can Esmer, Jacob Focke, D{\'{a}}niel Marx, and Pawel Rzazewski.
\newblock Fundamental problems on bounded-treewidth graphs: The real source of
  hardness.
\newblock In Karl Bringmann, Martin Grohe, Gabriele Puppis, and Ola Svensson,
  editors, {\em 51st International Colloquium on Automata, Languages, and
  Programming, {ICALP} 2024, July 8-12, 2024, Tallinn, Estonia}, volume 297 of
  {\em LIPIcs}, pages 34:1--34:17. Schloss Dagstuhl - Leibniz-Zentrum f{\"{u}}r
  Informatik, 2024.
\newblock URL: \url{https://doi.org/10.4230/LIPIcs.ICALP.2024.34}, \href
  {https://doi.org/10.4230/LIPICS.ICALP.2024.34}
  {\path{doi:10.4230/LIPICS.ICALP.2024.34}}.

\bibitem{FischerK024}
Nick Fischer, Piotr Kaliciak, and Adam Polak.
\newblock Deterministic 3sum-hardness.
\newblock In Venkatesan Guruswami, editor, {\em 15th Innovations in Theoretical
  Computer Science Conference, {ITCS} 2024, January 30 to February 2, 2024,
  Berkeley, CA, {USA}}, volume 287 of {\em LIPIcs}, pages 49:1--49:24. Schloss
  Dagstuhl - Leibniz-Zentrum f{\"{u}}r Informatik, 2024.
\newblock URL: \url{https://doi.org/10.4230/LIPIcs.ITCS.2024.49}, \href
  {https://doi.org/10.4230/LIPICS.ITCS.2024.49}
  {\path{doi:10.4230/LIPICS.ITCS.2024.49}}.

\bibitem{FockeMINSSW23}
Jacob Focke, D{\'{a}}niel Marx, Fionn~Mc Inerney, Daniel Neuen, Govind~S.
  Sankar, Philipp Schepper, and Philip Wellnitz.
\newblock Tight complexity bounds for counting generalized dominating sets in
  bounded-treewidth graphs.
\newblock In Nikhil Bansal and Viswanath Nagarajan, editors, {\em Proceedings
  of the 2023 {ACM-SIAM} Symposium on Discrete Algorithms, {SODA} 2023,
  Florence, Italy, January 22-25, 2023}, pages 3664--3683. {SIAM}, 2023.
\newblock URL: \url{https://doi.org/10.1137/1.9781611977554.ch140}, \href
  {https://doi.org/10.1137/1.9781611977554.CH140}
  {\path{doi:10.1137/1.9781611977554.CH140}}.

\bibitem{FockeMR22}
Jacob Focke, D{\'{a}}niel Marx, and Pawel Rzazewski.
\newblock Counting list homomorphisms from graphs of bounded treewidth: tight
  complexity bounds.
\newblock In {\em Proceedings of the 2022 {ACM-SIAM} Symposium on Discrete
  Algorithms, {SODA} 2022}, pages 431--458. {SIAM}, 2022.
\newblock \href {https://doi.org/10.1137/1.9781611977073.22}
  {\path{doi:10.1137/1.9781611977073.22}}.

\bibitem{FredmanKS84}
Michael~L. Fredman, J{\'{a}}nos Koml{\'{o}}s, and Endre Szemer{\'{e}}di.
\newblock Storing a sparse table with {$O(1)$} worst case access time.
\newblock {\em J. {ACM}}, 31(3):538--544, 1984.
\newblock \href {https://doi.org/10.1145/828.1884}
  {\path{doi:10.1145/828.1884}}.

\bibitem{GajentaanO95}
Anka Gajentaan and Mark~H. Overmars.
\newblock On a class of {$O(n^2)$} problems in computational geometry.
\newblock {\em Comput. Geom.}, 5:165--185, 1995.
\newblock \href {https://doi.org/10.1016/0925-7721(95)00022-2}
  {\path{doi:10.1016/0925-7721(95)00022-2}}.

\bibitem{GoldS17}
Omer Gold and Micha Sharir.
\newblock Improved bounds for 3sum, k-sum, and linear degeneracy.
\newblock In Kirk Pruhs and Christian Sohler, editors, {\em 25th Annual
  European Symposium on Algorithms, {ESA} 2017, September 4-6, 2017, Vienna,
  Austria}, volume~87 of {\em LIPIcs}, pages 42:1--42:13. Schloss Dagstuhl -
  Leibniz-Zentrum f{\"{u}}r Informatik, 2017.
\newblock URL: \url{https://doi.org/10.4230/LIPIcs.ESA.2017.42}, \href
  {https://doi.org/10.4230/LIPICS.ESA.2017.42}
  {\path{doi:10.4230/LIPICS.ESA.2017.42}}.

\bibitem{GronlundP18}
Allan Gr{\o}nlund and Seth Pettie.
\newblock Threesomes, degenerates, and love triangles.
\newblock {\em J. {ACM}}, 65(4):22:1--22:25, 2018.
\newblock \href {https://doi.org/10.1145/3185378} {\path{doi:10.1145/3185378}}.

\bibitem{HartmannM25}
Tim~A. Hartmann and D{\'{a}}niel Marx.
\newblock Independence and domination on bounded-treewidth graphs: Integer,
  rational, and irrational distances.
\newblock In Olaf Beyersdorff, Michal Pilipczuk, Elaine Pimentel, and Kim~Thang
  Nguyen, editors, {\em 42nd International Symposium on Theoretical Aspects of
  Computer Science, {STACS} 2025, March 4-7, 2025, Jena, Germany}, volume 327
  of {\em LIPIcs}, pages 44:1--44:19. Schloss Dagstuhl - Leibniz-Zentrum
  f{\"{u}}r Informatik, 2025.
\newblock URL: \url{https://doi.org/10.4230/LIPIcs.STACS.2025.44}, \href
  {https://doi.org/10.4230/LIPICS.STACS.2025.44}
  {\path{doi:10.4230/LIPICS.STACS.2025.44}}.

\bibitem{ImpagliazzoP01}
Russell Impagliazzo and Ramamohan Paturi.
\newblock On the complexity of k-sat.
\newblock {\em J. Comput. Syst. Sci.}, 62(2):367--375, 2001.
\newblock URL: \url{https://doi.org/10.1006/jcss.2000.1727}, \href
  {https://doi.org/10.1006/JCSS.2000.1727} {\path{doi:10.1006/JCSS.2000.1727}}.

\bibitem{IwataY15}
Yoichi Iwata and Yuichi Yoshida.
\newblock On the equivalence among problems of bounded width.
\newblock In Nikhil Bansal and Irene Finocchi, editors, {\em Algorithms - {ESA}
  2015 - 23rd Annual European Symposium, Patras, Greece, September 14-16, 2015,
  Proceedings}, volume 9294 of {\em Lecture Notes in Computer Science}, pages
  754--765. Springer, 2015.
\newblock \href {https://doi.org/10.1007/978-3-662-48350-3\_63}
  {\path{doi:10.1007/978-3-662-48350-3\_63}}.

\bibitem{JaberKZ25}
Michael Jaber, Vinayak~M. Kumar, and David Zuckerman.
\newblock Linear hashing is optimal, 2025.
\newblock URL: \url{https://arxiv.org/abs/2505.14061}, \href
  {https://arxiv.org/abs/2505.14061} {\path{arXiv:2505.14061}}.

\bibitem{KatsikarelisLP19}
Ioannis Katsikarelis, Michael Lampis, and Vangelis~Th. Paschos.
\newblock Structural parameters, tight bounds, and approximation for (k,
  r)-center.
\newblock {\em Discret. Appl. Math.}, 264:90--117, 2019.
\newblock URL: \url{https://doi.org/10.1016/j.dam.2018.11.002}, \href
  {https://doi.org/10.1016/J.DAM.2018.11.002}
  {\path{doi:10.1016/J.DAM.2018.11.002}}.

\bibitem{KatsikarelisLP22}
Ioannis Katsikarelis, Michael Lampis, and Vangelis~Th. Paschos.
\newblock Structurally parameterized d-scattered set.
\newblock {\em Discret. Appl. Math.}, 308:168--186, 2022.
\newblock URL: \url{https://doi.org/10.1016/j.dam.2020.03.052}, \href
  {https://doi.org/10.1016/J.DAM.2020.03.052}
  {\path{doi:10.1016/J.DAM.2020.03.052}}.

\bibitem{Knudsen19}
Mathias B{\ae}k~Tejs Knudsen.
\newblock Linear hashing is awesome.
\newblock {\em {SIAM} J. Comput.}, 48(2):736--741, 2019.
\newblock \href {https://doi.org/10.1137/17M1126801}
  {\path{doi:10.1137/17M1126801}}.

\bibitem{Lampis20}
Michael Lampis.
\newblock Finer tight bounds for coloring on clique-width.
\newblock {\em {SIAM} J. Discret. Math.}, 34(3):1538--1558, 2020.
\newblock \href {https://doi.org/10.1137/19M1280326}
  {\path{doi:10.1137/19M1280326}}.

\bibitem{abs-2407-09683}
Michael Lampis.
\newblock Circuits and backdoors: Five shades of the {SETH}.
\newblock {\em CoRR}, abs/2407.09683, 2024.
\newblock URL: \url{https://doi.org/10.48550/arXiv.2407.09683}, \href
  {https://arxiv.org/abs/2407.09683} {\path{arXiv:2407.09683}}, \href
  {https://doi.org/10.48550/ARXIV.2407.09683}
  {\path{doi:10.48550/ARXIV.2407.09683}}.

\bibitem{Lampis25}
Michael Lampis.
\newblock The primal pathwidth {SETH}.
\newblock In Yossi Azar and Debmalya Panigrahi, editors, {\em Proceedings of
  the 2025 Annual {ACM-SIAM} Symposium on Discrete Algorithms, {SODA} 2025, New
  Orleans, LA, USA, January 12-15, 2025}, pages 1494--1564. {SIAM}, 2025.
\newblock \href {https://doi.org/10.1137/1.9781611978322.47}
  {\path{doi:10.1137/1.9781611978322.47}}.

\bibitem{LampisV24}
Michael Lampis and Manolis Vasilakis.
\newblock Structural parameterizations for two bounded degree problems
  revisited.
\newblock {\em {ACM} Trans. Comput. Theory}, 16(3):17:1--17:51, 2024.
\newblock \href {https://doi.org/10.1145/3665156} {\path{doi:10.1145/3665156}}.

\bibitem{abs-2502-14161}
Michael Lampis and Manolis Vasilakis.
\newblock Structural parameterizations for induced and acyclic matching.
\newblock {\em CoRR}, abs/2502.14161, 2025.
\newblock URL: \url{https://doi.org/10.48550/arXiv.2502.14161}, \href
  {https://arxiv.org/abs/2502.14161} {\path{arXiv:2502.14161}}, \href
  {https://doi.org/10.48550/ARXIV.2502.14161}
  {\path{doi:10.48550/ARXIV.2502.14161}}.

\bibitem{Li21a}
Ray Li.
\newblock Settling {SETH} vs. approximate sparse directed unweighted diameter
  (up to {(NU)NSETH)}.
\newblock In Samir Khuller and Virginia~Vassilevska Williams, editors, {\em
  {STOC} '21: 53rd Annual {ACM} {SIGACT} Symposium on Theory of Computing,
  Virtual Event, Italy, June 21-25, 2021}, pages 1684--1696. {ACM}, 2021.
\newblock \href {https://doi.org/10.1145/3406325.3451045}
  {\path{doi:10.1145/3406325.3451045}}.

\bibitem{LokshtanovMS18}
Daniel Lokshtanov, D{\'{a}}niel Marx, and Saket Saurabh.
\newblock Known algorithms on graphs of bounded treewidth are probably optimal.
\newblock {\em {ACM} Trans. Algorithms}, 14(2):13:1--13:30, 2018.
\newblock \href {https://doi.org/10.1145/3170442} {\path{doi:10.1145/3170442}}.

\bibitem{MarxSS21}
D{\'{a}}niel Marx, Govind~S. Sankar, and Philipp Schepper.
\newblock Degrees and gaps: Tight complexity results of general factor problems
  parameterized by treewidth and cutwidth.
\newblock In Nikhil Bansal, Emanuela Merelli, and James Worrell, editors, {\em
  48th International Colloquium on Automata, Languages, and Programming,
  {ICALP} 2021, July 12-16, 2021, Glasgow, Scotland (Virtual Conference)},
  volume 198 of {\em LIPIcs}, pages 95:1--95:20. Schloss Dagstuhl -
  Leibniz-Zentrum f{\"{u}}r Informatik, 2021.
\newblock URL: \url{https://doi.org/10.4230/LIPIcs.ICALP.2021.95}, \href
  {https://doi.org/10.4230/LIPICS.ICALP.2021.95}
  {\path{doi:10.4230/LIPICS.ICALP.2021.95}}.

\bibitem{Patrascu10}
Mihai P{\u{a}}tra{\c{s}}cu.
\newblock Towards polynomial lower bounds for dynamic problems.
\newblock In Leonard~J. Schulman, editor, {\em Proceedings of the 42nd {ACM}
  Symposium on Theory of Computing, {STOC} 2010, Cambridge, Massachusetts, USA,
  5-8 June 2010}, pages 603--610. {ACM}, 2010.
\newblock \href {https://doi.org/10.1145/1806689.1806772}
  {\path{doi:10.1145/1806689.1806772}}.

\bibitem{Pratt24}
Kevin Pratt.
\newblock A stronger connection between the asymptotic rank conjecture and the
  set cover conjecture.
\newblock In Bojan Mohar, Igor Shinkar, and Ryan O'Donnell, editors, {\em
  Proceedings of the 56th Annual {ACM} Symposium on Theory of Computing, {STOC}
  2024, Vancouver, BC, Canada, June 24-28, 2024}, pages 871--874. {ACM}, 2024.
\newblock \href {https://doi.org/10.1145/3618260.3649620}
  {\path{doi:10.1145/3618260.3649620}}.

\bibitem{Trabeisi25}
Ohad Trabelsi.
\newblock {(Almost)} ruling out {SETH} lower bounds for all-pairs max-flow.
\newblock In Yossi Azar and Debmalya Panigrahi, editors, {\em Proceedings of
  the 2025 Annual {ACM-SIAM} Symposium on Discrete Algorithms, {SODA} 2025, New
  Orleans, LA, USA, January 12-15, 2025}, pages 2132--2156. {SIAM}, 2025.
\newblock \href {https://doi.org/10.1137/1.9781611978322.69}
  {\path{doi:10.1137/1.9781611978322.69}}.

\bibitem{williams2018some}
Virginia {Vassilevska Williams}.
\newblock On some fine-grained questions in algorithms and complexity.
\newblock In {\em Proceedings of the international congress of mathematicians:
  Rio de janeiro 2018}, pages 3447--3487. World Scientific, 2018.

\bibitem{Williams21}
Virginia {Vassilevska Williams}.
\newblock 3sum and related problems in fine-grained complexity (invited talk).
\newblock In Kevin Buchin and {\'{E}}ric~Colin de~Verdi{\`{e}}re, editors, {\em
  37th International Symposium on Computational Geometry, SoCG 2021, June 7-11,
  2021, Buffalo, NY, {USA} (Virtual Conference)}, volume 189 of {\em LIPIcs},
  pages 2:1--2:2. Schloss Dagstuhl - Leibniz-Zentrum f{\"{u}}r Informatik,
  2021.
\newblock URL: \url{https://doi.org/10.4230/LIPIcs.SoCG.2021.2}, \href
  {https://doi.org/10.4230/LIPICS.SOCG.2021.2}
  {\path{doi:10.4230/LIPICS.SOCG.2021.2}}.

\bibitem{abs-2307-13016}
Alek Westover.
\newblock Linear hashing: No shift, non-prime modulus, for real!
\newblock {\em CoRR}, abs/2307.13016, 2023.
\newblock URL: \url{https://doi.org/10.48550/arXiv.2307.13016}, \href
  {https://arxiv.org/abs/2307.13016} {\path{arXiv:2307.13016}}, \href
  {https://doi.org/10.48550/ARXIV.2307.13016}
  {\path{doi:10.48550/ARXIV.2307.13016}}.

\bibitem{Williams16}
Richard~Ryan Williams.
\newblock Strong {ETH} breaks with merlin and arthur: Short non-interactive
  proofs of batch evaluation.
\newblock In Ran Raz, editor, {\em 31st Conference on Computational Complexity,
  {CCC} 2016, May 29 to June 1, 2016, Tokyo, Japan}, volume~50 of {\em LIPIcs},
  pages 2:1--2:17. Schloss Dagstuhl - Leibniz-Zentrum f{\"{u}}r Informatik,
  2016.
\newblock URL: \url{https://doi.org/10.4230/LIPIcs.CCC.2016.2}, \href
  {https://doi.org/10.4230/LIPICS.CCC.2016.2}
  {\path{doi:10.4230/LIPICS.CCC.2016.2}}.

\bibitem{Williams05}
Ryan Williams.
\newblock A new algorithm for optimal 2-constraint satisfaction and its
  implications.
\newblock {\em Theor. Comput. Sci.}, 348(2-3):357--365, 2005.
\newblock URL: \url{https://doi.org/10.1016/j.tcs.2005.09.023}, \href
  {https://doi.org/10.1016/J.TCS.2005.09.023}
  {\path{doi:10.1016/J.TCS.2005.09.023}}.

\end{thebibliography}

\end{document}